\newif\ifmarkers
\newif\ifextended
\newif\ifreview
\def\bra#1{\mathinner{\langle{#1}|}}
\newcommand{\Bf}[1]{{\bf #1}}
\newcommand{\Rm}[1]{{\rm #1}}
\newcommand{\ul}{\underline}
\newcommand{\ol}{\overline}
\newcommand{\ex}[1]{\exists{#1}~.~}
\newcommand{\lam}[1]{\lambda{#1}~.~}
\newcommand{\lsem}{[\![}
\newcommand{\rsem}{]\!]}
\newcommand{\sem}[1]{\lsem #1\rsem}
\newcommand{\op}{\mathit{op}}
\newcommand{\dom}{\Rm{dom}}
\newcommand{\cod}{\Rm{cod}}
\newcommand{\Obj}[1]{\Bf{Obj}(#1)}
\newcommand{\Set}{\Bf{Set}}
\newcommand{\Pred}{\Bf{Pred}}
\newcommand{\CAT}{\Bf{CAT}}
\newcommand{\BRel}{\Bf{BRel}}
\newcommand{\arrow}{\rightarrow}
\newcommand{\Id}{\Rm{Id}}
\newcommand{\id}{\Rm{id}}
\newcommand{\CC}{\Bbb C}
\newcommand{\EE}{\Bbb E}
\newcommand{\NN}{\Bbb N}
\newcommand{\PP}{\Bbb P}
\newcommand{\RR}{\Bbb R}
\newcommand{\VV}{\Bbb V}
\newcommand{\darrow}{\mathbin{\dot\arrow}}
\newcommand{\dtimes}{\mathbin{\dot\times}}
\newcommand{\rue}{\ar@{=}[u]}
\newcommand{\rueh}[1]{\ar@{=}[u]^-{#1}}
\newcommand{\ruem}[1]{\ar@{=}[u]_-{#1}}
\newcommand{\ruen}[1]{\ar@{=}[u]|-{#1}}
\newcommand{\ruue}{\ar@{=}[uu]}
\newcommand{\ruueh}[1]{\ar@{=}[uu]^-{#1}}
\newcommand{\ruuem}[1]{\ar@{=}[uu]_-{#1}}
\newcommand{\ruuen}[1]{\ar@{=}[uu]|-{#1}}
\newcommand{\ruuue}{\ar@{=}[uuu]}
\newcommand{\ruuueh}[1]{\ar@{=}[uuu]^-{#1}}
\newcommand{\ruuuem}[1]{\ar@{=}[uuu]_-{#1}}
\newcommand{\ruuuen}[1]{\ar@{=}[uuu]|-{#1}}
\newcommand{\rd}{\ar[d]}
\newcommand{\rdh}[1]{\ar[d]^-{#1}}
\newcommand{\rdm}[1]{\ar[d]_-{#1}}
\newcommand{\rde}{\ar@{=}[d]}
\newcommand{\rdeh}[1]{\ar@{=}[d]^-{#1}}
\newcommand{\rdem}[1]{\ar@{=}[d]_-{#1}}
\newcommand{\rden}[1]{\ar@{=}[d]|-{#1}}
\newcommand{\rdde}{\ar@{=}[dd]}
\newcommand{\rddeh}[1]{\ar@{=}[dd]^-{#1}}
\newcommand{\rddem}[1]{\ar@{=}[dd]_-{#1}}
\newcommand{\rdden}[1]{\ar@{=}[dd]|-{#1}}
\newcommand{\rddde}{\ar@{=}[ddd]}
\newcommand{\rdddeh}[1]{\ar@{=}[ddd]^-{#1}}
\newcommand{\rdddem}[1]{\ar@{=}[ddd]_-{#1}}
\newcommand{\rddden}[1]{\ar@{=}[ddd]|-{#1}}
\newcommand{\rr}{\ar[r]}
\newcommand{\rrh}[1]{\ar[r]^-{#1}}
\newcommand{\rre}{\ar@{=}[r]}
\newcommand{\rreh}[1]{\ar@{=}[r]^-{#1}}
\newcommand{\rrem}[1]{\ar@{=}[r]_-{#1}}
\newcommand{\rren}[1]{\ar@{=}[r]|-{#1}}
\newcommand{\rrue}{\ar@{=}[ru]}
\newcommand{\rrueh}[1]{\ar@{=}[ru]^-{#1}}
\newcommand{\rruem}[1]{\ar@{=}[ru]_-{#1}}
\newcommand{\rruen}[1]{\ar@{=}[ru]|-{#1}}
\newcommand{\rruue}{\ar@{=}[ruu]}
\newcommand{\rruueh}[1]{\ar@{=}[ruu]^-{#1}}
\newcommand{\rruuem}[1]{\ar@{=}[ruu]_-{#1}}
\newcommand{\rruuen}[1]{\ar@{=}[ruu]|-{#1}}
\newcommand{\rruuue}{\ar@{=}[ruuu]}
\newcommand{\rruuueh}[1]{\ar@{=}[ruuu]^-{#1}}
\newcommand{\rruuuem}[1]{\ar@{=}[ruuu]_-{#1}}
\newcommand{\rruuuen}[1]{\ar@{=}[ruuu]|-{#1}}
\newcommand{\rrd}{\ar[rd]}
\newcommand{\rrde}{\ar@{=}[rd]}
\newcommand{\rrdeh}[1]{\ar@{=}[rd]^-{#1}}
\newcommand{\rrdem}[1]{\ar@{=}[rd]_-{#1}}
\newcommand{\rrden}[1]{\ar@{=}[rd]|-{#1}}
\newcommand{\rrdde}{\ar@{=}[rdd]}
\newcommand{\rrddeh}[1]{\ar@{=}[rdd]^-{#1}}
\newcommand{\rrddem}[1]{\ar@{=}[rdd]_-{#1}}
\newcommand{\rrdden}[1]{\ar@{=}[rdd]|-{#1}}
\newcommand{\rrddde}{\ar@{=}[rddd]}
\newcommand{\rrdddeh}[1]{\ar@{=}[rddd]^-{#1}}
\newcommand{\rrdddem}[1]{\ar@{=}[rddd]_-{#1}}
\newcommand{\rrddden}[1]{\ar@{=}[rddd]|-{#1}}
\newcommand{\rrr}{\ar[rr]}
\newcommand{\rrrh}[1]{\ar[rr]^-{#1}}
\newcommand{\rrre}{\ar@{=}[rr]}
\newcommand{\rrreh}[1]{\ar@{=}[rr]^-{#1}}
\newcommand{\rrrem}[1]{\ar@{=}[rr]_-{#1}}
\newcommand{\rrren}[1]{\ar@{=}[rr]|-{#1}}
\newcommand{\rrrue}{\ar@{=}[rru]}
\newcommand{\rrrueh}[1]{\ar@{=}[rru]^-{#1}}
\newcommand{\rrruem}[1]{\ar@{=}[rru]_-{#1}}
\newcommand{\rrruen}[1]{\ar@{=}[rru]|-{#1}}
\newcommand{\rrruue}{\ar@{=}[rruu]}
\newcommand{\rrruueh}[1]{\ar@{=}[rruu]^-{#1}}
\newcommand{\rrruuem}[1]{\ar@{=}[rruu]_-{#1}}
\newcommand{\rrruuen}[1]{\ar@{=}[rruu]|-{#1}}
\newcommand{\rrruuue}{\ar@{=}[rruuu]}
\newcommand{\rrruuueh}[1]{\ar@{=}[rruuu]^-{#1}}
\newcommand{\rrruuuem}[1]{\ar@{=}[rruuu]_-{#1}}
\newcommand{\rrruuuen}[1]{\ar@{=}[rruuu]|-{#1}}
\newcommand{\rrrde}{\ar@{=}[rrd]}
\newcommand{\rrrdeh}[1]{\ar@{=}[rrd]^-{#1}}
\newcommand{\rrrdem}[1]{\ar@{=}[rrd]_-{#1}}
\newcommand{\rrrden}[1]{\ar@{=}[rrd]|-{#1}}
\newcommand{\rrrdde}{\ar@{=}[rrdd]}
\newcommand{\rrrddeh}[1]{\ar@{=}[rrdd]^-{#1}}
\newcommand{\rrrddem}[1]{\ar@{=}[rrdd]_-{#1}}
\newcommand{\rrrdden}[1]{\ar@{=}[rrdd]|-{#1}}
\newcommand{\rrrddde}{\ar@{=}[rrddd]}
\newcommand{\rrrdddeh}[1]{\ar@{=}[rrddd]^-{#1}}
\newcommand{\rrrdddem}[1]{\ar@{=}[rrddd]_-{#1}}
\newcommand{\rrrddden}[1]{\ar@{=}[rrddd]|-{#1}}
\newcommand{\rrrre}{\ar@{=}[rrr]}
\newcommand{\rrrreh}[1]{\ar@{=}[rrr]^-{#1}}
\newcommand{\rrrrem}[1]{\ar@{=}[rrr]_-{#1}}
\newcommand{\rrrren}[1]{\ar@{=}[rrr]|-{#1}}
\newcommand{\rrrrue}{\ar@{=}[rrru]}
\newcommand{\rrrrueh}[1]{\ar@{=}[rrru]^-{#1}}
\newcommand{\rrrruem}[1]{\ar@{=}[rrru]_-{#1}}
\newcommand{\rrrruen}[1]{\ar@{=}[rrru]|-{#1}}
\newcommand{\rrrruue}{\ar@{=}[rrruu]}
\newcommand{\rrrruueh}[1]{\ar@{=}[rrruu]^-{#1}}
\newcommand{\rrrruuem}[1]{\ar@{=}[rrruu]_-{#1}}
\newcommand{\rrrruuen}[1]{\ar@{=}[rrruu]|-{#1}}
\newcommand{\rrrruuue}{\ar@{=}[rrruuu]}
\newcommand{\rrrruuueh}[1]{\ar@{=}[rrruuu]^-{#1}}
\newcommand{\rrrruuuem}[1]{\ar@{=}[rrruuu]_-{#1}}
\newcommand{\rrrruuuen}[1]{\ar@{=}[rrruuu]|-{#1}}
\newcommand{\rrrrde}{\ar@{=}[rrrd]}
\newcommand{\rrrrdeh}[1]{\ar@{=}[rrrd]^-{#1}}
\newcommand{\rrrrdem}[1]{\ar@{=}[rrrd]_-{#1}}
\newcommand{\rrrrden}[1]{\ar@{=}[rrrd]|-{#1}}
\newcommand{\rrrrdde}{\ar@{=}[rrrdd]}
\newcommand{\rrrrddeh}[1]{\ar@{=}[rrrdd]^-{#1}}
\newcommand{\rrrrddem}[1]{\ar@{=}[rrrdd]_-{#1}}
\newcommand{\rrrrdden}[1]{\ar@{=}[rrrdd]|-{#1}}
\newcommand{\rrrrddde}{\ar@{=}[rrrddd]}
\newcommand{\rrrrdddeh}[1]{\ar@{=}[rrrddd]^-{#1}}
\newcommand{\rrrrdddem}[1]{\ar@{=}[rrrddd]_-{#1}}
\newcommand{\rrrrddden}[1]{\ar@{=}[rrrddd]|-{#1}}
\newcommand{\rle}{\ar@{=}[l]}
\newcommand{\rleh}[1]{\ar@{=}[l]^-{#1}}
\newcommand{\rlem}[1]{\ar@{=}[l]_-{#1}}
\newcommand{\rlen}[1]{\ar@{=}[l]|-{#1}}
\newcommand{\rlue}{\ar@{=}[lu]}
\newcommand{\rlueh}[1]{\ar@{=}[lu]^-{#1}}
\newcommand{\rluem}[1]{\ar@{=}[lu]_-{#1}}
\newcommand{\rluen}[1]{\ar@{=}[lu]|-{#1}}
\newcommand{\rluue}{\ar@{=}[luu]}
\newcommand{\rluueh}[1]{\ar@{=}[luu]^-{#1}}
\newcommand{\rluuem}[1]{\ar@{=}[luu]_-{#1}}
\newcommand{\rluuen}[1]{\ar@{=}[luu]|-{#1}}
\newcommand{\rluuue}{\ar@{=}[luuu]}
\newcommand{\rluuueh}[1]{\ar@{=}[luuu]^-{#1}}
\newcommand{\rluuuem}[1]{\ar@{=}[luuu]_-{#1}}
\newcommand{\rluuuen}[1]{\ar@{=}[luuu]|-{#1}}
\newcommand{\rlde}{\ar@{=}[ld]}
\newcommand{\rldeh}[1]{\ar@{=}[ld]^-{#1}}
\newcommand{\rldem}[1]{\ar@{=}[ld]_-{#1}}
\newcommand{\rlden}[1]{\ar@{=}[ld]|-{#1}}
\newcommand{\rldde}{\ar@{=}[ldd]}
\newcommand{\rlddeh}[1]{\ar@{=}[ldd]^-{#1}}
\newcommand{\rlddem}[1]{\ar@{=}[ldd]_-{#1}}
\newcommand{\rldden}[1]{\ar@{=}[ldd]|-{#1}}
\newcommand{\rlddde}{\ar@{=}[lddd]}
\newcommand{\rldddeh}[1]{\ar@{=}[lddd]^-{#1}}
\newcommand{\rldddem}[1]{\ar@{=}[lddd]_-{#1}}
\newcommand{\rlddden}[1]{\ar@{=}[lddd]|-{#1}}
\newcommand{\rlle}{\ar@{=}[ll]}
\newcommand{\rlleh}[1]{\ar@{=}[ll]^-{#1}}
\newcommand{\rllem}[1]{\ar@{=}[ll]_-{#1}}
\newcommand{\rllen}[1]{\ar@{=}[ll]|-{#1}}
\newcommand{\rllue}{\ar@{=}[llu]}
\newcommand{\rllueh}[1]{\ar@{=}[llu]^-{#1}}
\newcommand{\rlluem}[1]{\ar@{=}[llu]_-{#1}}
\newcommand{\rlluen}[1]{\ar@{=}[llu]|-{#1}}
\newcommand{\rlluue}{\ar@{=}[lluu]}
\newcommand{\rlluueh}[1]{\ar@{=}[lluu]^-{#1}}
\newcommand{\rlluuem}[1]{\ar@{=}[lluu]_-{#1}}
\newcommand{\rlluuen}[1]{\ar@{=}[lluu]|-{#1}}
\newcommand{\rlluuue}{\ar@{=}[lluuu]}
\newcommand{\rlluuueh}[1]{\ar@{=}[lluuu]^-{#1}}
\newcommand{\rlluuuem}[1]{\ar@{=}[lluuu]_-{#1}}
\newcommand{\rlluuuen}[1]{\ar@{=}[lluuu]|-{#1}}
\newcommand{\rllde}{\ar@{=}[lld]}
\newcommand{\rlldeh}[1]{\ar@{=}[lld]^-{#1}}
\newcommand{\rlldem}[1]{\ar@{=}[lld]_-{#1}}
\newcommand{\rllden}[1]{\ar@{=}[lld]|-{#1}}
\newcommand{\rlldde}{\ar@{=}[lldd]}
\newcommand{\rllddeh}[1]{\ar@{=}[lldd]^-{#1}}
\newcommand{\rllddem}[1]{\ar@{=}[lldd]_-{#1}}
\newcommand{\rlldden}[1]{\ar@{=}[lldd]|-{#1}}
\newcommand{\rllddde}{\ar@{=}[llddd]}
\newcommand{\rlldddeh}[1]{\ar@{=}[llddd]^-{#1}}
\newcommand{\rlldddem}[1]{\ar@{=}[llddd]_-{#1}}
\newcommand{\rllddden}[1]{\ar@{=}[llddd]|-{#1}}
\newcommand{\rllle}{\ar@{=}[lll]}
\newcommand{\rllleh}[1]{\ar@{=}[lll]^-{#1}}
\newcommand{\rlllem}[1]{\ar@{=}[lll]_-{#1}}
\newcommand{\rlllen}[1]{\ar@{=}[lll]|-{#1}}
\newcommand{\rlllue}{\ar@{=}[lllu]}
\newcommand{\rlllueh}[1]{\ar@{=}[lllu]^-{#1}}
\newcommand{\rllluem}[1]{\ar@{=}[lllu]_-{#1}}
\newcommand{\rllluen}[1]{\ar@{=}[lllu]|-{#1}}
\newcommand{\rllluue}{\ar@{=}[llluu]}
\newcommand{\rllluueh}[1]{\ar@{=}[llluu]^-{#1}}
\newcommand{\rllluuem}[1]{\ar@{=}[llluu]_-{#1}}
\newcommand{\rllluuen}[1]{\ar@{=}[llluu]|-{#1}}
\newcommand{\rllluuue}{\ar@{=}[llluuu]}
\newcommand{\rllluuueh}[1]{\ar@{=}[llluuu]^-{#1}}
\newcommand{\rllluuuem}[1]{\ar@{=}[llluuu]_-{#1}}
\newcommand{\rllluuuen}[1]{\ar@{=}[llluuu]|-{#1}}
\newcommand{\rlllde}{\ar@{=}[llld]}
\newcommand{\rllldeh}[1]{\ar@{=}[llld]^-{#1}}
\newcommand{\rllldem}[1]{\ar@{=}[llld]_-{#1}}
\newcommand{\rlllden}[1]{\ar@{=}[llld]|-{#1}}
\newcommand{\rllldde}{\ar@{=}[llldd]}
\newcommand{\rlllddeh}[1]{\ar@{=}[llldd]^-{#1}}
\newcommand{\rlllddem}[1]{\ar@{=}[llldd]_-{#1}}
\newcommand{\rllldden}[1]{\ar@{=}[llldd]|-{#1}}
\newcommand{\rlllddde}{\ar@{=}[lllddd]}
\newcommand{\rllldddeh}[1]{\ar@{=}[lllddd]^-{#1}}
\newcommand{\rllldddem}[1]{\ar@{=}[lllddd]_-{#1}}
\newcommand{\rlllddden}[1]{\ar@{=}[lllddd]|-{#1}}
\newenvironment{choice}{\left\{\begin{array}{ll}}{\end{array}\right.}
\newcommand{\adjunction}[3]{
  \ar@<.4pc>[#1]^-{#2} \ar@{}[#1]|-*=0[@]{\bot} \ar@<-.4pc>@{<-}[#1]_-{#3}
}
\newcommand{\dnote}[1]{}
\newcommand{\snote}[1]{}
\newcommand{\mnote}[1]{}
\newcommand{\tnote}[1]{}
\newcommand{\draft}[2][-]{#2}
\newcommand{\dnote}[1]{\textcolor{purple}{Dominic: #1}}
\newcommand{\snote}[1]{\textcolor{blue}{Shin-ya: #1}}
\newcommand{\mnote}[1]{\textcolor{red}{Marco: #1}}
\newcommand{\tnote}[1]{\textcolor{teal}{Tetsuya: #1}}
\newcommand{\draft}[2][-]{\marginpar{\textcolor{blue}{(#1)}}\textcolor{blue}{#2}}
\definecolor{RED}{rgb}{1,0,0}
\newcommand{\conf}[1]{} 
\let\olddefinition\definition
\renewcommand{\definition}{\olddefinition\normalfont}
\spnewtheorem{defn}[definition]{Definition}{\bfseries}{}
\newcommand{\binop}[2]{#1 \; \mathtt{op} \; #2}
\newcommand{\langNameAlone}{\red{loop}}
\newcommand{\Lang}{\red{the \langNameAlone{} language}}
\newcommand{\LangTitleTitle}{\red{The Loop Language}}
\newcommand{\Var}{\Bf{Var}}
\newcommand{\CExp}{\Bf{CExp}}
\newcommand{\PExp}{\Bf{PExp}}
\newcommand{\Axiom}{\Bf{Axiom}}
\newcommand{\sbool}{\mathtt{bool}}
\newcommand{\snat}{\mathtt{nat}}
\newcommand{\scell}{\mathtt{cell}}
\newcommand{\exprc}{e} 
\newcommand{\otrue}{\mathtt{tt}}
\newcommand{\ofalse}{\mathtt{ff}}
\newcommand{\onat}[1]{\lceil #1\rceil}
\newcommand{\Ctx}[1]{\Bf{Ctx}_{#1}}
\newcommand{\ExpSet}[1]{\Bf{Exp}_{#1}}
\newcommand{\Exp}[3]{\ExpSet{#1}(#2,#3)}
\newcommand{\Eq}{\mathrm{Eq}}
\newcommand{\mctx}{\Gamma_\mem}
\newcommand{\lsig}{\Sigma_l}
\newcommand{\synSkip}{\texttt{skip}}
\newcommand{\synSeq}[2]{#1 \mathbin{;} #2}
\newcommand{\synExp}[2]{#1 \mathbin{:=}#2}
\newcommand{\synComm}[1]{\mathop{\texttt{do}}#1}
\newcommand{\synProc}[2]{\mathop{\texttt{do}}{#1\leftarrow #2}}
\newcommand{\synIf}[3]{\mathop{\texttt{if}}#1\mathbin{\texttt{then}} #2 \mathbin{\texttt{else}} #3}
\newcommand{\synLoop}[2]{\mathop{\texttt{loop}} #1 \mathbin{\texttt{do}} #2}
\newcommand{\osample}[2]{\mathtt{sample}_{#1,#2}}
\newcommand{\subst}[2]{[#1/#2]}
\newcommand{\tick}{\mathtt{tick}}
\newcommand{\cfTrue}{\mathtt{cfTT}}
\newcommand{\cfFalse}{\mathtt{cfFF}}
\newcommand{\Psecure}[1]{\mathtt{secr}_{#1}}
\newcommand{\SecLV}{\mathtt{SecLV}}
\newcommand{\VarLV}{\mathsf{VarLV}}
\newcommand{\interp}[1]{\sem{#1}}
\newcommand{\Fml}[2]{\Bf{Fml}_{#1}(#2)}
\newcommand{\bool}{\mathsf{Bool}}
\newcommand{\nat}{\mathsf{Nat}}
\newcommand{\push}[1]{\mathsf{ext}{#1}}
\newcommand{\mtrue}{\mathsf{tt}}
\newcommand{\mfalse}{\mathsf{ff}}
\newcommand{\mnat}[1]{\lfloor#1\rfloor}
\newcommand{\upd}[1]{\mathsf{upd}_{#1}} 
\newcommand{\sub}[2]{\mathsf{sub}(#1,#2)} 
\newcommand{\ftrue}{\mathsf{tm}}
\newcommand{\ffalse}{\mathsf{fm}}
\newcommand{\fnat}[1]{[#1]}
\newcommand{\mem}{\mathsf{M}}
\newcommand{\tripleV}[4]{\vdash_{#1} \{#2\} \mathbin{#3} \{#4\}}
\newcommand{\transpose}[1]{#1^{\mathsf{T}}}
\newcommand{\garrow}[1]{\mathbin{\arrow_{#1}}}
\newcommand{\inverse}[1]{{#1}^{\hspace{-0.15em} - \hspace{-0.1em}1}\hspace{-0.1em}}
\newcommand{\dto}{\mathbin{\dot\arrow}}
\newcommand{\ERel}{\Bf{ERel}}
\newcommand{\Meas}{\Bf{Meas}}
\newcommand{\teq}{\mathbin{\triangleq}}
\newcommand{\lev}{\mathbin{\nearrow}}
\newcommand{\pbcorner}[1][dr]{\save*!/#1+1.2pc/#1:(1,-1)@^{|-}\restore}
\newcommand{\predfib}[1]{{p^{#1}}}
\newcommand{\brelfib}[1]{{r^{#1}}}
\newcommand{\erelfib}[1]{{e^{#1}}}
\newcommand{\natrep}{\mathbb{N}} 
\newcommand{\red}[1]{\textcolor{red}{#1}}
\newcommand{\red}[1]{#1}
\newcommand{\ghlstr}{\red{GHL structure}}
\newcommand{\ghlstrs}{\red{GHL structures}}
\newcommand{\Ghlstr}{\red{GHL Structure}}
\newcommand{\pmonoid}{\red{pomonoid}}
\newcommand{\blank}{-}
   \def\@citecolor{blue}%
   \def\@urlcolor{blue}%
   \def\@linkcolor{blue}%
\def\orcidID#1{\smash{\href{http://orcid.org/#1}{\protect\raisebox{-1.25pt}{\protect\includegraphics{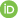}}}}}
\begin{document}
\title{Graded Hoare Logic and its Categorical Semantics}
\titlerunning{Graded Hoare Logic and its Categorical Semantics}
\authorrunning{Marco Gaboardi, Shin-ya Katsumata, Dominic Orchard, Tetsuya Sato}

\author{
Marco Gaboardi\inst{1}\textsuperscript{(\Letter)}
\and
Shin{-}ya Katsumata\inst{2}\orcidID{0000-0001-7529-5489}
\and
Dominic Orchard\inst{3}\orcidID{0000-0002-7058-7842}
\and
Tetsuya Sato\inst{4}\orcidID{0000-0001-9895-9209} 
}
\institute{
Boston University, Boston, USA
\email{gaboardi@bu.edu}
\and
National Institute of Informatics, Tokyo, Japan
\email{s-katsumata@nii.ac.jp}
\and
University of Kent, Canterbury, United Kingdom
\email{d.a.orchard@kent.ac.uk}
\and
Tokyo Institute of Technology, Tokyo, Japan
\email{tsato@c.titech.ac.jp}
}

\maketitle

\begin{abstract}
  Deductive verification techniques based on program logics (i.e.,
  the family of Floyd-Hoare logics) are a powerful approach for
  program reasoning. Recently, there has been a trend of increasing
  the expressive power of such logics by augmenting their rules with
  additional information to reason about program
  side-effects. For example, general program logics have been
  augmented with cost analyses, logics for probabilistic computations
  have been augmented with estimate measures, and logics for
  differential privacy with indistinguishability bounds. In this work, we unify
  these various approaches via the paradigm of
  \emph{grading}, adapted from the world of functional calculi and
  semantics. We propose \emph{Graded Hoare Logic} (GHL), a
  parameterisable framework for augmenting program logics with a
  preordered monoidal analysis. We develop a semantic framework for
  modelling GHL such that grading, logical assertions (pre- and
  post-conditions) and the underlying effectful semantics of an
  imperative language can be integrated together. Central to our
  framework is the notion of a \emph{graded category} which we extend
  here, introducing \emph{graded Freyd categories} which provide a
  semantics that can interpret many examples of augmented program
  logics from the literature. We leverage coherent fibrations to model
  the base assertion language, and thus the overall setting is also
  fibrational.
\end{abstract}


\ifextended
\begin{toappendix}
\section{Further examples}
\label{sec:further-examples}

\begin{example}[Dataflow-aware Bounded Reading] 
  \label{exm:dataflow}
  Via our grading, we can implement classical-style dataflow analyses.
  For simplicity, we consider a counting analysis
  akin to the usage analysis in Bounded Linear
  Logic~\cite{girard1992bounded}, where instead of counting
  the number of times a variable is used, we count the number of
  times that variables are read from, in a dataflow-aware fashion.

  Let $N$ be the number of variables in $\mctx$.  An expression
  analysis is a function
  $\mathsf{count}_s : \Exp\Sigma\mctx s \rightarrow \mathbb{N}^{N}$
  ($s\in S$) which computes an $N$-vector whose elements correspond to
  the number of times that each program variable is read by an
  expression:
  \begin{align}
    \label{eq:countvar}
    \countVar{v_i} = \bra{i} \quad
    \countVar{o(e_1,\cdots,e_n)} = \countVar{e_1} +\cdots+ \countVar{e_n}
  \end{align}
  where $v_i$ is the $i^{th}$ variable in $\mctx$ and $\bra{i}$
  computes a single-entry vector where
   $\bra{i}_j = 1$ if $i=j$ otherwise $\bra{i}_j = 0$.
   For example, in a program with three variables $x$, $y$, and
   $z$ (taken in that order for the analysis), then
   $\countVar{x + y + y} = \langle{1, 2, 0}\rangle$.

   We define a dataflow-aware analysis by grading GHL
   with a \pmonoid{} $\mathcal{M}$ of square matrices
   $N \times N$ with the all-zeros matrix $\bar{0}_{N \times N}$
   as the unit, preordering $\leq$ pointwise on $\mathbb{N}$,
   and multiplication:
  \begin{equation*}
    A \fatsemi B \teq (B \ast A) + A + B.
  \end{equation*}
  To integrate the analysis to GHL, we analyse assignments via
  procedures such that $\synProc{v}{\#(e)}$ denotes
  the assignment of an expression $e$ of sort $\mctx(v)$ to variable $v$ incurring
  an analysis in the GHL:
  \begin{equation*}
    \PExp_s=\{\#(e)~|~e:s\},
    \quad
    \#(e) \in
    \availproc s(\psi, \transpose{\bra{i}} \ast \countVar{e}, r = e)
  \end{equation*}
  where the analysis of an assignment to $v_i$ of expression $e$
  is given by the grade $\transpose{\bra{i}}
  \ast \countVar{e}$ which multiplies the transpose of
  the single-entry basis vector for $i$ with the analysis
  $\countVar{e}$, yielding an $N \times N$ data flow matrix.
  For example, take the program $\synProc{x}{\#(y + 2)};
  \synProc{z}{\#(x + y)}$. The first assignment has grade:
  \setlength\arraycolsep{3pt}
  \begin{align*}
    \transpose{\bra{x}} \ast \countVar{y+2} =
    \begin{bmatrix} 1 \\ 0 \\ 0 \end{bmatrix}
    \ast
    \begin{bmatrix} 0 & 1 & 0 \end{bmatrix} =
                            \begin{bmatrix}
                              0 & 1 & 0 \\
                              0 & 0 & 0 \\
                              0 & 0 & 0
                            \end{bmatrix}
\end{align*}
The resulting matrix can read as follows: the columns correspond to the
source of the dataflow and the rows the target, thus
in this case we see that $y$ (second variable) is used once to form
$x$ (first variable).
The second assignment has the analysis:
\begin{align*}
  \transpose{\bra{z}} \ast \countVar{x+y} =
  \begin{bmatrix} 0 \\ 0 \\ 1 \end{bmatrix}
  \ast
  \begin{bmatrix} 1 & 0 & 1 \end{bmatrix} =
    \begin{bmatrix}
      0 & 0 & 0 \\
      0 & 0 & 0 \\
      1 & 1 & 0
    \end{bmatrix}
\end{align*}
Taken together using $\fatsemi$ from the sequential composition, the
full program then has the analysis:
\begin{align*}
 & \begin{bmatrix}
      0 & 1 & 0 \\
      0 & 0 & 0 \\
      0 & 0 & 0
    \end{bmatrix}
\fatsemi
    \begin{bmatrix}
      0 & 0 & 0 \\
      0 & 0 & 0 \\
      1 & 1 & 0
    \end{bmatrix}
= \;
  \begin{bmatrix}
      0 & 0 & 0 \\
      0 & 0 & 0 \\
      1 & 1 & 0
    \end{bmatrix}
\!\!\ast\!\!
    \begin{bmatrix}
      0 & 1 & 0 \\
      0 & 0 & 0 \\
      0 & 0 & 0
    \end{bmatrix}
+
\begin{bmatrix}
      0 & 1 & 0 \\
      0 & 0 & 0 \\
      0 & 0 & 0
    \end{bmatrix}
+
    \begin{bmatrix}
      0 & 0 & 0 \\
      0 & 0 & 0 \\
      1 & 1 & 0
    \end{bmatrix}
\; =
\begin{bmatrix}
      0 & 1 & 0 \\
      0 & 0 & 0 \\
      1 & 2 & 0
    \end{bmatrix}
\end{align*}
Thus, $y$ flows once to $x$ and twice to $z$ and $x$ flows once to
$z$. The analysis of each assignment gives an adjacency
matrix for the dataflow graph of the statement and $\fatsemi$ computes the two-hop paths
between two dataflow graphs $(B \ast A)$ plus
the original adjacency/flows $A$ and $B$ for each statement.


\paragraph{Semantic model in our framework}

  Let $\Var$ be the set of variables in $\mctx$ and $M$ be the set of
  functions $\Var\times\Var\arrow\mathbb N$, regarded as a square
  matrix over $\mathbb N$.  We define a binary operation $\fatsemi$ on
  $M$ by
  $(A \fatsemi B)(x,y) = A(x,y) + B(x,y) + \sum_{z \in \Var} A(x,z)
  \cdot B(z,y)$.  The operation $\fatsemi$ is associative, and
  everywhere-0 matrix $\emptyset$ is the unit of this operation.  The
  componentwise numerical order between matrices makes
  $(M,\le,\emptyset,\fatsemi)$ a partially ordered monoid.

  We now invoke Theorem \ref{theorem:logical_structure:graded_lifting} by
  letting $p$ be the fibration $\predfib{\Set}:\Pred\to\Set$, $T$ be
  the $M$-valued writer monad $W_M X = X\times M$ and $\dot T$ be the
  strong $M$-graded lifting $\dot W_M$ given by
  $\dot W_M(A) (X,P) \teq ({W_M}X, \{ (x,B) \in {W_M} X ~|~x \in P, B \le
      A\}).
   $
    %
  Using the semantics of GHL with this \ghlstr{}, we develop a
  program logic for verification of dataflow-aware reuse bounds.

  \newcommand{\Passign}[1]{\mathsf{asgn}({#1})}

  Expressions with reuse bounds are given as procedures.  For each
  expression $e$, we define a function $\#e \colon \Var \to \NN$ that
  inductively counts variable usages.  We introduce a procedure name
  $\Passign e\in\PExp^s$ for each expression $e$ of sort $s$. Its
  interpretation $\sem{\Passign e} \colon \mem \to W_M(\mem)$ is
  $\sem{\Passign e} = (\sem{e}, A_{e})$ where
  $A_e(y,z) \teq \#e(y) \cdot \mathbf{d}_{x}(z) $.  We then define
  $\availproc s$ by
  \begin{align*}
    \availproc s(\phi, A, \psi)
    &=
      \{ \Passign{e} \mid
      e:s,
      \sem{\Passign{e}} \in \Pred_{\dot W_M}(\sem \phi,\sem \psi)(A_e),
      A_{e} \le A \}.
  \end{align*}
\end{example}

\end{toappendix}
\fi

\section{Introduction}
\label{sec:introduction}

The paradigm of \emph{grading} is
an emerging approach for
augmenting language semantics and type systems with fine-grained
information~\cite{DBLP:journals/pacmpl/OrchardLE19}. For example,
a \emph{graded monad} provides a mechanism for embedding
side-effects into a pure language, exactly as in the approach of
monads, but where the types are augmented (``graded'') with
information about what effects may occur, akin to a type-and-effect
system~\cite{DBLP:conf/popl/Katsumata14,DBLP:journals/corr/OrchardPM14}. As another example,
 \emph{graded comonadic} type operators in linear type systems can capture
non-linear dataflow and properties of data use~\cite{DBLP:conf/esop/BrunelGMZ14,DBLP:conf/esop/GhicaS14,DBLP:conf/icfp/PetricekOM14}. In
general, graded types augment a type system with some algebraic
structure which serves to give a parameterisable fine-grained program
analysis capturing the underlying structure of a type theory or semantics.
Much of the work in graded types has arisen in conjunction with
categorical semantics, in which graded modal type operators are
modelled via graded
monads~\cite{fujii2016towards,DBLP:conf/birthday/Gibbons16,DBLP:conf/fossacs/Katsumata18,DBLP:conf/birthday/MycroftOP16,milius2015generic},
graded comonads (often with additional graded monoidal
structure)~\cite{DBLP:conf/esop/BrunelGMZ14,DBLP:conf/esop/GhicaS14,DBLP:conf/fossacs/Katsumata18,DBLP:conf/icalp/PetricekOM13,DBLP:conf/icfp/PetricekOM14},
graded `joinads'~\cite{DBLP:conf/birthday/MycroftOP16}, graded
distributive laws between graded
(co)monads~\cite{DBLP:conf/icfp/GaboardiKOBU16},
and graded Lawvere theories~\cite{kura2020graded}.

So far grading has mainly been employed to reason about functional
languages and calculi, thus the structure of
the $\lambda$-calculus has dictated the structure of categorical
models (although some recent work connects graded
monads with classical dataflow analyses on
CFGs~\cite{ivakovi_et_al:LIPIcs:2020:12337}). We
investigate here the paradigm of grading instead applied to \emph{imperative}
languages.
As it happens, there is already a healthy thread of work in the
literature augmenting program logics (in the family of Floyd-Hoare
logics) with analyses that resemble notions of grading seen more
recently in the functional world. The general approach is to extend
the power of deductive verification by augmenting program logic rules
with an analysis of side effects, tracked by composing rules. For example,
work in the late 1980s and early 1990s augmented program logics with
an analysis of computation time, accumulating a cost
measure~\cite{nielson1987hoare,nielson1992semantics}, with more recent
fine-grained resource analysis based on multivariate analysis
associated to program
variables~\cite{DBLP:conf/pldi/Carbonneaux0S15}. As another example,
the Union Bound Logic of Barthe et al.~\cite{DBLP:conf/icalp/BartheGGHS16} defines
a Hoare-logic-style system for reasoning about probabilistic
computations with judgments $\vdash_\beta c : \phi \Rightarrow \psi$
for a program $c$ annotated by the maximum probability $\beta$
(the union bound) that $\psi$ does not hold. The inference rules of Union Bound
Logic track and compute the union bound alongside the standard
rules of Floyd-Hoare logic.  As a last example, Approximate Relational Hoare
Logic~\cite{2016arXiv160105047B,Barthe:2012:PRR:2103656.2103670,olmedo2014approximate,DBLP:journals/entcs/Sato16}
augments a program logic with measures of the $\epsilon$-$\delta$
bounds for reasoning about differential privacy.

In this work, we show how these disparate approaches can be unified
by adapting the notion of grading to an imperative program-logic
setting, for which we propose \emph{Graded Hoare Logic} (GHL): a
parameterisable program logic and reasoning framework graded by a
preordered monoidal analysis. Our core contribution is GHL's
underlying semantic framework which integrates grading, logical
assertions (pre- and post-conditions) and the effectful semantics of
an imperative language. This framework allows us to model, in a uniform way,
the different augmented program logics discussed above.

Graded models of functional calculi tend to adopt either a graded
monadic or graded comonadic model, depending on the direction of information flow
in the analysis. We use the opportunity of an imperative
setting (where the $\lambda$-calculus' asymmetrical
`many-inputs-to-one-output' model is avoided) to consider a more
flexible semantic basis of \emph{graded categories}. Graded categories
generalise graded (co)monadic approaches, providing a notion of graded
denotation without imposing on the placement (or `polarity') of grading.



\paragraph{Outline}
Section~\ref{sec:overview} begins with an overview of the approach, focusing on the
example of Union Bound Logic and highlighting the main components
of our semantic framework. The next three sections
then provide the central contributions:

\begin{itemize}
\item Section~\ref{sec:main} defines GHL and its associated
assertion logic which provides a flexible, parameterisable program
logic for integrating different notions of side-effect reasoning,
parameterised by a preordered monoidal analysis.
We instantiate the program logic to various examples.

\item Section~\ref{sec:graded-category} explores graded categories, an idea
that has not been explored much in the literature, and for which
there exists various related but not-quite-overlapping definitions. We
show that graded categories can abstract graded monadic and
graded comonadic semantics. We then extend graded categories
to Freyd categories (generally used as a more flexible model of
effects than monads), introducing the novel
structure of \emph{graded Freyd categories}.

\item Section~\ref{sec:model} develops the semantic framework for GHL,
 based on graded Freyd categories in a fibrational setting (where
\emph{coherent fibrations}~\cite{jacobscltt} model
the assertion logic) integrated with the graded Freyd layer.
We instantiate the semantic model to capture the examples presented in
Section~\ref{sec:main} and others drawn from the literature mentioned above.
\end{itemize}
%

\noindent
An extended version of this paper provides appendices which
include further examples and proof details~\cite{DBLP:journals/corr/abs-2007-11235}.

\section{Overview of GHL and Prospectus of its Model}
\label{sec:overview}

As discussed in the introduction, several works
explore Hoare logics combined with some form of
implicit or explicit grading for program analysis.
Our aim is to study these in a uniform way.
We informally introduce of our
approach here.

We start with an example which can be derived in Union Bound Logic~\cite{DBLP:conf/icalp/BartheGGHS16}:
\[
  \tripleV{0.05}{\top}{ \synProc{v_1}{\texttt{Gauss}(0,1);\synProc{v_2}{\texttt{Gauss}(0,1);\synExp v {\mathtt{max}(v_1,v_2)}}}}{v\leq 2}
\]
This judgment has several important components. First, we have
primitives for \emph{procedures with side-effects} such as
$\synProc{v_1}{\texttt{Gauss}(0,1)}$.
This procedure samples a random value from the standard normal
distribution with mean $0$ and variance $1$ and stores the result in
the variable $v_1$.
This kind of procedure with side effects differs from a regular
assignment such as $\synExp v {\mathtt{max}(v_1,v_2)},$ which is instead
considered to be pure (wrt. probabilities) in our approach.

The judgment has grade `$0.05$' which expresses a bound on the
probability that the postcondition is false, under the assumption of
the precondition, after executing the program; we can think of it as
the probability of failing to guarantee the postcondition.
In our example (call it program $P$), since the precondition is true, this can be expressed as:
  $
   \Pr_{\llbracket P\rrbracket(m)} [ v> 2]\leq 0.05
   $
   where $\llbracket P\rrbracket(m)$  is the probability distribution
   generated in executing the program. The grade of $P$ in this logic is derived using three components. First, sequential composition:
   $$
   \dfrac{\tripleV{\beta}{\psi}{P_1}{\psi_1}
      \quad \tripleV{\beta'}{\psi_1}{P_2}{\phi}}
      {\tripleV{\beta + \beta'}{\psi}{P_1; P_2}{\phi}}
      \quad
$$
which sums the failure probabilities. Second, an axiom for Gaussian distribution:
\[
  \tripleV{0.025}{\top}{ \synProc{v}{\texttt{Gauss}(0,1)}}{v\leq 2}
\]
with a basic constant $0.025$ which comes from the property of the Gaussian distribution we are considering. Third, by the following judgment which is derivable by the assignment and the consequence rules, which are the ones from Hoare Logic with a trivial grading $0$ which is the unit of addition:
\[
  \tripleV{0}{v_1 \leq 2 \lor v_2 \leq 2}{\synExp v {\mathtt{max}(v_1,v_2)}}{v\leq 2}
\]
Judgments for more complex examples can be derived using the rules for
conditional and loops. These rules also consider grading, and the
grading can depend on properties of the program. For example the rule
for conditionals is:
\begin{align*}
  \frac{
    \tripleV \beta{\psi\wedge e_b=\otrue}{P_1}\phi
    \quad
    \tripleV \beta{\psi\wedge e_b=\ofalse}{P_2}\phi
  }{
    \tripleV \beta\psi{\synIf{e_b}{P_1}{P_2}}\phi
  }
\end{align*}
This
allows one to reason also about the grading in a conditional way,
through the two assumptions $\psi\wedge e_b=\otrue$ and $\psi\wedge
e_b=\ofalse$.\marginpar{\dnote{I don't think this is clear}} We give more examples later.

Other logics share a similar structure as that described above for the
Union Bound logic, for example the relational logic
apRHL~\cite{2016arXiv160105047B}, and its variants
\cite{DBLP:journals/entcs/Sato16,DBLP:conf/lics/SatoBGHK19}, for
reasoning about differential privacy. Others again use a similar
structure implicitly, for example the Hoare Logic to reason about
asymptotic execution cost by Nielson~\cite{nielson1987hoare}, Quantitative
Hoare Logic~\cite{DBLP:conf/pldi/Carbonneaux0S15}, or the relational
logic for reasoning about program counter security presented
by Barthe~\cite{barthe20}.

To study the semantics of these logics in a uniform way, we
first abstract the logic itself. We design a program logic, which we
call Graded Hoare Logic (GHL), containing all the components
discussed above. In particular, the language is a standard imperative
language with conditional and loops. Since our main focus is studying
the semantics of grading, for simplicity we avoid using a
`while' loop, using instead a bounded `loop' operation
($\synLoop{e}{P}$). This allow us to focus on the grading structures for total functions, leaving the study of the interaction between grading and partiality to future work. The language
is parametric in the operations that are supported in
expressions---common in several treatments of Hoare Logic---and in a
set of procedures and commands with side effects, which are the main
focus of our work.  GHL is built over this language and an
\emph{assertion logic} which is parametric in the basic predicates
that can be used to reason about programs. GHL is also parametric in a
preordered monoid of grades, and in the axioms associated
with basic procedures and commands with side effects. This
generality is needed in order to capture the different logics we
mentioned before.

GHL gives us a unified syntax, but our real focus is the
semantics.  To be as general as possible we turn to the language
of category theory. We give a categorical framework which can capture
different computational models and side effects, with denotations
that are refined by predicates and grades describing program behaviours.
Our framework relates different categories (modelling different
aspects of GHL) as summarized by the following informal diagram
\eqref{eq:square}.
\begin{equation}
  \label{eq:square}
  \xymatrix@C=1.8em@R=1.8em{
    \PP \rdm{p} \rrrh{\dot I} & & \EE \rdh q \\
    \VV \rrrh{I} & & \CC
  }
\end{equation}
This diagram should not be understood as a
commutative diagram in $\CAT$ as $\EE$ is a graded category and hence
not an object of $\CAT$.

The category  $\VV$  models values
and pure computations, the category $\CC$ models impure computations, $\PP$ is a category of predicates, and
$\EE$ is a \emph{graded category} whose hom-sets are indexed by
\emph{grades}---elements of a preordered monoid.
The presentation of graded categories is new here, but has some relation
to other structures of the same name (discussed in Section~\ref{sec:graded-category}).

This diagram echos the principle of {\em refinement as functors}
proposed by Melli\`es and Zeilberger \cite{DBLP:conf/popl/MelliesZ15}.
The lower part of the diagram offers an interpretation of the language,
while the upper part offers a logical refinement of programs with
grading.  However, our focus is to introduce a new {\em graded refinement} view.
The ideas we use to achieve this are to interpret the base imperative language using a
{\em Freyd category} $I:\VV\arrow\CC$ (traditionally
used to model effects)
 with countable coproducts, to interpret the
assertion logic with a {\em coherent fibration} $p:\PP\arrow\VV$, and
to interpret GHL as a {\em graded Freyd category} $\dot I:\PP\arrow\EE$ with
homogeneous coproducts. In addition, the graded category $\EE$ has a
functor\footnote{More precisely,
  this is not quite a functor because $\EE$ is a graded category; see
  Definition \ref{def:logstr} for the precise meaning.} $q$ into $\CC$
which erases assertions and grades and extracts the denotation of
effectful programs, in the spirit of refinements.
The benefit of using a Freyd category as a building block
is that they are more flexible than other structures (e.g., monads) for
constructing models of computational
effects~\cite{10.1007/BFb0014560,STATON2014197}.  For instance, in the
category $\Meas$ of measurable spaces and measurable functions, we
cannot define state monads since there are no exponential objects.
However, we can still have a model of first-order effectful
computations using Freyd categories~\cite{power2006generic}.

Graded Freyd categories are a new categorical structure that we
designed for interpreting GHL judgments
(Section~\ref{sec:graded-freyd-categories}). The major difference from an
ordinary Freyd category is that the `target' category
is now a {\em graded category} ($\EE$ in the diagram \eqref{eq:square}). The additional structure
provides what we need in order to interpret judgments including grading.

To show the generality of this structure, we present several
approaches to instantiating the categorical framework of GHL's
semantics, showing constructions via graded monads and graded comonads
preserving coproducts.


Part of the challenge in designing a categorical semantics for GHL is
to carve out and implement the implicit assumptions and structures
used in the semantics of the various Hoare logics. A representative
example of this challenge is the interpretation of the rule for
conditionals in Union Bound Logic that we introduced above.
We interpret the assertion logic in (a variant of) coherent fibrations
$p:\PP\arrow\VV$, which model the
$\wedge${$\vee$}$\exists${${=}$}-fragment of first-order predicate
logic \cite{jacobscltt}. In this abstract setup, the rule for
conditionals may become {\em unsound} as it is built on the
implicit assumption that the type $\bool$, which is interpreted as
$1+1$, consists only of two elements, but this may fail in general $\VV$. For example, a suitable coherent fibration for
relational Hoare logic would take $\Set^2$ as the base category, but
we have $\Set^2(1,1+1)\cong 4$, meaning that there are four global elements in the
interpretation of $\bool$. We resolve this problem by introducing a
side condition to guarantee the decidability of the boolean
expression:
\begin{displaymath}
  \frac{
    \tripleV m{\psi\wedge e_b=\otrue}{P_1}\phi
    \quad
    \tripleV m{\psi\wedge e_b=\ofalse}{P_2}\phi
    \quad
    \textcolor{red}{\psi\vdash e_b=\otrue\vee e_b=\ofalse}
  }{
    \tripleV m\psi{\synIf{e_b}{P_1}{P_2}}\phi
  }
\end{displaymath}
This is related to the synchronization condition appearing in the
relational Hoare logic rule for conditional commands (e.g.,~\cite{Barthe:2012:PRR:2103656.2103670}).
%

Another challenge in the design of the GHL is how
to assign a grade to the loop command $\synLoop eP$. We may na\"{i}vely
give it the grade $m_l\teq\bigvee_{i\in\natrep{}} m^i$, where $m$ is the
grade of $P$, because $P$ is repeatedly executed some finite number of
times. However, the grade $m_l$ is a very loose over-approximation of
the grade of $\synLoop eP$. Even if we obtain some knowledge about the
iteration count $e$ in the assertion logic, this cannot be reflected
in the grade.  To overcome this problem, we introduce a Hoare logic
rule that can estimate a more precise grade of $\synLoop eP$, provided
that the value of $e$ is determined:
\begin{displaymath}
\dfrac{\forall 0\le z < N .\ \tripleV{m}{\psi_{z + 1}}{P}{\psi_z}\quad
  \psi_N\vdash e_n=\onat N
      }
      {\tripleV{m^N}{\psi_N}{\synLoop{e_n}{P}}{\psi_0}}
\end{displaymath}
This rule brings together the assertion language and grading,
creating a dependency from the former to the latter, and giving
us the structure needed for a categorical model.
The right premise is a judgment of the assertion logic (under
program variables $\mctx$ and pre-condition $\psi_N$)
requiring that $e$ is statically determinable as $N$. This premise makes the rule difficult to use in practical
applications where $e$ is dynamic.  We expect a more
``dependent'' version of this rule is possible with a more complex semantics
internalizing some form of data-dependency. Nevertheless, the above is enough to study the semantics of grading and its interaction with the Hoare Logic structure, which is our main goal here.




\section{Loop Language and Graded Hoare Logic}
\label{sec:main}
  \newcommand{\availcom}{C_{\mathsf c}}
  \newcommand{\availprocn}{C_{\mathsf p}}
  \newcommand{\availproc}[1]{\availprocn^{#1}}

After introducing some notation and basic concepts used throughout, we
outline a core imperative loop language, parametric in its set of
basic commands and procedures (Section~\ref{sec:syntax}).  We then
define a template of an assertion  logic (Section~\ref{sec:blogic}),
which is the basis of Graded Hoare Logic (Section~\ref{sec:ghl-definition}).

\subsection{Preliminaries}\label{sec:prelim}
Throughout, we fix an infinite set $\Var$ of variables
which are employed in \Lang{} (as the names of mutable program variables)
and in logic (to reason about these program variables).

A {\em many-sorted signature} $\Sigma$ is a tuple $(S,O,ar)$ where
$S,O$ are sets of sorts and operators, and $ar:O\arrow S^+$
assigns argument sorts and a return value sort to
operators (where $S^+$ is a non-empty sequence of sorts,
i.e., an operator $o$ with
signature $(s_1 \times \ldots \times s_n) \rightarrow s$ is
summarized as $\mathit{ar}(o) = \langle{s_1, \ldots, s_n, s\rangle} \in S^+$).
We say that another many-sorted signature $\Sigma'=(S',O',ar')$
is an {\em extension} of $\Sigma$ if $S\subseteq S'$ and $O\subseteq O'$
and $ar(o)=ar'(o)$ for all $o\in O$.

Let $\Sigma=(S,\cdots)$ be a many-sorted signature. A {\em context}
for $\Sigma$ is a (possibly empty) sequence of pairs
$\Gamma\in(\Var\times S)^*$ such that all variables in $\Gamma$ are
distinct. We regard $\Gamma$ as a partial mapping from
$\Var$ to $S$. The set of contexts for $\Sigma$ is denoted
$\Ctx\Sigma$. For $s\in S$ and $\Gamma\in\Ctx\Sigma$, we denote by
$\Exp \Sigma \Gamma s$ the set of $\Sigma$-expressions of sort $s$
under the context $\Gamma$. When $\Sigma,\Gamma$ are obvious,
we simply write $e:s$ to mean $e\in\Exp\Sigma\Gamma s$.
This set is inductively defined as usual.

An {\em interpretation} of a many-sorted signature $\Sigma=(S,O,ar)$
in a cartesian category $(\VV,1,\times)$ consists of an assignment of
an object $\sem s\in\VV$ for each sort $s\in S$ and an assignment
of a morphism
$\sem o\in\VV(\sem{s_1}\times\cdots\times \sem{s_n},\sem s)$ for each
$o\in O$ such that $ar(o)=\langle{s_1,\ldots,s_n,s}\rangle$.  Once such an
interpretation is given, we extend it to
$\Sigma$-expressions in the standard way (see, e.g.~\cite{crole1993categories,pitts1995categorical}).
First, for a context
$\Gamma=x_1:s_1,\cdots,x_n:s_n\in\Ctx\Sigma$, by $\sem\Gamma$ we mean
the product $\sem{s_1}\times\cdots\times\sem {s_n}$.  Then we
inductively define the interpretation of $e\in\Exp\Sigma\Gamma s$ as a
morphism $\sem e\in\VV(\sem\Gamma,\sem s)$.

\medskip

Throughout, we write bullet-pointed lists marked with
$\star$ for the mathematical data that are parameters to
Graded Hoare Logic (introduced in Section~\ref{sec:ghl-definition}).

\subsection{\LangTitleTitle{}}
\label{sec:syntax}

We introduce an imperative language called \Lang{}, with
a finite looping construct.  The language is parameterised by
the following data:
\begin{itemize}[leftmargin=1.2em]
\item[$\star$] a many-sorted signature $\Sigma=(S,O,ar)$ extending
  a base signature $(S_0,O_0,ar_0)$  of sort $S_0 = \{\sbool, \snat\}$ with
  essential constants as base operators $O_0$, shown here with their signatures
  for brevity rather than defining $\mathit{ar}_0$ directly:
  \begin{align*}
    O_0 = \{\otrue:\sbool,\ofalse:\sbool\} \cup \{\onat k:\snat~|~k\in\natrep{}\}
  \end{align*}
  where $\sbool$ is used for branching control-flow and $\snat$ is used for
  controlling loops, whose syntactic constructs are given below.
  We write $\onat k$ to mean the embedding of semantic
  natural numbers into the syntax.

\item[$\star$] a set $\CExp$ of {\em command names} (ranged over by $c$) and
  a set $\PExp_s$ of \emph{procedure names of sort $s$} (ranged over
  by $p$) for each sort $s\in S$.
\end{itemize}
When giving a program, we first fix a context $\mctx$ for the program variables. We
define the set of \emph{programs} (under a context $\mctx$) by the
following grammar:
\begin{align*}
  P & \,{::=}\, \synSeq{P}{P}
      \mid \synSkip{}
      \mid \synExp v {\exprc{}}
      \mid \synComm c
      \mid \synProc v p
      \mid \synIf{e_b}{P}{P}
      \mid \synLoop{e_n}{P}
\end{align*}
where $v\in\mctx$, $e_b,e_n$ are well-typed $\Sigma$-expressions of sort
$\sbool$ and $\snat$ under $\mctx$, and $c\in\CExp$. In assignment
commands, $\exprc{}\in\Exp\Sigma\mctx{\Gamma(v)}$. In procedure call
commands, $p\in \PExp_{\Gamma(v)}$. Each program must be
well-typed under $\mctx$. The typing rules are routine so we omit them.

Thus, programs can be sequentially composed via $;$ with $\synSkip$ as
the trivial program which acts as a unit to sequencing. An assignment
$\synExp v {\exprc{}}$ assigns expressions to a program variable $v$.
Commands can be executed through the instruction $\synComm c$ which
yields some side effects but does not return any value.  Procedures
can be executed through a similar instruction $\synProc v p$ which
yields some side effect but also returns a value which is used to update
$v$. Finally, conditionals are guarded by a boolean expression $e_b$
and the iterations of a looping construct are given by a natural
number expression $e_n$ (which is evaluated once at the beginning of
the loop to determine the number of iterations).

This language is rather standard, except for the treatment
of commands and procedures of which we give some examples here.
\begin{example}
  \label{ex:commands-procedures}
  {\em Cost Information:} a simple example of a command
  is $\tick{}$, which yields as a side effect the recording of one `step' of
  computation. 

  %
  {\em Control-Flow Information:} two other simple example of commands
  are $\cfTrue{}$ and $\cfFalse{}$, which yield as side effects the
  recording of either true or false to a log. A program can be
  augmented with these commands in its branches to give an account of
  a program's control flow.  We will use these commands to reason
  about control-flow security in Example
  \ref{exm-syn:program-counter}.

  {\em Probability Distributions:} 
  a simple example of a procedure is $\texttt{Gauss}(x,y)$, which yields as
  a side effect the introduction of new randomness in the program, and
  which returns a random sample from the Gaussian distribution with
  mean and variance specified by $x,y\in\mctx$. We will see how to use
  this procedure to reason about probability of failure in
  Example \ref{exm:syntax-union-bound-logic}.
  %
\end{example}



Concrete instances of \Lang{} typically include conversion
functions between the sorts in $\Sigma$, e.g., so that
programs can dynamically change control flow depending on values
of program variables.
In other instances, we may have a language
manipulating richer data types, e.g., reals or lists,
and also procedures capturing higher-complexity computations,
such as Ackermann functions.

\subsection{Assertion Logic}\label{sec:blogic}

We use an assertion logic to reason about properties
of basic expressions. We regard this reasoning as a meta-level
activity, thus the logic can have more sorts and operators
than \Lang{}. Thus, over the data specifying
\Lang{}, we build formulas of the assertion logic by the following
data:
\begin{itemize}[topsep=0.25em]
\item[$\star$] a many-sorted signature $\lsig=(S_l,O_l,ar_l)$
  extending $\Sigma$.
\item[$\star$] a set $P_l$ of atomic propositions and a function
  $par_l:P_l\arrow S_l^*$ assigning input sorts to them.
  We then inductively define the set $\Fml\lsig\Gamma$ of formulas
  under $\Gamma\in\Ctx\lsig$ as in Figure~\ref{fig:formula} (over the
  page), ranged over by $\psi$ and $\phi$.
\item[$\star$] a $\Ctx{\lsig}$-indexed family of subsets
  $\Axiom(\Gamma)\subseteq\Fml{\lsig}\Gamma\times\Fml{\lsig}\Gamma$.
\end{itemize}
The assertion logic is a fragment of the many-sorted first-order logic
over $\lsig$-terms admitting: 1) finite conjunctions, 2) countable
disjunctions, 3) existential quantification, and 4) equality
predicates. Judgements in the assertion logic have the form
$\Gamma~|~\psi_1,\cdots,\psi_n\vdash\phi$ (read as
$\psi_1\wedge\cdots\wedge\psi_n$ implies $\phi$), where
$\Gamma\in\Ctx{\lsig}$ is a context giving types to variables in the
formulas $\psi_1,\cdots,\psi_n,\phi\in\Fml{\lsig}\Gamma$. The logic
has the axiom rule deriving $\Gamma~|~\psi\vdash \phi$ for each pair
$(\psi,\phi)$ of formulas in $\Axiom(\Gamma)$.  The rest of inference rules of
this logic are fairly standard and so we omit them (see
e.g. \cite[Section 3.2 and Section 4.1]{jacobscltt}).

\begin{figure}[tp]
  The set
  $\Fml{\lsig}\Gamma$ of formulas under $\Gamma\in\Ctx{\lsig}$ is
  inductively defined as follows:
  \begin{enumerate}[topsep=0.25em]
  \item For all $p\in P_l$ and $par_l(p)=s_1\cdots s_n$ and
    $t_i:\Exp{\lsig}\Gamma{s_i}$ ($1\le i\le n$) implies
    $p(t_1,\cdots,t_n)\in\Fml{\lsig}\Gamma$
  \item For all $s\in S_l$ and $t,u\in\Exp{\lsig}\Gamma{s}$,
    $t=u\in\Fml{\lsig}\Gamma$.
  \item For all finite families
    $\{\phi_i\in\Fml{\lsig}\Gamma\}_{i\in\Lambda}$, we have
    $\bigwedge\phi_i\in\Fml{\lsig}\Gamma$.
  \item For all countable families
    $\{\phi_i\in\Fml{\lsig}\Gamma\}_{i\in\Lambda}$, we have
    $\bigvee\phi_i\in\Fml{\lsig}\Gamma$.
  \item For all $\phi\in\Fml{\lsig}{\Gamma,x:s}$, we have
    $(\ex{x:s}\phi)\in\Fml{\lsig}{\Gamma}$.
  \end{enumerate}
  \vspace{-0.5\baselineskip}
  \caption{Formula formation rules}
  \label{fig:formula}
  \vspace{-\baselineskip}
\end{figure}


In some of our examples we will use the assertion logic to reason
about programs in a relational way, i.e., to reason about two
executions of a program (we call them {\em left} and {\em right}
executions).  This requires basic predicates to manage expressions representing
pairs of values in our assertion logic. As an example, we could have two predicates
$\mathsf{eqv}_{\langle 1\rangle}$, $\mathsf{eqv}_{\langle 2\rangle}$,
that can assert the equality of the left and right executions of an
expression to some value, respectively. That is, the formula
$\mathsf{eqv}_{\langle 1\rangle} (e_b,{\sf true})$, which we will
write using infix notation $ e_b\langle 1\rangle = {\sf true}$,
asserts that the left execution 
of the boolean expression $e_b$ is equal to ${\sf true}$.


\subsection{Graded Hoare Logic}
\label{sec:ghl-definition}

We now introduce Graded Hoare Logic (GHL), specified by the following
data:
\begin{itemize}
\item[$\star$] a preordered monoid $(M,\le,1,\cdot)$ (\emph{pomonoid} for
  short) (where $\cdot$ is monotonic with respect to $\le$) for the purposes of
  program analysis, where we refer to the elements $m \in M$ as \emph{grades};
\item[$\star$] two functions which define the grades and
  pre- and post-conditions of commands $\CExp$ and
  procedures $\PExp$:
  \begin{align*}
    \availcom &:\Fml{\lsig}\mctx\times M\arrow 2^\CExp\\
    \availproc s &:\Fml{\lsig}\mctx\times M\times
                   \Fml{\lsig}{r:s}\arrow 2^{\PExp_s}\quad (s\in S \wedge r
                   \not\in \mathsf{dom}(\mctx))
  \end{align*}
  %
\end{itemize}
The function $\availcom$ takes a pre-condition and a grade, returning
a set of command symbols satisfying these specifications. A command
$c$ may appear in $\availcom(\phi,m)$ for different pairs $(\phi,m)$,
enabling pre-condition-dependent grades to be assigned to $c$.
Similarly, the function $\availproc s$ takes a pre-condition, a grade,
and a postcondition for return values, and returns a set of
procedure names of sort $s$ satisfying these specifications. Note,
$r$ is a distinguished variable (for return values) not in $\mctx$.
The shape of $\availcom$ and $\availcom$ as predicates over commands
and procedures, indexed by assertions and grades, provides a way
to link grades and assertions for the effectful operations of
GHL. Section~\ref{sec:examples} gives examples exploiting this.

From this structure we define a \emph{graded Hoare logic}
by judgments of the form:
$
  \tripleV{m}{\phi}{P}{\psi}
$
denoting a program $P$ with pre-condition $\phi\in \Fml{\lsig}\mctx$,
postcondition $\psi\in \Fml{\lsig}\mctx$ and analysis
$m \in M$. Graded judgments are defined inductively via the inference
rules given in Table~\ref{fig:rules}.
\begin{table}[tbp]
  \begin{align*}
    \begin{array}{c}
      %
      %
      \dfrac{}{\tripleV{1}{\psi}{\synSkip}{\psi}}
      \quad
      %
      %
      \dfrac{\tripleV{m}{\psi}{P_1}{\psi_1}
      \quad \tripleV{m'}{\psi_1}{P_2}{\phi}}
      {\tripleV{m \cdot m'}{\psi}{P_1; P_2}{\phi}}
      \quad
      %
      %
      \dfrac{}{\tripleV{1}{\psi\subst{e}{v}}{v := e}{\psi}}
      \\[2em]
      %
      %
      \dfrac{f\in \availcom(\psi,m)}
      {\tripleV{m}{\psi}{\synComm{c}}{\psi}}\quad
      \dfrac{p\in \availproc{\mctx(v)}(\psi,m,\phi)}
      {\tripleV{m}{\psi}{\synProc{v}{p}}{(\ex{v:\mctx(v)}\psi)\wedge\phi[v/r]}}
      \\[2em]
      %
      %
      \dfrac{
      \mctx~|~\psi' \vdash \psi \quad 
      m \leq m' \quad
      \mctx~|~\phi \vdash \phi' \quad
      \tripleV{m}{\psi}{P}{\phi}}
      {\tripleV{m'}{\psi'}{P}{\phi'}}
      \\[2em]
      %
      %
      %
      %
      %
      \dfrac{\forall 0\le z < N .\ \tripleV{m}{\psi_{z + 1}}{P}{\psi_z}\quad
      \mctx~|~\psi_N\vdash e_n=\onat N
      }
      {\tripleV{m^N}{\psi_N}{\synLoop{e_n}{P}}{\psi_0}}
      \\[2em]
      %
      %
      \dfrac{
      \begin{array}{l@{}}
      \tripleV{m}{\psi \wedge e_b = \otrue}{P_1}{\phi}
      \quad \tripleV{m}{\psi \wedge e_b = \ofalse}{P_2}{\phi}
      \quad \mctx~|~\psi\vdash e_b=\otrue\vee e_b=\ofalse
      \end{array}
      }
      {\tripleV{m}
      {\psi}{\synIf{e_b}{P_1}{P_2}}{\phi}}
      \\[1em]
      %
      %
      %
    \end{array}
  \end{align*}
  \caption{Graded Hoare Logic Inference Rules}
  \label{fig:rules}
  \vspace{-\baselineskip}
\end{table}
Ignoring grading, many of the rules are fairly standard for a
Floyd-Hoare program logic.  The rule for \synSkip{} is standard but
includes grading by the unit $1$ of the monoid
. Similarly, assignment is standard, but graded with
$1$ since we do not treat it specially in GHL. Sequential composition
takes the monoid multiplication of the grades of the subterms.  The
rules for commands and procedures use the functions $\availcom$ and $\availprocn$
introduced above.  Notice that the rule for commands uses as
the pre-condition as its post-condition, since commands have only side
effects and they do not return any value.
The rule for procedures combines the pre- and post-conditions given by $\availprocn$
following the style of Floyd's assignment rule~\cite{Floyd1967}.

The non-syntax-directed consequence rule is similar to the usual consequence rule, and
in addition allows the assumption on the grade to be weakened (\emph{approximated})
according to the ordering of the monoid.

The shape of the loop rule is slightly different from the usual one.
It uses the assertion-logic judgment $\mctx~|~\psi_N\vdash e_n=\onat N$ to express the
assumption  that $e_n$ evaluates to $\onat N$. Under this assumption
it uses a family of assertions $\psi_z$ indexed by the natural numbers
$z \in \{0,1,\ldots, N-1\}$
to conclude the post-condition $\psi_0$. This family of assertions plays the role of
the classical invariant in the Floyd-Hoare logic rule for `while'.
Assuming that the grade of the loop body is $m$, the grade of the loop command
is then $m^N$, where $m^0 = 1$ and $m^{k+1} = m \cdot m^k$.
By instantiating this rule with $\psi_z = (\theta \land e_n = {\onat z})$, the
loop rule also supports the following derived rule which is often preferable in examples:
\[
  \dfrac{\forall 0\le z < N .\ \tripleV{m}{\theta \land e_n = {\onat
        {z + 1}} }{P}{\theta \land e_n = {\onat z}
    } 
  } {\tripleV{m^N}{\theta \land e_n = {\onat
        N}}{\synLoop{e_n}{P}}{\theta \land e_n = {\onat 0}}}
\]
The rule for the conditional is standard except for the condition
$\mctx~|~\psi \vdash e_b=\otrue\vee e_b=\ofalse$. While this condition
may seem obvious, it is actually important to make 
GHL sound in various semantics (mentioned in Section~\ref{sec:overview}).
%
{As an example,
  suppose that a semantics $\sem\blank$ of expressions is given in the
  product category $\Set^2$, which corresponds to two semantics
  $\sem\blank_1,\sem\blank_2$ of expressions in $\Set$. Then the side
  condition for the conditional is to guarantee that for any boolean
  expression $e_b$, and pair of memories $(\rho_1,\rho_2)$ satisfying
  the precondition $\psi$, the pair
  $(\sem{e_b}_1(\rho_1),\sem{e_b}_2(\rho_2))$ is either
  $\sem{\otrue}=(\mtrue,\mtrue)$ or $\sem{\ofalse}=(\mfalse,\mfalse)$.
  We note that other relational logics such as
  apRHL~\cite{Barthe:2012:PRR:2103656.2103670} employ an
  equivalent syntactic side condition in their rule for conditionals.  }
%

\subsection{Example Instantiations of GHL}
\label{sec:examples}
\begin{example}[Simple cost analysis]
  \label{exm-syn:simple-cost}
  We can use the $\tick{}$ command discussed in
  Example~\ref{ex:commands-procedures} to instrument programs with
  \emph{cost} annotations. We can then use GHL to perform cost
  analysis by instantiating GHL with the additive natural
  number monoid $(\NN, \leq, 0, +)$ and
  $\tick{} \in \availcom(\phi, 1)$. Thus, we can form
  judgments $\tripleV{1}{\phi}{\synComm{\tick}}{\phi}$
  which account for cost via the judgment's grade. Sequential
  composition accumulates cost and terms like $\synSkip$ and
  assignment have $0$ cost.

  Let us use this example to illustrate how $\availcom$ can assign
  multiple pre-condition-grade pairs to a command. Suppose that we
  modify the semantics of $\tick{}$ so that it reports unit cost $1$ when
  variable $x$ is $0$, otherwise cost $2$. We can then define
  $\availcom$ so that $\tick{}\in \availcom(x=\onat 0,1)$ and also
  $\tick{}\in\availcom(x\neq\onat 0,2)$.  In this way,
  we can give different grades to
  programs depending  on their pre-conditions.
\end{example}

\begin{example}[Program Counter Security]
  \label{exm-syn:program-counter}
  We can use the commands $\cfTrue$ and $\cfFalse$ discussed in
  Example~\ref{ex:commands-procedures} to instrument programs with
  \emph{control flow} annotations, recording to an external log. GHL can then be used to reason
  about program counter security~\cite{MolnarPSW05}\cite[Section 7.2]{barthe20} of
  instrumented programs. This is a relational security property
  similar to non-interference (requiring that private values do not
  influence public outputs) but where only programs with the same
  control flow are considered.

  Firstly, any conditional
  statement $\synIf{e_b}{P_t}{P_f}$ in a program is
  elaborated to a statement
  $\synIf{e_b}{(\cfTrue; P_t)}{(\cfFalse; P_f)}$.  We then
  instantiate GHL with a monoid of words over $\{\otrue,\ofalse\}$ with prefix
  order: $2^*\teq(\{\otrue,\ofalse\}^*, \leq, \epsilon, \cdot)$ and we consider
  $\cfTrue \in \availcom(\phi, \otrue)$ and $\cfTrue \in
  \availcom(\phi, \ofalse)$.
  We can thus form judgments of the shape
  $\tripleV{\otrue}{\phi}{\synComm{\cfTrue}}{\phi}$ and
  $\tripleV{\ofalse}{\phi}{\synComm{\cfFalse}}{\phi}$ which account for
  control-flow information (forming paths) via the judgment's
  grade. Sequential composition concatenates control-flow paths and
  terms like $\synSkip$ and assignment do not provide any control-flow information, i.e. $\epsilon$.

  We then instantiate the assertion logic to support
  relational reasoning, i.e., where the expressions of the language
  are interpreted as pair of values. For an expression $e$, interpreted as a
  pair $(v_1, v_2)$ then we write $e\langle{1}\rangle = v_1$ to say
  that the first component (left execution) equals $v_1$ and $e\langle{2}\rangle
  = v_2$ to say that the second component (right execution) equals $v_2$. In the
  assertion logic, we can then describe public values which need to be
  equal, following the tradition in reasoning about non-interference,
  by the predicate $e\langle{1}\rangle = e\langle{2}\rangle$. Private data
  are instead interpreted as a pair of arbitrary values.
  (Section~\ref{sec:blogic} suggested the notation
  $\mathsf{eqv}_{\langle i\rangle}(e,b)$ for $e\langle{i}\rangle = b$,
  but we use the latter for compactness here).


  As an example,
  one can prove the following judgment where $x$ is a public variable
  and $y$ is a private one, and $b\in\{\otrue,\ofalse\}$:
%
\marginpar{\tnote{[review] This does not actually use the eqPub notation it is
   supposed to illustrate.}\tnote{I do not understand the meaning of
   this reviewer's comment.}
\dnote{We are mixing notations (eqPub above then eqv here).}}
  {\small $$\tripleV{b}{ x\langle 1\rangle{=}x\langle 2\rangle\wedge  x\langle 1\rangle{=}b}{\synIf{x}{(\cfTrue; x{=}1;y{=}1)}{(\cfFalse; x{=}2;y{=}2)}}{x\langle 1\rangle{=}x\langle 2\rangle}$$}
  This judgment shows the program is non-interferent, since the value
  of $x$ is independent from the value of the private variable $y$,
  and secure in the program counter model, since the control flow does
  not depend on the value of $y$. Conversely, the following judgment
  is not derivable for both $b=\otrue$ and $b=\ofalse$:
  {\small $$\tripleV{b}{  x\langle 1\rangle{=}x\langle 2\rangle\wedge  y\langle 1\rangle{=}b}{\synIf{y}{(\cfTrue; x{=}1;y{=}1)}{(\cfFalse; x{=}1;y{=}2)}}{x\langle 1\rangle{=}x\langle 2\rangle}$$}
This program is non-interferent but is not secure in the program counter model because the control flow leaks information about $y$ which is a private variable.
\end{example}

\newcommand{\countVar}[1]{\mathsf{count}(#1)}

\begin{example}[Union Bound Logic]
\label{exm:syntax-union-bound-logic} 
Section~\ref{sec:introduction} discussed the Union Bound logic by Barthe et
al.~\cite{DBLP:conf/icalp/BartheGGHS16}.  This logic embeds smoothly
into GHL by using the \pmonoid{}
$(\mathbb{R}_{\geq 0}, \leq, 0,+)$ and procedures of the form
$\osample\mu\exprc$ as samplings from a probabilistic
distribution $\mu$ parametrised over the syntax of
GHL expressions $\exprc$. Following Barthe et
al.~\cite{DBLP:conf/icalp/BartheGGHS16}, we consider a
semantically defined set for $C_p$:
\begin{align*}
C_p(\phi, \beta, \psi)
 = \{ \osample\mu\exprc \mid
    \forall s . s \in \interp{\phi}{\implies}
    \text{Pr}_{s'\leftarrow\llbracket\osample\mu\exprc\rrbracket(s)}[s'\in\interp{\neg \psi}] \leq \beta) \}
\end{align*}
This definition captures that, assuming the pre-condition holds for
an input memory state $s$, then for output
value $s'$ from sampling $\osample\mu\exprc$,
the probability that the post-condition is false is bounded above by
$\beta$.  This allow us to consider different properties of the
distribution $\mu$ with parameter $e$.
\end{example}




\section{Graded Categories}
\label{sec:graded-category}

Now that we have introduced GHL and key examples, we turn to the core of
its categorical semantics: \emph{graded categories}.

Graded monads provide a notion of sequential composition for morphisms
of the form $I \rightarrow T_m J$, i.e., with structure on the
target/output capturing some information by the grade $m$ drawn
from a \pmonoid{}~\cite{DBLP:conf/popl/Katsumata14}; dually,
graded comonads provide composition for $D_m I \rightarrow J$,
i.e. with
 structure on the source/input with grade $m$~\cite{DBLP:conf/icalp/PetricekOM13}. We avoid the choice
of whether to associate grading with the input or output by
instead introducing \emph{graded categories}, which are
agnostic about the polarity (or position) of any structure and grading.
Throughout this section, we fix a \pmonoid{} $(M,\le,1,\cdot)$
(with $\cdot$ monotonic wrt. $\le$).
\newcommand{\up}[4]{{\uparrow}_{#3}^{#4}}
\begin{definition}\conf{90}\label{def:grcat}
  An {\em $M$-graded category} $\CC$ consists of the following data:
  \begin{itemize}[leftmargin=1.5em]
  \item A class $\Obj\CC$ of objects. $I\in\CC$ denotes $I\in\Obj\CC$.
  \item A homset $\CC(I,J)(m)$ for all objects $I,J\in\CC$ and
    $m\in M$. We often write $f:I\garrow{m}J$ to mean
    $f\in\CC(I,J)(m)$, and call $m$ the {\em grade} of
    $f$;
   %
  \item An upcast functions
    $\up I J m n:\CC(I,J)(m)\arrow\CC(I,J)(n)$ for all grades $m\le n$;
  %
  \item Identity morphisms $\id_I\in\CC(I,I)(1)$ for all $I\in\CC$;
  \item Composition
    $\circ:\CC(J,K)(n)\times\CC(I,J)(m)\arrow\CC(I,K)(m\cdot n)$.
  \end{itemize}
  Graded categories satisfy the usual categorical laws of identity and
  associativity, and also the commutativity of upcast and composition:
  $\up J K n {n'} g
    \circ
    \up I J m  {m'} f=
    \up I K {m\cdot n}{m'\cdot n'}(g\circ f)$,
  corresponding to monotonicity of $(\cdot)$ with respect to $\le$.
\end{definition}
An intuitive meaning of a graded category's morphisms is:
$f \in \CC (A, B) (m)$ if the {\em{value}} or the {\em{price}} of a
morphism $f : A \rightarrow B$ is {\em{at most}} $m$ with respect to
the ordering $\le$ on $M$.
%
%
%
We do not yet give a polarity or direction to this price, i.e.,
whether the price is {\em{consumed}} or {\em{produced}} by the
computation. Thus, graded categories give a non-biased view; we need
not specify whether grading relates to the source or target of a
morphism.

Graded categories were first introduced by Wood
\cite[Section 1]{Wood} (under the name `large $V$-categories'), and Levy connected them with models of
call-by-push-value \cite{locgracat}.  Therefore we do not claim the
novelty of Definition \ref{def:grcat}.


\newcommand{\Tup}[3]{{#1}({#2}\le{#3})}
\begin{example}\conf{80}\label{ex:KleisliFreyd}
  A major source of graded categories is via graded (co)monads.  Let
  $(M,\le,1,\cdot)$ be a \pmonoid{}, regarded as a monoidal
  category. A {\em graded monad}
  \cite{smirnov2008graded,DBLP:conf/popl/Katsumata14} on a category
  $\CC$ (or more precisely an \emph{$M$-graded monad}) is a lax
  monoidal functor
  $(T,\eta,\mu):(M,\le,1,\cdot)\arrow ([\CC,\CC],\Id,\circ)$.
  Concretely, this specifies:
  \begin{itemize}
  \item a functor $T:(M,\le)\arrow [\CC,\CC]$ from the preordered set
    $(M,\le)$ to the endofunctor category over $\CC$. For an ordered
    pair $m\le m'$ in $M$ then $\Tup T m {m'} : Tm\arrow Tm'$ is a natural
    transformation;
  \item a unit $\eta:\Id\arrow T1$ and a multiplication
    $\mu_{m,m'}:Tm\circ Tm'\arrow T(m\cdot m')$, natural in
    $m,m'\in M$.
  \end{itemize}
  They satisfy the graded versions of the usual monad axioms:
   \begin{displaymath}
     \hspace{-0.5em}
     \xymatrix@C=1.8em@R=1.8em{
       TmJ \rrh{Tm\eta_J} \rrde \rdm{\eta_{TmJ}} & Tm(T1J) \rdh{\mu_{m,1,J}} &
       Tm(Tm'(Tm''J)) \rrrh{\mu_{m,m',Tm''J}} \rdm{Tm\mu_{m',m'',J}} &
       & T(m \cdot m')(Tm''J) \rdh{\mu_{mm',m'',J}} \\
       T1(TmJ) \rrh{\mu_{1,m,J}} & TmJ &
       Tm(T(m' \cdot m'')J) \rrrh{\mu_{m,m'm'',J}} & & T(m \cdot m' \cdot m'')J
     }
   \end{displaymath}
  Graded comonads are dually defined (i.e., as a graded monad on
  $\CC^{op}$).

  By mimicking the construction of Kleisli categories, we can
  construct an $M$-graded category $\CC_T$ (we call it the Kleisli
  $M$-graded category of $T$) from a category $\CC$ with an $M$-graded
  monad $T$ on $\CC$.\footnote{Not to be confused with the Kleisli
    category of graded monads by Fujii et
    al.~\cite{fujii2016towards}.}
  \begin{itemize}
  \item  $\Obj{\CC_T}\teq\Obj\CC$
    and $\CC_T(X,Y)(m)\teq \CC(X,TmY)$.
  \item For $f:X\garrow{m}Y$ and $n$ such that $m \leq n$, we define
    $\up X Y m n f \teq \Tup T m n_Y\circ f$.
  \item Identity and composition are defined by:
    $\id_X\teq \eta_X:X\garrow 1X$ and
    $g\circ f\teq\mu_{m,n,Z}\circ T  m\, g \circ f$ for $f:X\garrow{m}Y$ and
    $g:Y\garrow{n}Z$.
  \end{itemize}
  The dual construction is possible. Let $D$ be an $M^\op$-graded
  comonad on a category $\CC$.  We then define $\CC_D$ by
  $\CC_D(X,Y)(m)=\CC(DmX,Y)$; the rest of data is similar to the case
  of graded monads. This yields an $M$-graded category $\CC_D$.
\end{example}

\begin{remark}
  As an aside (included for completeness but not needed in the rest of
  the paper), graded categories are an instance of {\em enriched
    categories}. For the enriching category, we take the presheaf
  category $[M,\Set]$, together with {\em Day's convolution product}
  \cite{Day}.

\end{remark}
%

\subsection{Homogeneous Coproducts in Graded Categories \conf{80}}

We model boolean values and natural numbers by the binary coproduct
$1+1$ and the countable coproduct $\coprod_{i\in\natrep}1$.  We thus
define what it means for a graded category to have coproducts. The
following definition of binary coproducts easily extends to coproducts
of families of objects.
\begin{definition}
  Let $\CC$ be an $M$-graded category. A {\em{homogeneous binary
      coproduct}} of $X_1, X_2 \in \CC$ consists of an object
  $Z\in \CC$ together with injections $\iota_1\in\CC(X_1,Z)(1)$ and
  $\iota_2\in\CC(X_2,Z)(1)$ such that, for any $m\in M$ and $Y\in\CC$,
  the function $ \lam{f}(f\circ\iota_1,f\circ\iota_2)$ of type
  $\CC(Z,Y)(m)\arrow\CC(X_1,Y)(m)\times\CC(X_2,Y)(m)$ is
  invertible. The inverse is called the {\em cotupling} and denoted by
  $[-,-]$.  It satisfies the usual law of coproducts ($i=1,2$):
  \begin{align*}
    [f_1,f_2]\circ\iota_i&=f_i,\quad
    &[\iota_1,\iota_2]&=\id_{Z},\\
    g\circ [f_1,f_2]&= [g\circ f_1,g\circ f_2],\quad
    &[\up {X_1}Ymnf_1,\up {X_2}Ymnf_2]&=\up{Z}Ymn[f_1,f_2].
  \end{align*}
  When homogeneous binary coproducts of any combination of
  $X_1,X_2\in\CC$ exists, we say that $\CC$ has homogeneous binary
  coproducts.
\end{definition}
The difference between homogeneous coproducts and coproducts
 in ordinary category theory is that the
cotupling is restricted to take morphisms with the same grade. A
similar constraint is seen in some effect systems, where the typing
rule of conditional expressions require each branch to have the
same effect.
\begin{proposition}
  \label{pp:Gradedmonad:coproduct}
  Let $\{\iota_i\in\CC(X_i,Z)\}_{i\in I}$ be a
  coproduct of $\{X_i\}_{i\in I}$ in an ordinary category $\CC$.
  \begin{enumerate}
  \item Suppose that $T$ is an $M$-graded monad on $\CC$. Then
    $\{\eta_Z\circ\iota_i\in\CC_T(X_i,Z)(1)\}_{i\in I}$ is a
    homogeneous coproduct in $\CC_T$.
  \item Suppose that $(D,\varepsilon,\delta)$ is an $M^{op}$-graded
    comonad on $\CC$ such that each $Dm:\CC\arrow\CC$ preserves the
    coproduct $\{\iota_i\}_{i\in I}$.  Then
    $\{\iota_i\circ \varepsilon_I\in\CC_D(X_i,Z)(1)\}_{i\in I}$ is a
    homogeneous coproduct in $\CC_D$.
  \end{enumerate}
\end{proposition}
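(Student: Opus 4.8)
The plan is to reduce both parts to the ordinary coproduct universal property in $\CC$, using the explicit Kleisli-style descriptions of $\CC_T$ and $\CC_D$ recalled in Example~\ref{ex:KleisliFreyd}. Fix $m\in M$ and $Y\in\CC$. In each case the task is to show that the map $f\mapsto(f\circ\kappa_i)_{i\in I}$ appearing in the definition of a homogeneous coproduct — where $\kappa_i$ denotes the candidate injection and $\circ$ is the graded Kleisli composition — coincides, under the identification of homsets, with a canonical map of the form $\CC(A,B)\to\prod_{i\in I}\CC(A_i,B)$ attached to an ordinary coproduct $\{A_i\to A\}_{i\in I}$ in $\CC$, and hence is a bijection. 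Once that is done, the inverse is forced to be the corresponding $\CC$-cotupling, and the remaining cotupling equations (including compatibility with the upcasts) follow routinely from the corresponding laws in $\CC$ together with naturality of the relevant graded (co)monad structure maps and the identifications of identities, composition and upcast given in Example~\ref{ex:KleisliFreyd}; I would only spell out the identification step and leave these verifications as bookkeeping.

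For (1): the proposed injections are $\eta_Z\circ\iota_i\in\CC(X_i,T1Z)=\CC_T(X_i,Z)(1)$, which genuinely have grade $1$. For $f\in\CC_T(Z,Y)(m)=\CC(Z,TmY)$, unfolding the Kleisli composition gives $f\circ(\eta_Z\circ\iota_i)=\mu_{1,m,Y}\circ T1f\circ\eta_Z\circ\iota_i$; naturality of $\eta:\Id\Arrow T1$ rewrites $T1f\circ\eta_Z$ as $\eta_{TmY}\circ f$, and the left unit law of the graded monad collapses $\mu_{1,m,Y}\circ\eta_{TmY}$ to $\id_{TmY}$. Hence $f\circ(\eta_Z\circ\iota_i)=f\circ\iota_i$ computed in $\CC$, so the map in question is exactly the canonical map $\CC(Z,TmY)\to\prod_{i\in I}\CC(X_i,TmY)$ determined by the coproduct $\{\iota_i\}_{i\in I}$ applied to the object $TmY$ — a bijection. (The empty case $I=\emptyset$ amounts to the observation that, $Z$ being initial in $\CC$, each $\CC_T(Z,X)(m)=\CC(Z,TmX)$ is a singleton, so $Z$ is initial in $\CC_T$.)

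For (2): the proposed injections are $\iota_i\circ\epsilon_{X_i}\in\CC(D1X_i,Z)=\CC_D(X_i,Z)(1)$. For $f\in\CC_D(Z,Y)(m)=\CC(DmZ,Y)$, unfolding the dual Kleisli composition and using functoriality of $Dm$ gives $f\circ(\iota_i\circ\epsilon_{X_i})=f\circ Dm\iota_i\circ\bigl(Dm\epsilon_{X_i}\circ\delta\bigr)$, where $\delta$ is the instance of the comultiplication of shape $Dm\Arrow Dm\circ D1$; the bracketed composite is $\id_{DmX_i}$ by a counit law of the graded comonad, so the composite is $f\circ Dm\iota_i$. Thus the map is the canonical map $\CC(DmZ,Y)\to\prod_{i\in I}\CC(DmX_i,Y)$ determined by the family $\{Dm\iota_i:DmX_i\to DmZ\}_{i\in I}$, which by the hypothesis that $Dm$ preserves $\{\iota_i\}_{i\in I}$ is a coproduct in $\CC$; hence the map is a bijection, and the empty case uses preservation of the initial object.

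The conceptual content is small, so I expect the main thing to get right to be the bookkeeping: the variance and index order of the graded (co)monad structure maps (governed by the $M^{\op}$ convention in the comonad case), so that the intended counit law really applies to the composite $Dm\epsilon_{X_i}\circ\delta$; and keeping straight that in (2) each use of a structure law — the collapse above, and the verification of the $g\circ[f_i]_i=[g\circ f_i]_i$ law through $Dn(-)$ — relies on $Dm$ (resp. $Dn$) preserving the coproduct, which is precisely why that hypothesis appears in (2) but not in (1), where $\eta$ supplies the injections unconditionally.
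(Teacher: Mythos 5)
Your proposal is correct and follows essentially the same route as the paper: the paper exhibits the $\CC$-cotupling $\{f_i\}\mapsto[f_i]_{i\in I}$ as the inverse of $h\mapsto(\mu_{1,m,Y}\circ T1h\circ\eta_Z\circ\iota_i)_i$ and leaves the verification implicit, and that verification is exactly your computation that $h\circ(\eta_Z\circ\iota_i)=h\circ\iota_i$ (naturality of $\eta$ plus the unit law), respectively $f\circ(\iota_i\circ\epsilon_{X_i})=f\circ Dm\iota_i$ (counit law plus preservation of the coproduct by $Dm$). Your additional remarks — that the cotupling equations then follow from invertibility together with the graded-category axioms, and that the hypothesis on $Dm$ is used precisely where the injections pass under $Dm$ — are accurate.
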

\begin{appendixproof}
  (Proof of Proposition \ref{pp:Gradedmonad:coproduct}) (1) Let
  $Y\in\CC_T$ and $f_i:X_i\garrow mY$ be $I$-indexed morphisms.  Then
  the cotupling $[f_i]_{i\in I}$ taken in $\CC$ is a morphism of type
  $Z\garrow mY$ in $\CC_T$. It is easy to verify that
  $\{f_i\}_{i\in I}\mapsto [f_i]_{i\in I}$ is the inverse of
  $\lam{h}(\mu_{1,m,Z}\circ T1h\circ \eta_Z\circ \iota_i)_{i\in I}$.

  (2) This case can be proven similarly, using the fact that $Dm$
  preserves coproducts.
\end{appendixproof}

\subsection{Graded Freyd Categories with Countable Coproducts}
\label{sec:graded-freyd-categories}

We now introduce the central categorical structure of
\Lang{} and GHL semantics: {\em graded
  Freyd categories} with homogeneous countable coproducts.
%
\begin{definition}\label{def:freydcoprod}
  An $M$-graded Freyd category with homogeneous countable coproducts
  consists of the following data:
  \begin{enumerate}
  \item A cartesian monoidal category $(\VV,1,\times,l,r,a)$ with
    countable coproducts such that for all $V \in \VV$,
    the functor
    $V\times(\blank):\VV\arrow\VV$
    preserves coproducts.
  \item An $M$-graded category $\CC$ such that $\Obj\CC=\Obj\VV$ and $\CC$
    has homogeneous countable coproducts.
  \item A function $I_{V,W}:\VV(V,W)\arrow\CC(V,W)(1)$ for each
    $V,W\in\CC$. Below we may omit writing subscripts of $I$.  The
    role of this function is to inject pure computations into effectful
    computations.
  \item A function
    $(*)_{V,X,W,Y}:\VV(V,W)\times\CC(X,Y)(m)\arrow\CC(V\times
    X,W\times Y)(m)$ for each $V,W,X,Y\in\CC$ and $m\in M$. Below we
    use it as an infix operator and sometimes omit its subscripts.
    The role of this function is to combine pure computations and
    effectful computations in parallel.
  \end{enumerate}
  The function $I$ and $(*)$ satisfy the following
  equations:
  \begin{gather*}
    I(\id_X)=\id_X\quad I(g\circ f)=Ig\circ If \quad I(f\times g)=f*Ig
    \quad \id_V*\id_X=\id_{V*X},
    \\
    (g\circ f)*(i\circ j)=(g*i)\circ (f*j) \quad f*\up X Y m n g=\up
    {V*X} {W*Y} m n (f*g)
    \\
    f\circ I(l_X) = I(l_X)\circ(\id_1*f) \quad
    I(a_{X'\!,Y'\!,Z'\!})\circ ((f{\times} g)*h) =(f*(g{*}h)) \circ
    I(a_{X,Y,Z})
  \end{gather*}
  These are analogous to the usual Freyd categories axioms.
  We also require that:
  \begin{enumerate}
  \item \label{cond:icoprod}For any countable coproduct
    $\{\iota_i\in\VV(X_i,Y)\}_{i\in A}$,
    $\{I(\iota_i)\in\CC(X_i,Y)(1)\}_{i\in A}$ is a homogeneous
    countable coproduct.
  \item For any homogeneous countable coproduct
    $\{\iota_i\in\CC(X_i,Y)(1)\}_{i\in A}$ and $V\in\VV$,
    $\{\id_V*\iota_i\in\CC(V\times X_i,V\times Y)(1)\}_{i\in A}$ is a
    homogeneous countable coproduct.
  \end{enumerate}
  We denote an $M$-graded Freyd category with countable coproducts by
  the tuple $(\mathbb{V}, 1, \times, \mathbb{C}, I, (\ast))$ capturing
  the main details of the cartesian monoidal structure of
  $\mathbb{V}$, the base category $\mathbb{C}$, the lifting function
  $I$ and the action $(*)$.
\end{definition}
\noindent
If the grading \pmonoid{} $M$ is trivial, $\CC$ becomes an ordinary
category with countable coproducts. We therefore simply call it a
Freyd category with countable coproducts. This is the same as a {\em
  distributive Freyd category} in the sense introduced by
Power~\cite{power2006generic} and
Staton~\cite{STATON2014197}. \draft[A new ``example'' we promised
  during rebuttal]{We will use non-graded Freyd categories to give a
  semantics of \Lang{} in Section~\ref{sec:semloop}. An advantage of Freyd categories is that they
  encompasses a broad class of models of computations, not limited to
  those arising from monads. A recent such example is Staton's
  category of {\em s-finite kernels}
  \cite{DBLP:conf/esop/Staton17}\footnote{It is not known whether the
    category of s-finite kernels is a Kleisli category.}.}

We could give an alternative abstract definition of $M$-graded
Freyd category using $2$-categorical language: a graded Freyd category
is an equivariant morphism in the category of actions from a cartesian
category to $M$-graded categories.
The full detail of this formulation will be discussed elsewhere.

A Freyd category typically arises from a strong monad on a cartesian
category \cite{10.1007/BFb0014560}. We give here a graded analogue of
this fact. First, we recall the notion of {\em strength} for graded
monads \cite[Definition 2.5]{DBLP:conf/popl/Katsumata14}.  Let
$(\CC,1,\times)$ be a cartesian monoidal category.  A {\em strong}
$M$-graded monad is a pair of an $M$-graded monad $(T,\eta,\mu)$ and a
natural transformation
$\mathrm{st}_{I,J,m}\in\CC(I\times TmJ,Tm(I\times J))$ satisfying
graded versions of the four coherence laws in \cite[Definition
3.2]{DBLP:journals/iandc/Moggi91}. We dually define a {\em
  costrong} $M$-graded comonad $(D,\varepsilon,\delta,\mathrm{cs})$ to be
the $M$-graded comonad equipped with the {\em costrength}
$\mathrm{cs}_{I,J,m}\in\CC(Dm(I\times J),I\times DmJ)$.
\begin{toappendix}
  \section{Coherence Laws of Strength}
  \begin{displaymath}
    \xymatrix{
      1\times TmX \rr \rrd & Tm(1\times X) \rd\\
      & TmX
    }
    \xymatrix{
      X\times (Y\times TmZ) \rr \rd & X\times Tm(Y\times Z) \rr & Tm(X\times (Y\times Z)) \rd \\
      (X\times Y)\times TmZ \rrr & & Tm((X\times Y)\times Z)
    }
  \end{displaymath}
  \begin{displaymath}
    \xymatrix{
      X\times Y \rr \rrd & X\times T1Y \rd\\
      & T1(X\times Y)
    }
    \xymatrix{
      X\times Tm(Tm'Y) \rr \rd & Tm(X\times Tm'Y) \rr & Tm(Tm'(X\times Y)) \rd \\
      X\times T(mm')Y \rrr & & T(mm')(X\times Y)
    }
  \end{displaymath}
\end{toappendix}
\begin{proposition}\label{pp:Klieslifreyd}\label{pp:Klieslifreyd:coprod}
  Let $(\CC,1,\times)$ be a cartesian monoidal category.
  \begin{enumerate}
  \item Let $(T,\eta,\mu,\mathrm{st})$ be a strong $M$-graded monad on
    $\CC$. The Kleisli $M$-graded category $\CC_T$, together with
    $If=\eta_W\circ f$ and
    $f\ast g=\mathrm{st}_{W,Y} \circ (f \times g)$ forms an $M$-graded
    Freyd category with homogeneous countable
    coproducts.

  \item Let $(D,\varepsilon,\delta,\mathrm{cs})$ be a costrong
    $M^{op}$-graded comonad on $\CC$ such that each $Dm$ preserves
    countable coproducts. Then the coKleisli $M$-graded category
    $\CC_D$ together with $If=f\circ\varepsilon_V$ and
    $f*g=(f\times g)\circ\mathrm{cs}_{V,X}$ forms an $M$-graded Freyd
    category with homogeneous countable coproducts.
  \end{enumerate}
\end{proposition}
\begin{appendixproof}
  (Proof of Proposition \ref{pp:Klieslifreyd}) Thanks to the strong
  graded monad structure of $T$, we easily satisfy the equalities for
  $M$-graded Freyd category.
  \begin{align*}
    &I(\mathrm{id}_X) = \eta^T_X =  \mathrm{id}_X, \quad
      I(g \circ f) = (\eta^T \circ g)^\sharp \circ (\eta^T \circ f)  =  Ig \circ^{\CC_T} If,\\
    &\mathrm{id}_V \ast \mathrm{id}_X = \mathrm{st}^T_{V,X} \circ (\mathrm{id}_V \times \eta^T_X) = \eta^T_{V \times X}  = \mathrm{id}^{\CC_T}_{V \times X},\\
    &(g \circ f) \ast (i \circ^{\CC_T} j) = \mathrm{st}^T \circ (g \circ f \times i^\sharp \circ j) = (\mathrm{st}^T \circ (g \times i))^\sharp \circ \mathrm{st}^T \circ (f \times j) = (g \ast i) \circ^{\CC_T} (f \ast j), \\
    &f*\up X Y m n g = \mathrm{st}^T \circ (f \times {\Tup T m n} \circ g) = {\Tup T m n}  \circ \mathrm{st}^T(f \times g)  = \up {V*X} {W*Y} m n (f*g), \\
    &I(l_X)\circ^{\CC_T} (\id_1*f)
      = T(l_X) \circ  \mathrm{st}^T \circ (\mathrm{id}_1 \times f)
      = l_{TY} \circ (\mathrm{id}_1 \times f)
      = f \circ l_X
      = f \circ^{\CC_T} I(l_X)
    \\
    &I(a_{X',Y',Z'})\circ^{\CC_T} ((f\times g)*h)
      =T(a_{X',Y',Z'}) \circ  \mathrm{st}^T_{X' \times Y', TZ'}  \circ ((f\times g) \times h)\\
    &=\mathrm{st}^T_{X', Y' \times Z'} \circ (\mathrm{id}_{X'} \times \mathrm{st}^T_{Y', TZ'}) \circ a_{X',Y',TZ'} \circ ((f\times g) \times h)\\
    &=\mathrm{st}^T_{X', Y' \times Z'} \circ (\mathrm{id}_{X'} \times \mathrm{st}^T_{Y', TZ'})  \circ (f\times (g \times h)) \circ a_{X,Y,Z}
      =(f*(g*h)) \circ^{\CC_T} I(a_{X,Y,Z})
    \\
    &I(f \times g) = \eta^T \circ (f \times g) = \mathrm{st}^T \circ (f \times \eta^T \circ g) = f \ast Ig,
  \end{align*}
\end{appendixproof}

\begin{appendixproof}
  (Proof of Proposition \ref{pp:Klieslifreyd}) We next
  suppose that the underlying category $\CC$ has coproducts.  Then, the Kleisli grade category $\CC_T$ has homogeneous coproducts.
  From the coprojections
  $\iota_i \colon X_i \to \coprod_{i \in A} X_i$ in $\CC$, The tuple
  of morphisms $I(\iota_i) \colon X_i \to_1 \coprod_{i \in A} X_i$ in
  $\CC_T$ forms a countable homogeneous coproduct.

  \begin{align*}
    [f_i]_{i \in A} \circ^{\CC_T} I(\iota_i) &=
                                               [f_i]_{i \in A}^\sharp \circ (\eta^{T}_{ \coprod_{i \in I} X_i} \circ \iota_i) = [f_i]_{i \in A} \circ  \iota_i = f_i \\
    [I(\iota_i) ]_{i \in A} &= \eta^{T}_{ \coprod_{i \in I} X_i} \circ [\iota_i]_{i \in A} =  \eta^{T}_{ \coprod_{i \in A} X_i} = \mathrm{id}^{\CC_T}_{\coprod_{i \in A} X_i},\\
    h \circ^{\CC_T} [f_i]_{i \in A} & = h^\sharp \circ [f_i]_{i \in I} = [h^\sharp \circ f_i]_{i \in A} = [h \circ^{\CC_T} f_i]_{i \in I},\\
    [\uparrow_{m}^{n} (f_i)]_{i \in A} &= [\Tup T m n \circ f_i]_{i \in A}
                                         = \Tup T m n [f_i]_{i \in A} = \uparrow_m^n [f_i]_{i \in A}.
  \end{align*}
%
  From the construction of the homogeneous coproducts and the unit
  law of the tensorial strength
  ($\mathrm{st}^T_{X,Y} \circ (\mathrm{id}_X \times \eta^T_Y) =
  \eta^T_{X \times Y}$), $I$ preserves the distributivity.  Hence,
  $\mathrm{id}_V \ast I(\iota_i) = I(\mathrm{id}_V \times \iota_i)$,
  where
  $\mathrm{id}_V \times \iota_i \colon V \times X_i \to V \times
  \coprod_{i \in A} X_i$ are coprojections.  Hence
  $(\mathrm{id}_V \ast I(\iota_i) \colon V \times X_i \to_1 V \times
  \coprod_{i \in A} X_i)_{i \in A}$ also forms a countable
  homogeneous coproduct.  Since homogeneous coproduct are unique up
  to isomorphisms, we then conclude that $(\CC,1,\times,\CC_T,I,\ast)$
  has homogeneous countable coproducts.
\end{appendixproof}

\noindent
We often use the following `$\push{}$' operation to structure
interpretations of programs and GHL derivations. Let
$\delta_X\in\VV(X,X\times X)$ be the diagonal morphism.
Then $\push{}:\CC(X,Y)(m)\arrow \CC(X,X\times Y)(m)$ is defined as
$\push(f) = (X*f)\circ I\delta_X$.  When viewing $X$ as a set of
environments, $\push{(f)}$ may be seen as executing an effectful
procedure $f$ under an environment, then extending the environment with the return
value of $f$.  In a non-graded setting, the definition of $\push{}$ is
analogous.

\subsection{Semantics of \LangTitleTitle{} in Freyd
  Categories\conf{80}}\label{sec:semloop}

\newcommand{\framein}[1]{
  \noindent\fbox{\begin{minipage}{1.0\linewidth}#1\end{minipage}}
}

Towards the semantics of GHL, we first give a more
standard, non-graded categorical semantics of \Lang{}. We first
prepare the following data.
\begin{itemize}
\item[$\star$] A Freyd category $(\VV,1,\times,\CC,I,*)$ with
  countable coproducts.
\item[$\star$] A coproduct $\{\mtrue,\mfalse\in\VV(1,\bool)\}$ of $1$
  and $1$ in $\VV$.
  \item[$\star$] A coproduct $\{\mnat k\in\VV(1,\nat)\}_{k\in\natrep{}}$ of
    $\natrep$-many  $1$s in $\VV$.
\item[$\star$] An interpretation $\sem-$ of $\Sigma$ in $\VV$ such
  that
  \begin{align*}
    \sem\sbool&=\bool \qquad \sem\otrue=\mtrue\in\VV(1,\bool) \qquad \sem\ofalse=\mfalse\in\VV(1,\bool)\\
    \sem\snat&=\nat  \qquad \sem{\onat k}=\mnat k\in\VV(1,\nat).
  \end{align*}
\end{itemize}
For convenience, we let $\mem\teq \sem\mctx$ (Section
\ref{sec:prelim}), i.e., all relevant (mutable) program variables are in scope,
and write $\pi_v\in\VV(\mem,\sem{\mctx(v)})$ for
the projection morphism associated to a program variable $v\in\mctx$.

Pure expressions are interpreted as $\VV$-morphisms and
impure commands and procedures are interpreted as $\CC$-morphisms, of the form:
\begin{itemize}
\item[$\star$] (expressions) A morphism $\sem e \in \VV(\mem,
  \sem s)$ for all $e \in \Exp \Sigma \mctx s$; see Section \ref{sec:prelim}.
\item[$\star$] (commands) A morphism $\sem {c} \in\CC(\mem, 1)$ for each
  $c\in\CExp$.
\item[$\star$] (procedures) A morphism $\sem p\in\CC(\mem, \sem s)$ for each
  $s\in S$ and $p\in\PExp_s$.
\end{itemize}
For the interpretation of programs, we first define some auxiliary
morphisms. For all $v\in\mctx$, let
$\upd v\in\VV(\mem\times\sem{\mctx(v)},\mem)$ to be the unique
morphism (capturing memory updates) satisfying
$\pi_v\circ\upd v=\pi_2$ and $\pi_w\circ\upd v=\pi_w\circ\pi_1$ for
any $w\in\mctx$ such that $v\neq w$.  We define
$\sub v{e}\in\VV(\mem,\mem)$ by
$\sub v{e}\teq \upd v\circ\langle\id_\mem,\sem{e}\rangle$, which
updates the memory configuration at variable $v$ with the value of
$e$.

For the interpretation of conditional and loop commands, we need
coproducts over $\mem$.
Since $\VV$ is distributive, we can form a binary coproduct
$\mem \times \bool$ and a countable coproduct $\mem \times \nat$ with
injections respectively defined as ($\forall k \in \natrep{}$):
\begin{align*}
  & \ftrue \teq \langle\id_\mem,\mtrue\circ!_\mem\rangle
  \in\VV(\mem,\mem\times\bool)
  \quad
[k] \teq \langle\id_\mem,\mnat k\circ !_\mem\rangle
        \in\VV(\mem,\mem\times\nat)
  \\[-0.1em]
  & \ffalse \teq
  \langle\id_\mem,\mfalse\circ!_\mem\rangle
  \in\VV(\mem,\mem\times\bool)
\end{align*}
%
By Condition \ref{cond:icoprod} of
Definition \ref{def:freydcoprod}, these coproducts are mapped to
coproducts in $\CC$ with injections:
\begin{align*}
  & \{I(\ftrue),I(\ffalse)\in\CC(\mem,\mem\times\bool)\}, &
  & \{I(\fnat k)\in\CC(\mem,\mem\times\nat)~|~k\in\natrep{}\}.
\end{align*}
The cotuplings of these coproducts (written $[f, g]$ and $[ f^{(k)} ]_{k
  \in \natrep{}}$ respectively) are used next to interpret
conditionals and loops.

\noindent
We interpret a program $P$ of \Lang{} as a morphism
$\sem{P}\in\CC(\mem,\mem)$:
\begin{align*}
\setlength{\arraycolsep}{0.1em}
\begin{array}{rlrl}
  \sem{P;P'} & = \sem{P'}\circ\sem P &
  \sem{\synSkip{}} & =\id_\mem \\
  \sem{\synProc v p} & = I(\upd v)\circ\push{\sem p} &
  \sem{\synComm c} & = I(\pi_1)\circ\push{\sem c} \\
  \sem{\synExp v {\exprc{}}} & = I(\sub v{\exprc{}}) \\ 
  \sem{\synIf{e_b} P {P'}} & = [\,\sem P,\sem{P'}\,]\circ
  \push{(I\sem{e_b})} \\
  \sem{\synLoop{e_n}{P}} & = [\,\sem
  P^{(k)}\,]_{k\in\natrep{}}\circ\push{(I\sem{e_n})}
\end{array}
\end{align*}
%
%
Thus, the semantics of $\synLoop{e_n}{P}$ is such that, if the expression
$e_n$ evaluates to some natural number $\onat k$ then
$\synLoop{e_n}{P}$ is equivalent to the $k$-times sequential composition of $P$.

\section{Modelling Graded Hoare Logic}
\label{sec:model}

\newcommand{\actionTensor}{\circledast}

We now define the categorical model of GHL, building on the
non-graded Freyd semantics of Section~\ref{sec:semloop}.
Section~\ref{sec:interp-assertion-logic} first models the base
assertion logic, for which we use fibrations, giving an overview
of the necessary mathematical machinery for completeness.
Section~\ref{subsection:interpretation:GHL} then defines
the semantics of GHL and Section~\ref{sec:instances} instantiates
it for the examples discussed previously in Section~\ref{sec:main}.

\subsection{Interpretation of the Assertion Logic using Fibrations \conf{80}}
\label{sec:interp-assertion-logic}

Our assertion logic (Section~\ref{sec:main}) has logical
connectives of finite conjunctions, countable disjunctions,
existential quantification and an equality predicate. A suitable
categorical model for this fragment of first-order logic is offered by
a {\em coherent fibration} \cite[Def.~4.2.1]{jacobscltt},
extended with countable joins in each fibre.
We recap various key definitions and
terminology due to Jacobs' textbook~\cite{jacobscltt}.

In the following, let $\PP$ and
$\VV$ be categories and $p:\PP\arrow\VV$ a functor.

We can regard
functor $p$ as attaching {\em predicates} to each object in
$\VV$. When $p\psi=X$, we regard $\psi\in\PP$ as
a predicate over $X\in\VV$. When $f\in\PP(\psi,\phi)$ is a
morphism, we regard this as saying that $p f$ maps elements
satisfying $\psi$ to those satisfying $\phi$ in $\VV$.
%
Parallel to this view of functors assigning predicates
is the notion that entities in $\PP$ are `above' those in $\VV$
when they are mapped to by $p$.

\begin{defn}[`Aboveness']
  An object $\psi\in\PP$
  is said to be {\em above} an object $X\in\VV$
  if $p\psi=X$. Similarly, a morphism\footnote{The dot
    notation here introduces a new name and should not
    be understood as applying some mathematical operator on $f$.} $\dot f \in
  \PP(\psi, \phi)$ is said
  to be {\em above} a morphism $f$ in $\VV$ if $p\dot f=f \in \VV(p \psi, p \phi)$.  A morphism
  in $\PP$ is {\em vertical} if it is above an identity morphism.
  Given $\psi,\phi\in\PP$ and
  $f\in\VV(p\psi,p\phi)$, then we denote
  the \emph{set of all morphisms in $\PP$ above
    $f$} as $\PP_f(\psi,\phi) =
  \{\dot f\in\PP(\psi,\phi)~|~p\dot f=f\}$.
\end{defn}

\begin{defn}[Fibre category]
  A {\em fibre category} over $X\in\VV$ is a subcategory of $\PP$
  consisting of objects above $X$ and morphisms above $\id_X$.
  This subcategory is denoted by $\PP_X$, and thus
  the homsets of $\PP_X$ are
  $\PP_{X}(\psi, \phi) = \PP_{\id_X}(\psi, \phi)$.
\end{defn}

\noindent
We are ready to recall the central concept in fibrations:
{\em cartesian morphisms}.
\begin{defn}[Cartesian morphism]
  A morphism $\dot f\in\PP(\psi,\phi)$ is {\em
    cartesian} if for any $\alpha\in\PP$ and
  $g\in\VV(p\alpha,p\psi)$,
  the post-composition of $\dot f$ in $\PP$, regarded as a function of type
  $
    \dot f\circ -:\PP_g(\alpha,\psi)\arrow \PP_{g\circ p\dot f}(\alpha,\phi),
  $
  is a bijection. This amounts to the following {\em universal property}
  of cartesian morphism: for any $\dot h\in\PP(\alpha,\phi)$ above
  $g\circ pf$, there exists a unique morphism $\dot g\in\PP(\alpha,\psi)$
  above $g$ such that $\dot h=\dot f\circ\dot g$. Intuitively, $\dot f$
  represents the situation where $\psi$ is a {\em pullback} or {\em
    inverse image} of $\phi$ along $p\dot f$, and the universal property
  corresponds to that of pullback. 
  \end{defn}

\begin{defn}[Fibration]
  Finally, a functor $p:\PP\arrow\VV$ is a {\em fibration}
  if for any $\psi\in\PP$, $X \in \VV$, and $f\in\VV(X,p\psi)$, there exists an object
  $\phi\in\PP$ and a cartesian morphism $\dot f\in\PP(\phi,\psi)$ above
  $f$, called the {\em cartesian lifting} of $f$ with $\psi$. 
  We say that a fibration $p:\PP\arrow\VV$ is {\em posetal} if each
  $\PP_X$ is a poset, corresponding to the implicational order
  between predicates. When $\psi\le\phi$ holds in $\PP_X$, we denote the
  corresponding vertical morphism in $\PP$ as $\psi\lev \phi$.

  Posetal fibrations are always faithful. The cartesian lifting of
  $f\in\VV(X,p\psi)$ with $\psi$ uniquely exists. We thus write it by
  $\ol f\psi$, and its domain by $f^*\psi$.  It can be easily shown that
  for any morphism $f\in\VV(X,Y)$ in $\VV$, the assignment
  $\psi\in\PP_Y\mapsto f^*\psi\in\PP_X$ extends to a monotone function
  $f^*:\PP_Y\arrow\PP_X$. We call it the {\em reindexing function}
  (along $f$). Furthermore, the assignment $f\mapsto f^*$ satisfies the
  (contravariant) functoriality: $\id_X^*=\id_{\PP_X}$ and
  $(g\circ f)^*=f^*\circ g^*$. 
  A fibration is a {\em bifibration} if each
  reindexing function $f^*:\PP_Y\arrow\PP_X$ for $f\in\VV(X,Y)$ has a
  left adjoint, denoted by $f_*:\PP_X\arrow\PP_Y$.  $f_*\psi$ is always
  associated with a morphism $\ul f\psi:f_*\psi\arrow\psi$ above $f$,
  and this is called the {\em opcartesian lifting} of $f$ with $\psi$.
  For the universal property of the opcartesian lifting, see Jacobs~\cite[Def. 9.1.1]{jacobscltt}.
\end{defn}


\paragraph{Fibrations for our Assertion Logic}
It is widely known that {\em coherent fibrations} are suitable for
interpreting the $\wedge,\vee,\exists,{=}$-fragment of first-order
logic (see \cite[Chapter 4, Def. 4.2.1]{jacobscltt}).  Based on this
fact, we introduce a class of fibrations that are suitable for our
assertion logic---due to the countable joins of the assertion logic
we modify the definition of coherent fibration accordingly.
\begin{definition}
  A {\em fibration for assertion logic} 
  over $\VV$ is a
  posetal fibration $p:\PP\arrow\VV$ for cartesian $\VV$ with
  distributive countable coproducts, such that:
  \begin{enumerate}
  \item Each fibre poset $\PP_X$ is a distributive lattice with finite
    meets $\top_X,\wedge$ and countable joins $\bot_X,\vee$.
  \item Each reindexing function $f^*$ preserves finite meets and
    countable joins.
  \item The reindexing function $c_{X,Y}^*$ along the contraction
    $c_{X,Y}\teq\langle\pi_1,\pi_2, \pi_2\rangle \in \VV(X\times Y,
    X\times Y\times Y)$ has a left adjoint
    $\Eq_{X,Y}\dashv c_{X,Y}^*$.
    This satisfies {\em Beck-Chevalley
      condition} and {\em Frobenius property}; we refer
    to \cite[Definition 3.4.1]{jacobscltt}.

  \item The reindexing function $w_{X,Y}^*$ along the weakening
    $w_{X,Y}\teq\pi_1\in\VV(X\times Y,X)$ has a left adjoint
    $\exists_{X,Y}\dashv w_{X,Y}^*$. This satisfies {\em
      Beck-Chevalley condition} and {\em Frobenius property};
    we refer
    \cite[Definition 1.9.1, 1.9.12]{jacobscltt}.
  \end{enumerate}
\end{definition}
\noindent
This is almost the same as the definition of coherent fibrations
\cite[Definition 4.2.1]{jacobscltt}; the difference is that 1) the
base category $\VV$ has countable coproducts 2) we require each fibre
to be a poset; this makes object equalities hold on-the-nose, and 3) we
require each fibre to have countable joins. They will be combined with
countable coproducts of $\VV$ to equip $\PP$ with a countable
coproduct \cite{jacobscltt}. 
\begin{example}\label{ex:pred}
  A typical example of a fibration for assertion logic is the
  subobject fibration $\predfib{\Set}:\Pred\arrow\Set$; the category $\Pred$ has
  objects pairs $(X,\psi)$ of sets such that $\psi\subseteq X$, and
  morphisms of type $(X,\psi)\arrow(Y,\phi)$ as functions
  $f:X\arrow Y$ such that $f(\psi)\subseteq\phi$.
  The functor $\predfib{}$ sends $(X,\psi)$ to $X$ and $f$ to itself.
  More examples can be found in the work of Jacobs~\cite[Section 4]{jacobscltt}.
\end{example}

For a parallel pair of morphisms $f,g\in\VV(X,Y)$, we define the
equality predicate $\Eq(f,g)$ above $X$ to be
$\langle\id_X,f,g\rangle^*\Eq_{X,Y}(\top_{X\times Y})$ \cite[Notation
3.4.2]{jacobscltt}.  Intuitively, $\Eq(f,g)$ corresponds to the
predicate $\{x\in X~|~f(x)=g(x)\}$.  In this paper, we will use some
facts about the equality predicate shown by Jacobs~\cite[Proposition
3.4.6, Lemma 3.4.5, Notation 3.4.2, Example 4.3.7]{jacobscltt}.
\marginpar{\dnote{Hmm mysterious. What are these? We should at least
    point to where we use them}}

\begin{toappendix}
We formally state various known facts in coherent posetal fibrations:
\begin{proposition}\label{pp:coh}
  For any morphism $f,g\in\VV(X,Y)$, the following holds.
  \begin{enumerate}
  \item \cite[Proposition 3.4.6]{jacobscltt} \label{rem:rem346}
    $\Eq (f, g) = \Eq (\langle \id, f \rangle, \langle\id,g\rangle)$.
  \item \cite[Lemma 3.4.5]{jacobscltt} \label{rem:rem345}
    For any $\psi\in\PP_Y$,
    $f^{\ast} \psi \wedge \Eq (f, g) \leqslant g^{\ast} \psi$.
  \item \cite[Notation 3.4.2]{jacobscltt}
    For any $h\in\VV(Z,X)$,
    $h^*\Eq(f,g)=\Eq(f\circ h,g\circ h)$.
  \item \cite[Example 4.3.7]{jacobscltt} For $f\in\VV(X,Y)$, the
    function $f_*:\PP_X\arrow\PP_Y$ defined below is a left adjoint to
    the reindexing function $f^*:\PP_Y\arrow\PP_X$ (hence $p$ is a
    bifibration).
    \begin{equation}
      \label{eq:img}
      f_{\ast} \psi \triangleq \exists_{Y,X}((\pi_{Y, X}')^{\ast} \psi \wedge
      \Eq (f \circ \pi_{Y, X}', \pi_{Y, X})) .
    \end{equation}
  \item \cite{jacobscltt} \label{lem:coprod} The category $\PP$ has
    finite products and distributive countable coproducts that are
    strictly preserved by $p$.
  \end{enumerate}
\end{proposition}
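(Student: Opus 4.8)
The plan is to derive items~(1)--(4) from the cited results of \citet{jacobscltt} and to reserve the real work for item~(5). None of the statements (1)--(4) mentions countable joins: they concern only the coherent ($\top,\wedge,\exists,{=}$) structure, which a fibration for assertion logic contains verbatim. So (1) is Proposition~3.4.6, (2) is Lemma~3.4.5, (3) is the defining equation of $\Eq$ in Notation~3.4.2 together with functoriality of reindexing, and (4) is Example~4.3.7, where the formula in the statement exhibits $f_*$ as a genuine left adjoint to $f^*$, so that $p$ is a bifibration. The one point I would flag is that, because each fibre $\PP_X$ is a poset rather than merely a preorder, the canonical isomorphisms in Jacobs' formulations here degenerate to on-the-nose equalities.

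For item~(5) I would treat products and coproducts separately. Finite products in $\PP$ are the fibred ones: over a base with finite products, a posetal fibration whose fibres have finite meets preserved by reindexing has finite products in its total category, with $\psi\times\phi\teq\pi_1^*\psi\wedge\pi_2^*\phi$ above $p\psi\times p\phi$ and terminal object $\top_1$ above $1$; the projections are cartesian liftings precomposed with fibre inclusions, and $p$ preserves all of this strictly---this part I would only sketch. For countable coproducts, given a countable family $\{\psi_i\}_{i\in A}$ with $p\psi_i=X_i$, I form $X\teq\coprod_{i\in A}X_i$ in $\VV$ with coprojections $\iota_i$ and set $\coprod_{i\in A}\psi_i\teq\bigvee_{i\in A}(\iota_i)_*\psi_i\in\PP_X$; this is legitimate because each fibre has countable joins and (4) supplies the direct images $(\iota_i)_*$. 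The $i$-th coprojection is the opcartesian lifting $\psi_i\to(\iota_i)_*\psi_i$ followed by the fibre inclusion $(\iota_i)_*\psi_i\lev\bigvee_j(\iota_j)_*\psi_j$. To verify the universal property, fix $\phi$ above $Z$ and a family $(g_i)_i$ with $g_i\in\VV(X_i,Z)$, and write $g\teq[g_i]_{i\in A}\in\VV(X,Z)$; via the cartesian lifting of $g_i=g\circ\iota_i$ at $\phi$ and posetality, a morphism $\psi_i\to\phi$ above $g_i$ exists (and is then unique) iff $\psi_i\le\iota_i^*(g^*\phi)$ in $\PP_{X_i}$, which by the adjunction $(\iota_i)_*\dashv\iota_i^*$ holds iff $(\iota_i)_*\psi_i\le g^*\phi$ in $\PP_X$. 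Hence a cocone $(\psi_i\to\phi)_i$ whose $i$-th leg lies above $g_i$ exists iff $(\iota_i)_*\psi_i\le g^*\phi$ for all $i$, iff $\bigvee_i(\iota_i)_*\psi_i\le g^*\phi$, iff there is a (necessarily unique) morphism $\coprod_i\psi_i\to\phi$ above $g$; since every $\VV$-morphism $X\to Z$ arises as such a $g$, this is the universal property. Strict preservation by $p$ is immediate, since $p(\bigvee_i(\iota_i)_*\psi_i)=X$.

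It then remains to check that these coproducts are distributive. Writing $\pi_1,\pi_2$ for the projections of $X\times Y$ and $\pi_1',\pi_2'$ for those of $X_i\times Y$, I would compute $(\coprod_i\psi_i)\times\phi=\pi_1^*(\bigvee_i(\iota_i)_*\psi_i)\wedge\pi_2^*\phi$ above $X\times Y$, pull the reindexing inside the join and distribute $\wedge$ over it (reindexing preserves countable joins, and each fibre is a distributive lattice), and then, for each $i$, rewrite $\pi_1^*(\iota_i)_*$ as $(\iota_i\times\id_Y)_*(\pi_1')^*$ by Beck--Chevalley along the square relating $\iota_i$ to the projections (automatically a pullback, being a pullback of $\pi_1$) and apply Frobenius for the direct image $(\iota_i\times\id_Y)_*$, landing on $\bigvee_i(\iota_i\times\id_Y)_*(\psi_i\times\phi)=\coprod_i(\psi_i\times\phi)$, modulo the distributivity isomorphism $(\coprod_iX_i)\times Y\cong\coprod_i(X_i\times Y)$ of $\VV$; here Beck--Chevalley and Frobenius for arbitrary direct images $f_*$ follow from the formula for $f_*$ in~(4) combined with the assumed Beck--Chevalley and Frobenius for $\exists$ and $\Eq$. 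I expect this last computation to be the main obstacle: aligning Beck--Chevalley along the coproduct injections with Frobenius for the direct-image functors, and checking that the resulting isomorphism is genuinely compatible with the distributivity iso of $\VV$, is the only part requiring real care, since (1)--(4) are essentially citations and the coproduct construction is the standard bifibration argument.
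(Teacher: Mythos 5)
Your proposal matches the paper's treatment: items (1)--(4) are handled as citations to Jacobs, and for item (5) the paper's (very brief) proof exhibits exactly the same countable coproduct $\bigvee_{i\in\omega}(\iota_i)_*\psi_i$ with coprojections given by the opcartesian lifting $\ul{\iota_i}(\psi_i)$ followed by the fibre ordering morphism $\lev$. Your additional verifications --- the universal property via the adjunction $(\iota_i)_*\dashv\iota_i^*$ together with posetality/faithfulness, the fibred finite products $\pi_1^*\psi\wedge\pi_2^*\phi$, and the distributivity computation via Beck--Chevalley and Frobenius for the direct images --- are omitted from the paper's one-line proof but correctly flesh out the same construction.
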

\begin{proof}
  We briefly prove point (5). Let $\{\psi_i\}_{i\in\natrep{}}\in\PP$.
  For a countable coproduct
  $\{\iota_i:p \psi_i\arrow X\}_{i\in\natrep{}}$ in $\VV$, the following
  family is a countable coproduct in $\PP$ above coproduct
  $\{\iota_i\}_{i\in\natrep{}}$:
  $
  \{
  \xymatrix{
    \psi_i \rrrh{\ul{\iota_i}(\psi_i)} & &
    (\iota_i)_*\psi_i \rrrh{\lev} & &
    \bigvee_{i\in\natrep{}}(\iota_i)_* \psi_i
  }\}_{i\in\natrep{}}
  $.
  \qed
\end{proof}
\end{toappendix}

\subsubsection{The Semantics of Assertion Logic\conf{80}}

We move to the semantics of our assertion logic in a fibration
$p:\PP\arrow\VV$ for assertion logic. The basic idea is to interpret a
formula $\psi\in\Fml{\lsig}\Gamma$ as an object in $\PP_{\sem\Gamma}$,
and an entailment $\Gamma~|~\psi\vdash\phi$ as the order
relation $\sem\psi\le\sem\phi$ in $\PP_{\sem\Gamma}$. The semantics is
given by the following interpretation of the data specifying the
assertion logic (given in Section~\ref{sec:blogic}):
\begin{itemize}
\item [$\star$] A fibration $p:\PP\arrow\VV$ for assertion logic.
\item [$\star$] An interpretation $\sem-$ of $\lsig$ in $\PP$ that
  coincides with the one $\sem-$ of $\Sigma$ in $\VV$.
\item [$\star$] An object $\sem P\in\PP_{\sem{\mathit{par}(P) }}$ for each
  atomic proposition $P\in P_l$ (recall $\mathit{par}$ assigns
  input sorts to atomic propositions in $P_l$, parameterising
  the logic).
\item [$\star$] We require that for any $\Gamma\in\Ctx\lsig$ and
  $(\psi,\phi)\in\Axiom(\Gamma)$, $\sem\psi\le\sem\phi$ holds in
  $\PP_{\sem\Gamma}$. This expresses an implicational axiom in the coherent logic.
\end{itemize}
The interpretation $\sem\psi$ of
$\psi\in\Fml\lsig\Gamma$ is inductively defined as a $\PP_{\sem\Gamma}$-object:
\begin{align*}
  & \textstyle
    \sem{P(t_1,\cdots,t_n)}
    =\langle\sem{t_1},\cdots,\sem{t_n}\rangle^*\sem P
    \qquad
    \sem{t=u}
    =\Eq(\sem t,\sem u)
  \\
  & \textstyle
    \sem{\bigwedge\psi_i}
    =\bigwedge\sem{\psi_i}
    \qquad
    \sem{\bigvee\psi_i}
    =\bigvee\sem{\psi_i}
    \qquad
    \sem{\ex{x:s}\psi}
    =\exists_{\sem\Gamma,\sem s}\sem\psi
\end{align*}

\subsection{Interpretation of Graded Hoare Logic\conf{80}}
\label{subsection:interpretation:GHL}
\begin{toappendix}
	\subsection*{Interpretation of Graded Hoare Logic}
\end{toappendix}

We finally introduce the semantics of Graded Hoare logic.  This
semantics interprets derivations of GHL judgements
$\tripleV{m}{\psi}{P}{\phi}$ as $m$-graded morphisms in a graded
category. Moreover, it is built {\em above} the interpretation
$\sem P\in\CC(\mem,\mem)$ of the program $P$ in the non-graded
semantics introduced in Section \ref{sec:semloop}.
The underlying structure is given as a combination of a
fibration for the assertion logic and a graded category over $\CC$, as depicted
in \eqref{eq:square} (Section~\ref{sec:overview}, p.~\pageref{eq:square}).
\begin{definition}\label{def:logstr}
  A {\em \ghlstr{}} over a Freyd category
  $(\VV,1,\times,\CC,I,*)$ with countable coproducts and a fibration
  $p:\PP\arrow\VV$ for assertion logic comprises:
  \begin{enumerate}
  \item An $M$-graded Freyd category
    $(\PP,\dot 1,\dtimes,\EE,\dot I,\actionTensor)$ with homogeneous
    countable coproducts.

  \item A function
    $q_{\psi,\phi,m}:\EE(\psi,\phi)(m)\arrow\CC(p\psi,p\phi)$ (subscripts may
    be omitted), which maps to the base denotational model,
  erasing assertions and grades.
  \end{enumerate}
  The above data satisfy the following properties:
  \begin{enumerate}
  \item That $q$ behaves `functorialy' preserving structure from $\EE$ to $\VV$:
  \begin{align*}
    & q(\id_\phi)=\id_{p\phi},\quad q(g\circ f)=qg\circ qf,\quad 
      q_{\psi,\phi,n}(\up \psi \phi m n f) = q_{\psi,\phi,m} f\\[-0.4em]
    & q(\dot I f)=I(p f),\quad q(f\actionTensor{} g)=p f * q g
  \end{align*}
  \item For any homogeneous countable coproduct
    $\{\iota_i\in\EE(\psi_i,\phi)(1)\}_{i\in\Lambda}$,
    $\{q\iota_i\in\CC(p\psi_i,p\phi)\}_{i\in\Lambda}$ is a countable
    coproduct.
  \item (Ex falso quodlibet) $q_{\bot_X,\phi,m}:\EE(\bot_X,\phi)(m)\arrow\CC(X,p\phi)$ is a
    bijection.
  \end{enumerate}
\end{definition}
\noindent
The last statement asserts that if the precondition is the least
element $\bot_X$ in the fibre over $X\in\VV$, which represents the
false assertion, we trivially conclude any postcondition $\phi$
and grading $m$ for any morphisms of type $X\arrow p\phi$ in $\CC$.

\newcommand{\esem}[1]{\langle #1\rangle}

The semantics of GHL then requires a
graded Freyd category with countable coproducts, and morphisms in the
graded category guaranteeing a sound model of the effectful primitives
(commands/procedures), captured by the data:
\begin{itemize}[leftmargin=1em]
\item [$\star$] A \ghlstr{}
  $(\PP,\dot 1,\dtimes,\EE,\dot I,\actionTensor,q)$ over
  the Freyd category $(\VV,1,\times,\CC,I,*)$ with countable
  coproducts and the fibration $p:\PP\arrow\VV$ for assertion logic.

\item [$\star$] For each $c\in \availcom(\psi,m)$ a morphism
  $\esem c\in\EE(\sem\psi,\dot 1)(m)$ such that
  $q\esem c=\sem c$.

\item [$\star$] For each $p\!\in\!\availproc s(\psi,m,\phi)$ a morphism
  ${\esem p}\!\in\!\EE(\sem\psi,\sem\phi)(m)$ such that $q\esem p=\sem p$.
\end{itemize}
where $\sem{c}$, $\sem{p}$ and later $\sem{e}$ are from the underlying non-graded model
(Sec.~\ref{sec:semloop}).

We interpret a derivation of GHL judgement
$\tripleV{m}{\phi}{P}{\psi}$ as a morphism
\begin{displaymath}
  \sem{\tripleV{m}{\phi}{P}{\psi}}\in\EE(\sem\phi,\sem\psi)(m)
  ~~
  \text{such that}
  ~~
  q_{\sem\phi,\sem\psi,m}\sem{\tripleV{m}{\phi}{P}{\psi}}=\sem P.
\end{displaymath}
The constraint on the right is guaranteed by the soundness of the
interpretation (Theorem \ref{th:sound}).  From the
functor-as-refinement viewpoint \cite{DBLP:conf/popl/MelliesZ15}, the
interpretation $\sem{\tripleV{m}{\phi}{P}{\psi}}$ witnesses that
$\sem P$ respects refinements $\phi$ and $\psi$ of $\mem$, and
additionally it witnesses the grade of $\sem P$ being $m$.
We first cover the simpler cases of the interpretation of GHL
derivations:
\begin{align*}
  \sem{\tripleV 1\psi\synSkip\psi}
  &= \id_{\sem\psi} \\
  \sem{\tripleV{m_1 \cdot m_2}\psi{\synSeq{P_1}{P_2}}\theta}
  &= \sem{\tripleV{m_2}{\psi_1}{P_2}{\theta}}\circ\sem{\tripleV{m_1}{\psi}{P_1}{\psi_1}} \\
  \sem{\tripleV{1}{\psi[\exprc{}/v]}{\synExp v{\exprc{}}}{\psi}}
  &= \dot I(\ol{\sub v{\exprc{}}}\sem\psi) \\
  \sem{\tripleV{m}{\psi}{\synComm c}{\psi}}
  &=\dot I(\pi_1)\circ\push{\esem c}\\
  \sem{\tripleV m \psi {\synProc v p} {(\ex v\psi)\wedge\phi}}
  &=\dot I(\ul{\upd v}(\sem\psi\dtimes\sem\phi))\circ\push{\esem p} \\
  \lefteqn{\hspace{-14em} \sem{\tripleV {m'}{\psi'}{P}{\phi'}}
  =\dot I(\sem\phi\lev\sem{\phi'})\circ \up {\sem\psi}{\sem\phi} m {m'}\sem{\tripleV {m}{\psi}{P}{\phi}}\circ\dot I(\sem{\psi'}\lev\sem\psi)
  }
\end{align*}
The morphisms with upper and lower lines are cartesian
liftings and op-cartesian liftings in the fibration $p:\PP\arrow\VV$
of the assertion logic. The codomain of the interpretation of the
procedure call $\synProc v p$ is equal to
$\sem{(\ex v\psi)\wedge\phi}$.

The above interpretations largely follow the form of the underlying model of
Section~\ref{sec:semloop}, with the additional information and
underlying categorical machinery for grades and assertions here; we
now map to $\EE$. The interpretation of
conditional and loop commands requires some more reasoning.

\renewcommand{\Im}{\mathsf{Im}}
\paragraph{Conditionals}
Let $p_1,p_2$ be the interpretations of each branch of the
conditional command:
\begin{align*}
  p_1&=\sem{\tripleV m {\psi\wedge e_b=\otrue} {P_1} \phi}
       \in\EE(\sem{\psi\wedge e_b=\otrue},\sem\phi)(m)
  \\
  p_2&=\sem{\tripleV m {\psi\wedge e_b=\ofalse} {P_2} \phi}
       \in\EE(\sem{\psi\wedge e_b=\ofalse},\sem\phi)(m)
\end{align*}
We consider the cocartesian lifting
$\ul{\langle\id_\mem,\sem{e_b}\rangle}\sem\psi:\sem\psi\arrow
\langle\id_\mem,\sem{e_b}\rangle_*\sem\psi$. We name its codomain
$\Im$. Next, cartesian morphisms
$\ol{\ftrue}(\Im):\ftrue^*\Im\arrow \Im$ and
$\ol{\ffalse}(\Im):\ffalse^*\Im\arrow \Im$ in $\PP$ are above the
coproduct $(\mem\times\bool,\ftrue,\ffalse)$ in $\VV$. Then the
interpretations of the preconditions of $P_1,P_2$ are inverse images
of $\Im$ along $\ftrue,\ffalse:\mem\arrow\mem\times\bool$:
\begin{lemma}
  $\sem{\psi\wedge e_b=\otrue}=\ftrue^*\Im$ and
  $\sem{\psi\wedge e_b=\ofalse}=\ffalse^*\Im$.
\end{lemma}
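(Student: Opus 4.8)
The plan is to prove the two equalities separately; we spell out $\sem{\psi\wedge e_b=\otrue}=\ftrue^*\Im$, the case of $\ffalse$ being obtained by replacing $\mtrue,\otrue$ by $\mfalse,\ofalse$ throughout. Write $e\teq\sem{e_b}\in\VV(\mem,\bool)$ and $t\teq\mtrue\circ!_\mem\in\VV(\mem,\bool)$; since the nullary operator $\otrue$ is interpreted in context $\mctx$ as $\sem\otrue\circ!_\mem=t$, the semantics of the assertion logic gives $\sem{\psi\wedge e_b=\otrue}=\sem\psi\wedge\Eq(e,t)$ in $\PP_\mem$. Also $\ftrue=\langle\id_\mem,t\rangle$ and, by construction, $\Im=\langle\id_\mem,e\rangle_*\sem\psi$. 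Since $p$ is posetal it suffices to prove the two inequalities $\sem\psi\wedge\Eq(e,t)\le\ftrue^*\Im$ and $\ftrue^*\Im\le\sem\psi\wedge\Eq(e,t)$ in $\PP_\mem$. The only ingredients are: the adjunction $\langle\id_\mem,e\rangle_*\dashv\langle\id_\mem,e\rangle^*$ coming from $p$ being a bifibration (Proposition \ref{pp:coh}(4)); contravariant functoriality and monotonicity of reindexing; and the facts $\Eq(f,g)=\Eq(\langle\id,f\rangle,\langle\id,g\rangle)$, $f^*\rho\wedge\Eq(f,g)\le g^*\rho$, and $h^*\Eq(f,g)=\Eq(f\circ h,g\circ h)$ recorded in Proposition \ref{pp:coh}(1)--(3).

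For the inequality $\sem\psi\wedge\Eq(e,t)\le\ftrue^*\Im$, the unit of $\langle\id_\mem,e\rangle_*\dashv\langle\id_\mem,e\rangle^*$ gives $\sem\psi\le\langle\id_\mem,e\rangle^*\Im$. By Proposition \ref{pp:coh}(1), $\Eq(e,t)=\Eq(\langle\id_\mem,e\rangle,\langle\id_\mem,t\rangle)=\Eq(\langle\id_\mem,e\rangle,\ftrue)$, so instantiating Proposition \ref{pp:coh}(2) with $f=\langle\id_\mem,e\rangle$, $g=\ftrue$ and $\rho=\Im\in\PP_{\mem\times\bool}$ yields
\[
  \sem\psi\wedge\Eq(e,t)\;\le\;\langle\id_\mem,e\rangle^*\Im\wedge\Eq(\langle\id_\mem,e\rangle,\ftrue)\;\le\;\ftrue^*\Im .
\]

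For the converse we prove $\ftrue^*\Im\le\sem\psi$ and $\ftrue^*\Im\le\Eq(e,t)$ in turn. Since $\pi_1\circ\langle\id_\mem,e\rangle=\id_\mem$, the adjunction gives $\Im=\langle\id_\mem,e\rangle_*\sem\psi\le\pi_1^*\sem\psi$; reindexing along $\ftrue$ and using $\pi_1\circ\ftrue=\id_\mem$ together with functoriality of reindexing gives $\ftrue^*\Im\le\ftrue^*\pi_1^*\sem\psi=(\pi_1\circ\ftrue)^*\sem\psi=\sem\psi$. For the second bound, the adjunction reduces $\Im=\langle\id_\mem,e\rangle_*\sem\psi\le\Eq(e\circ\pi_1,\pi_2)$ to $\sem\psi\le\langle\id_\mem,e\rangle^*\Eq(e\circ\pi_1,\pi_2)$, whose right-hand side equals $\Eq(e\circ\pi_1\circ\langle\id_\mem,e\rangle,\pi_2\circ\langle\id_\mem,e\rangle)=\Eq(e,e)$ by Proposition \ref{pp:coh}(3); and $\top_\mem\le\Eq(e,e)$ holds because $\langle\id_\mem,e,e\rangle=c_{\mem,\bool}\circ\langle\id_\mem,e\rangle$, so $\Eq(e,e)=\langle\id_\mem,e\rangle^*c_{\mem,\bool}^*\Eq_{\mem,\bool}(\top_{\mem\times\bool})\ge\langle\id_\mem,e\rangle^*\top_{\mem\times\bool}=\top_\mem$ using the unit of $\Eq_{\mem,\bool}\dashv c_{\mem,\bool}^*$. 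Reindexing $\Im\le\Eq(e\circ\pi_1,\pi_2)$ along $\ftrue$ and using Proposition \ref{pp:coh}(3) with $\pi_1\circ\ftrue=\id_\mem$ and $\pi_2\circ\ftrue=t$ then gives $\ftrue^*\Im\le\Eq(e,t)$. Taking the meet of the two bounds yields $\ftrue^*\Im\le\sem\psi\wedge\Eq(e,t)$, completing the argument.

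The conceptual content --- that restricting the direct image $\langle\id_\mem,\sem{e_b}\rangle_*\sem\psi$ along the $\mtrue$-coprojection $\ftrue$ recovers exactly ``$\psi$ holds and $e_b=\otrue$'' --- is immediate; I expect the only real work to be bookkeeping, namely checking that every reindexing, adjunction unit, and application of Proposition \ref{pp:coh}(1)--(4) is used at a morphism of the correct type, and making sure the identifications $\pi_1\circ\ftrue=\id_\mem$, $\pi_2\circ\ftrue=t$ and $\pi_1\circ\langle\id_\mem,e\rangle=\id_\mem$ are invoked in the right places.
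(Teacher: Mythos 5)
Your proof is correct, and the first inequality $\sem\psi\wedge\Eq(\sem{e_b},\mtrue\circ{!})\le\ftrue^*\Im$ is argued exactly as in the paper (unit of the adjunction $\langle\id_\mem,\sem{e_b}\rangle_*\dashv\langle\id_\mem,\sem{e_b}\rangle^*$, then Proposition~\ref{pp:coh}(1) and (2)). Where you genuinely diverge is the converse direction. The paper unfolds $\Im$ via the explicit pushforward formula \eqref{eq:img}, commutes $\ftrue^*$ past $(\pi')_*$ using the Beck--Chevalley condition for $\exists$, and then massages the resulting equality predicates until the Frobenius-style cancellation produces $\psi\wedge\Eq(\sem{e_b},\mtrue\circ{!})$. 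You instead split the target meet into its two conjuncts and prove $\ftrue^*\Im\le\sem\psi$ and $\ftrue^*\Im\le\Eq(\sem{e_b},\mtrue\circ{!})$ separately, in each case transposing a bound on $\Im=\langle\id_\mem,\sem{e_b}\rangle_*\sem\psi$ across the adjunction to a trivially-verified bound on $\sem\psi$ (using $\pi_1\circ\langle\id_\mem,\sem{e_b}\rangle=\id_\mem$ for the first, and reflexivity $\top_\mem\le\Eq(\sem{e_b},\sem{e_b})$ for the second), and then reindexing along $\ftrue$. Your route avoids both the explicit $\exists$-formula for $f_*$ and the Beck--Chevalley/Frobenius conditions, needing only the bifibration adjunction, functoriality of reindexing, reflexivity of $\Eq$, and Proposition~\ref{pp:coh}(3); it is arguably more elementary and more modular, at the cost of not exhibiting the single computation that the paper's chain of rewrites provides. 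Both arguments are sound; the one point worth making explicit in yours is that $\top_\mem\le\Eq(\sem{e_b},\sem{e_b})$ uses that reindexing preserves the top element, which is covered by the requirement that reindexing preserves finite meets.
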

\begin{appendixproof}
  In this proof $\sem\psi,\sem{e_b}$ are abbreviated to $\psi,e_b$
  respectively. First, we obtain one direction:
  \begin{eqnarray*}
    & & \psi \wedge \Eq (e_b, \mtrue\circ {!})\\
    \text{Proposition \ref{pp:coh}-\ref{rem:rem346}}
    & = & \psi \wedge \Eq (\langle\id,e_b\rangle, \ftrue)\\
    & \leqslant & \langle\id,e_b\rangle^{\ast} \Im \wedge \Eq (\langle\id,e_b\rangle, \ftrue)\\
    \text{Proposition \ref{pp:coh}-\ref{rem:rem345}}
    & \leqslant & \ftrue^{\ast} \Im .
  \end{eqnarray*}
  We show the other direction
  $\ftrue^{\ast} \Im \leqslant \psi \wedge \Eq (e_b,
  \mtrue\circ {!})$.
  \begin{eqnarray*}
    & & \ftrue^{\ast} \Im\\
    \text{Pushforward \eqref{eq:img}}
    & = & \ftrue^{\ast} (\pi')_{\ast} (\pi^{\ast} \psi \wedge \Eq (\langle\id,e_b\rangle \circ \pi, \pi'))\\
    \text{Beck-Chevalley}& = & (\pi')_{\ast} (\mem \times \ftrue)^{\ast} (\pi^{\ast} \psi\wedge \Eq (\langle\id,e_b\rangle \circ \pi, \pi'))\\
    & = & (\pi')_{\ast} ((\mem \times \ftrue)^{\ast}\pi^{\ast} \psi\wedge (\mem \times \ftrue)^{\ast}\Eq (\langle\id,e_b\rangle \circ \pi, \pi'))\\
    \text{PBEQ}& = & (\pi')_{\ast} (\pi^{\ast} \psi \wedge \Eq (\langle\id,e_b\rangle \circ \pi, \ftrue \circ \pi'))\\
    & = & (\pi')_{\ast} (\pi^{\ast} \psi \wedge \Eq (\langle \pi, e_b \circ \pi \rangle, \langle \pi', \mtrue\circ {!} \circ \pi \rangle))\\
    \text{Proposition \ref{pp:coh}-\ref{rem:rem346}}
    & = & (\pi')_{\ast} (\pi^{\ast} \psi \wedge \Eq (e_b \circ \pi, \mtrue\circ {!} \circ \pi') \wedge \Eq (\pi, \pi'))\\
    & = & (\pi')_{\ast} (\pi^{\ast} \psi \wedge \Eq (e_b \circ \pi, \mtrue\circ {!} \circ \pi) \wedge \Eq (\pi, \pi'))\\
    \text{PBEQ}& = & (\pi')_{\ast} (\pi^{\ast} \psi \wedge \pi^*\Eq (e_b, \mtrue\circ {!}) \wedge \Eq (\pi, \pi'))\\
    \text{Proposition \ref{pp:coh}-\ref{rem:rem345}}
    & = & (\pi')_{\ast} ((\pi')^{\ast} \psi \wedge (\pi')^{\ast} \Eq (e_b, \mtrue\circ {!}))\\
    & \leqslant & \psi \wedge \Eq (e_b, \mtrue\circ {!}) .
  \end{eqnarray*}
\end{appendixproof}
\noindent
The side condition of the conditional rule ensures that
$(\Im,\ol{\ftrue}(\Im),\ol{\ffalse}(\Im))$ is a coproduct in $\PP$:
\begin{lemma}
  $\mctx~|~\psi\vdash e_b=\otrue\vee e_b=\ofalse$ implies
  $\Im=\ftrue_*\ftrue^*\Im\vee\ffalse_*\ffalse^*\Im$.
\end{lemma}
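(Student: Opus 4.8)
The plan is to prove the two inclusions in the fibre $\PP_{\mem\times\bool}$ separately. The inclusion $\ftrue_*\ftrue^*\Im \vee \ffalse_*\ffalse^*\Im \le \Im$ I would get for free, with no use of the side condition: it is the join of the counits $\ftrue_*\ftrue^*\Im \le \Im$ and $\ffalse_*\ffalse^*\Im \le \Im$ of the adjunctions $\ftrue_* \dashv \ftrue^*$ and $\ffalse_* \dashv \ffalse^*$ (both counits being vertical morphisms, i.e.\ inequalities inside the fibre). So the real content is the reverse inclusion $\Im \le \ftrue_*\ftrue^*\Im \vee \ffalse_*\ffalse^*\Im$, and this is where the hypothesis $\mctx~|~\psi\vdash e_b=\mtrue\vee e_b=\mfalse$ comes in.

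Before that I would isolate a small general fact about the posetal bifibration $p$: \emph{if $f,g\in\VV(X,Y)$ and $\chi\in\PP_X$ satisfies $\chi\le\Eq(f,g)$, then $f_*\chi=g_*\chi$}. This follows because for every $\theta\in\PP_Y$ we have $\chi\le f^*\theta \iff \chi\le g^*\theta$: if $\chi\le f^*\theta$, then $\chi\le f^*\theta\wedge\Eq(f,g)\le g^*\theta$ by Proposition~\ref{pp:coh}-\ref{rem:rem345}, and the converse is symmetric (using $\Eq(f,g)=\Eq(g,f)$, immediate from the definition of the equality predicate). Hence $f_*\chi$ and $g_*\chi$ have the same upper bounds in $\PP_Y$, so by $f_*\dashv f^*$ and $g_*\dashv g^*$ they are equal.

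Then, abbreviating $e\teq\sem{e_b}$, $\psi\teq\sem\psi$ and $\gamma\teq\langle\id_\mem,e\rangle$ (so that $\Im=\gamma_*\psi$), and writing $\psi_t\teq\psi\wedge\Eq(e,\mtrue\circ{!})$ and $\psi_f\teq\psi\wedge\Eq(e,\mfalse\circ{!})$ --- which by the preceding lemma are precisely $\ftrue^*\Im$ and $\ffalse^*\Im$ --- I would argue as follows. The hypothesis unfolds, via the semantics of the assertion logic together with $\sem{t=u}=\Eq(\sem t,\sem u)$, to $\psi\le\Eq(e,\mtrue\circ{!})\vee\Eq(e,\mfalse\circ{!})$ in $\PP_\mem$; since $\PP_\mem$ is a distributive lattice this yields $\psi=\psi_t\vee\psi_f$. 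As $\gamma_*$ is a left adjoint it preserves this binary join, so $\Im=\gamma_*\psi_t\vee\gamma_*\psi_f$. By Proposition~\ref{pp:coh}-\ref{rem:rem346}, $\Eq(e,\mtrue\circ{!})=\Eq(\langle\id_\mem,e\rangle,\langle\id_\mem,\mtrue\circ{!}\rangle)=\Eq(\gamma,\ftrue)$, so $\psi_t\le\Eq(\gamma,\ftrue)$; applying the general fact with $f=\gamma$ and $g=\ftrue$ gives $\gamma_*\psi_t=\ftrue_*\psi_t=\ftrue_*\ftrue^*\Im$, and symmetrically $\gamma_*\psi_f=\ffalse_*\ffalse^*\Im$. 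Hence $\Im=\ftrue_*\ftrue^*\Im\vee\ffalse_*\ffalse^*\Im$.

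The main obstacle I anticipate is the general fact --- making sure $f_*$ and $g_*$ really agree on predicates contained in $\Eq(f,g)$, which boils down to the equality predicate behaving like an equaliser; Proposition~\ref{pp:coh}-\ref{rem:rem345} (together with point~\ref{rem:rem346} and symmetry of $\Eq$) packages exactly what is needed. The remaining ingredients --- distributivity of the fibre $\PP_\mem$, preservation of binary joins by the left adjoint $\gamma_*$, and the identifications $\psi_t=\ftrue^*\Im$, $\psi_f=\ffalse^*\Im$ supplied by the preceding lemma --- are routine.
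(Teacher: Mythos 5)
Your proof is correct and follows the same skeleton as the paper's: the inclusion $\ftrue_*\ftrue^*\Im\vee\ffalse_*\ffalse^*\Im\le\Im$ is immediate from the counits, and for the converse both arguments rewrite $\sem\psi$ as the join of $\sem{\psi\wedge e_b=\otrue}$ and $\sem{\psi\wedge e_b=\ofalse}$ by distributivity and push that join through the left adjoint $\langle\id_\mem,\sem{e_b}\rangle_*$. The only divergence is in identifying the two summands: the paper proves the inequality $\langle\id_\mem,\sem{e_b}\rangle_*\sem{\psi\wedge e_b=\otrue}\le\ftrue_*\ftrue^*\Im$ by an adjoint-mate computation chaining units of the adjunctions with Proposition~\ref{pp:coh}-\ref{rem:rem345}, whereas you isolate the reusable fact that $\chi\le\Eq(f,g)$ forces $f_*\chi=g_*\chi$ and obtain the corresponding equality outright; both routes use exactly the same ingredients (Proposition~\ref{pp:coh} points \ref{rem:rem346} and \ref{rem:rem345}, plus symmetry of $\Eq$, which is not quite ``immediate from the definition'' as you claim but is a standard fact that the paper's own proof also invokes silently when it inserts $\Eq(\ftrue,\langle\id,e_b\rangle)$ alongside $\Eq(\langle\id,e_b\rangle,\ftrue)$).
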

\begin{appendixproof}
  In this proof we abbreviate $\sem\psi,\sem{e_b}$ to $\psi,e_b$ respectively.
  The direction $\ftrue_*\ftrue^*\Im\vee\ffalse_*\ffalse^*\Im\le\Im$ is
  immediate. We thus show the converse.  Assume
  $\psi \leqslant \Eq (e_b, \mtrue \circ !) \vee \Eq (e_b, \mfalse
  \circ !) . $ This is equivalent to
  \[ \psi = (\psi \wedge \Eq (e_b, \mtrue \circ !)) \vee (\psi \wedge
    \Eq (e_b, \mfalse \circ !)) . \] By applying
  $\langle \id, e \rangle_{\ast}$ both sides, which preserves joins,
  we obtain
  \[ \Im = \langle \id, e_b \rangle_{\ast} \psi = \langle
    \id, e_b \rangle_{\ast} (\psi \wedge \Eq (e_b, \mtrue
    \circ !)) \vee \langle \id, e_b \rangle_{\ast} (\psi \wedge
    \Eq (e_b, \mfalse \circ !)) ; \]
  From this, we aim to show
  \[ \langle \id, e_b \rangle_{\ast} (\psi \wedge \Eq (e_b,
    \mtrue \circ !)) \leqslant \ftrue_{\ast} \ftrue^{\ast}
    \Im ; \]
  the case for $\mfalse$ and $\ffalse$ is similar. Using the adjoint
  mate, the goal is equivalent to
  \[ \psi \wedge \Eq (e_b, \mtrue \circ !) \leqslant \langle
    \id, e_b \rangle^{\ast} \ftrue_{\ast} \ftrue^{\ast} \langle
    \id, e_b \rangle_{\ast} \psi . \]

  Then we proceed as follows.
  \begin{eqnarray*}
    & & \psi \wedge \Eq (e_b, \mtrue \circ !)\\
    \text{Proposition \ref{pp:coh}-\ref{rem:rem346}}
    & = & \psi \wedge \Eq (\langle \id, e_b \rangle, \ftrue) \wedge \Eq (\ftrue, \langle \id, e_b \rangle)\\
    & \leqslant & \langle \id, e_b \rangle^{\ast} \langle \id, e_b
                  \rangle_{\ast} \psi \wedge \Eq (\langle \id, e_b \rangle,
                  \ftrue) \wedge \Eq (\ftrue, \langle \id, e_b \rangle)\\
    \text{Proposition \ref{pp:coh}-\ref{rem:rem345}}
    & \leqslant & \ftrue^{\ast} \langle \id, e_b \rangle_{\ast}
                                                                  \psi \wedge \Eq (\ftrue, \langle \id, e_b \rangle)\\
    & \leqslant & \ftrue^{\ast} \ftrue_{\ast} \ftrue^{\ast} \langle
                  \id, e_b \rangle_{\ast} \psi \wedge \Eq (\ftrue, \langle
                  \id, e_b \rangle)\\
    \text{Proposition \ref{pp:coh}-\ref{rem:rem345}}
    & \leqslant & \langle \id, e_b \rangle^{\ast} \ftrue_{\ast}
                                                                  \ftrue^{\ast} \langle \id, e_b \rangle_{\ast} \psi .
  \end{eqnarray*}
\end{appendixproof}
\noindent
Therefore the image of the coproduct
$(\Im,\ol{\ftrue}(\Im),\ol{\ffalse}(\Im))$ by $\dot I$ yields a
homogeneous coproduct in $\EE$. We take the cotupling
$[p_1,p_2]\in\EE(\Im,\sem\phi)(m)$ with respect to this homogeneous
coproduct. We finally define the interpretation of the conditional
rule to be the following composite:
\begin{displaymath}
  \sem{\tripleV{m}{\psi}{\synIf{e_b}{P_1}{P_2}}{\phi}}
  =[p_1,p_2]\circ\dot I(\ul{\langle\id_\mem,\sem{e_b}\rangle}\sem\psi)
  \in \EE(\sem\psi,\sem\phi)(m).
\end{displaymath}

\paragraph{Loops}
Fix $N\in\natrep{}$, and suppose that $\tripleV{m}{\psi_{i+1}}P{\psi_i}$
is derivable in the graded Hoare logic for each $0\le i < N$. Let
$p_i\in\EE(\sem{\psi_{i+1}},\sem{\psi_i})(m)$ be the interpretation
$\sem{\tripleV{m}{\psi_{i+1}}{P_i}{\psi_i}}$. We then define a
countable family of morphisms (we use here ex falso quodlibet):
\begin{displaymath}
  b_i=
  \begin{choice}
    q^{-1}_{\bot_\mem,\sem{\psi_0},m^N}(\sem P^{(i)}) &\in\EE(\bot_\mem,\sem{\psi_0})(m^N)\quad (i\neq N) \\
    p_0\circ\cdots\circ p_N &\in\EE(\sem{\psi_N},\sem{\psi_0})(m^N)\quad (i=N)
  \end{choice}
\end{displaymath}
Let $\theta_i\teq\cod(b_i)$.  Then
$\coprod_{i\in\natrep{}}\theta_i=\bigvee_{i\in\natrep{}} \fnat
i_*\theta_i=\fnat N_*\sem{\psi_N}$ because $\fnat i_*\theta_i$ is
either $\bot_{\mem\times\nat}$ or $\fnat N_*\sem{\psi_N}$.  We then
send the coproduct $\theta_i\arrow\coprod_{i\in\natrep{}}\theta_i$ by
$\dot I$ and obtain a homogeneous coproduct in $\EE$. By taking the
cotupling of all $b_i$ with this homogeneous coproduct, we obtain a
morphism
$[b_i]_{i\in\natrep{}}\in\EE(\fnat N_*\sem{\psi_N},\sem{\psi_0})(m^N)$.
\begin{lemma}
  $\mctx~|~\psi_N\vdash e_n=\onat N$ implies
  $\langle\id_\mem,\sem{e_N}\rangle_*\sem{\psi_N}=\fnat
  N_*\sem{\psi_N}$.
\end{lemma}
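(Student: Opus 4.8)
The plan is to isolate and prove a general fibrational fact --- if two parallel morphisms agree on a subobject, then the opcartesian liftings of that subobject along them coincide --- and then apply it with the two morphisms $\langle\id_\mem,\sem{e_n}\rangle$ and $\fnat N=\langle\id_\mem,\mnat N\circ{!}_\mem\rangle$. First I would unwind the hypothesis. By the semantics of the assertion logic, $\mctx\mid\psi_N\vdash e_n=\onat N$ means $\sem{\psi_N}\le\sem{e_n=\onat N}=\Eq(\sem{e_n},\mnat N\circ{!}_\mem)$ in the fibre $\PP_\mem$; and by Proposition~\ref{pp:coh}-\ref{rem:rem346} the right-hand side equals $\Eq(\langle\id_\mem,\sem{e_n}\rangle,\fnat N)$, since $\fnat N=\langle\id_\mem,\mnat N\circ{!}_\mem\rangle$. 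Writing $f\teq\langle\id_\mem,\sem{e_n}\rangle$ and $g\teq\fnat N$ (both in $\VV(\mem,\mem\times\nat)$), the hypothesis becomes $\sem{\psi_N}\le\Eq(f,g)$ and the goal is $f_*\sem{\psi_N}=g_*\sem{\psi_N}$.

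Next I would prove the general claim: for parallel $f,g\in\VV(X,Y)$ and $\psi\in\PP_X$ with $\psi\le\Eq(f,g)$, one has $f_*\psi=g_*\psi$. Since $\Eq(f,g)=\Eq(g,f)$, it suffices to show $f_*\psi\le g_*\psi$, which by the adjunction $f_*\dashv f^*$ is equivalent to $\psi\le f^*(g_*\psi)$. To obtain this, take the unit of $g_*\dashv g^*$, namely $\psi\le g^*(g_*\psi)$, combine it with the hypothesis and the symmetry of $\Eq$ to get $\psi\le g^*(g_*\psi)\wedge\Eq(g,f)$, and then apply Proposition~\ref{pp:coh}-\ref{rem:rem345} (with the morphisms taken in the order $g,f$ and the predicate $g_*\psi\in\PP_Y$) to get $g^*(g_*\psi)\wedge\Eq(g,f)\le f^*(g_*\psi)$. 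Transitivity yields $\psi\le f^*(g_*\psi)$, hence $f_*\psi\le g_*\psi$; exchanging the roles of $f$ and $g$ gives the reverse inequality. Instantiating the claim with $X=\mem$, $Y=\mem\times\nat$, $f=\langle\id_\mem,\sem{e_n}\rangle$, $g=\fnat N$ and $\psi=\sem{\psi_N}$ gives exactly the statement.

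The argument is essentially bookkeeping with adjoint mates in the bifibration $p:\PP\arrow\VV$, so I do not expect a real obstacle; the one point deserving a line of justification is the symmetry $\Eq(f,g)=\Eq(g,f)$, which I would derive from the Beck-Chevalley condition for the equality left adjoint $\Eq_{X,Y}$ along the automorphism of $X\times Y\times Y$ that swaps the two $Y$-factors --- it fixes $\top_{X\times Y}$ and carries $\langle\id,f,g\rangle$ to $\langle\id,g,f\rangle$ --- or simply cite it as a standard property of coherent fibrations~\cite{jacobscltt}. (Alternatively one could unfold the explicit formula~\eqref{eq:img} for $f_*$ and $g_*$ and compute directly, in the style of the two preceding lemmas, but the adjoint-mate argument is shorter.)
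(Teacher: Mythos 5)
Your proposal is correct and follows essentially the same route as the paper's proof: rewrite the hypothesis as $\sem{\psi_N}\le\Eq(\langle\id_\mem,\sem{e_n}\rangle,\fnat N)$ via Proposition~\ref{pp:coh}-\ref{rem:rem346}, then combine the unit of the pushforward adjunction with Proposition~\ref{pp:coh}-\ref{rem:rem345} and take adjoint mates to get one inequality, obtaining the other symmetrically. The only cosmetic differences are that you package the computation as a general fact about parallel morphisms agreeing on a subobject and that you make explicit the symmetry $\Eq(f,g)=\Eq(g,f)$, which the paper leaves implicit in its ``other direction similarly.''
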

\begin{appendixproof}
  Assume $\sem{\psi_N}\le\Eq(\sem{e_n},\mnat N)$. Then we proceed
  as follows:
  \begin{align*}
    \sem{\psi_N}
    & = \sem{\psi_N}\wedge \Eq(\sem{e_n},\mnat N) \\
    \text{Proposition \ref{pp:coh}-\ref{rem:rem346}}
    & = \sem{\psi_N}\wedge\Eq(\langle\id_\mem,\sem{e_n}\rangle,\fnat N) \\
    & \le \langle\id_\mem,\sem{e_n}\rangle^*\langle\id_\mem,\sem{e_n}\rangle_*\sem{\psi_N}\wedge\Eq(\langle\id_\mem,\sem{e_n}\rangle,\fnat N) \\
    \text{Proposition \ref{pp:coh}-\ref{rem:rem345}}
    & \le \fnat N^*\langle\id_\mem,\sem{e_n}\rangle_*\sem{\psi_N}.
  \end{align*}
  We thus conclude $\fnat N_*\sem{\psi_N}\le\langle\id_\mem,\sem{e_n}\rangle_*\sem{\psi_N}$. We obtain the other direction similarly.
\end{appendixproof}
\noindent
We then define
$ \sem{\tripleV{m^N}{\psi_N}{\synLoop{e_n}{P}}{\psi_0}}=
[b_i]_{i\in\natrep{}}\circ\dot
I(\ul{\langle\id_\mem,\sem{e_n}\rangle}\sem{\psi_N}).  $
\begin{theorem}[Soundness of GHL]\label{th:sound}
  For any derivation of a GHL judgement
  $\tripleV{m}{\phi}{P}{\psi}$, we have
  $q_{\sem\phi,\sem\psi,m}\sem{\tripleV{m}{\phi}{P}{\psi}}=\sem P$.
\end{theorem}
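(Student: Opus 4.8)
The plan is to induct on the derivation of $\tripleV{m}{\phi}{P}{\psi}$ and, for each inference rule of Table~\ref{fig:rules}, check that $q$ applied to the interpretation built in Section~\ref{subsection:interpretation:GHL} returns the non-graded interpretation $\sem P$ of Section~\ref{sec:semloop}. Almost all of the algebraic content is already bundled into the defining equations of a logical structure (Definition~\ref{def:logstr}): $q$ strictly preserves identities, composition and upcasts, and satisfies $q(\dot I f)=I(pf)$ and $q(f\actionTensor g)=pf\ast qg$. Combined with the fact that $p$ strictly preserves finite products, and hence diagonals and projections (Proposition~\ref{pp:coh}), these will immediately give $q(\push{(h)})=\push{(qh)}$, transporting the ``$\push{}$'' idiom from $\EE$ to $\CC$. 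The two remaining clauses of Definition~\ref{def:logstr}---that $q$ sends homogeneous coproducts to coproducts, and that it is bijective on hom-sets out of a bottom object (ex falso quodlibet)---are what I expect to need for the conditional and loop rules.

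For the algebraic rules the checks will be routine. For $\synSkip$, $q(\id_{\sem\psi})=\id_{\mem}=\sem{\synSkip}$; for $\synSeq{P_1}{P_2}$, functoriality of $q$ with the induction hypotheses gives $q(p_2\circ p_1)=\sem{P_2}\circ\sem{P_1}$; for $\synExp v{e_c}$, since the Cartesian lifting $\ol{\sub v{e_c}}\sem\psi$ is above $\sub v{e_c}$ I get $q(\dot I(\ol{\sub v{e_c}}\sem\psi))=I(\sub v{e_c})=\sem{\synExp v{e_c}}$. In the consequence rule the vertical morphisms $\sem{\psi'}\lev\sem\psi$ and $\sem\phi\lev\sem{\phi'}$ lie above identities, so $q$ turns their $\dot I$-images into identities and collapses the upcast, leaving precisely the induction hypothesis. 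For $\synComm f$ and $\synProc v p$ I will first unfold $\push{(-)}$ into $\actionTensor$ and $\dot I$, then apply $q(\push{(h)})=\push{(qh)}$, the chosen equalities $q\esem f=\sem f$ and $q\esem p=\sem p$, and the facts that $\pi_1$ and the opcartesian lifting $\ul{\upd v}(\sem\psi\dtimes\sem\phi)$ lie above $\pi_1$ and $\upd v$; this yields $I(\pi_1)\circ\push{\sem f}=\sem{\synComm f}$ and $I(\upd v)\circ\push{\sem p}=\sem{\synProc v p}$.

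The conditional and loop rules are where the two extra clauses come in. In both, the interpretation has the shape ``cotupling in $\EE$ composed with $\dot I$ of an opcartesian lifting'' (well-typed thanks to the lemmas preceding the theorem). Since $q$ is a functor that maps the relevant homogeneous coproduct in $\EE$ to a coproduct in $\CC$, it preserves cotuplings, so $q[p_1,p_2]=[qp_1,qp_2]$ and $q[b_i]_{i\in\omega}=[qb_i]_{i\in\omega}$; moreover $q\dot I$ carries the coprojections of \eqref{eq:stabcoprod} (resp. of the loop construction) exactly onto $\{I(\ftrue),I(\ffalse)\}$ (resp. $\{I(\fnat i)\}_{i\in\omega}$), because those coprojections are built from Cartesian and opcartesian liftings above $\ftrue,\ffalse$ (resp. $\fnat i$)---these are exactly the $\CC$-coproducts used in the non-graded conditional and loop clauses. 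By the induction hypotheses $qp_1=\sem{P_1}$ and $qp_2=\sem{P_2}$; for the loop $qb_N$ is the $N$-fold composite $\sem P^{(N)}$ by functoriality and repeated use of the induction hypothesis, and for $i\ne N$ the ex falso quodlibet bijection gives $qb_i=q(q^{-1}_{\bot_\mem,\sem{\psi_0},m^N}(\sem P^{(i)}))=\sem P^{(i)}$. Finally $q\dot I$ of the opcartesian liftings $\ul{\langle\id_\mem,\sem{e_b}\rangle}\sem\psi$ and $\ul{\langle\id_\mem,\sem{e_n}\rangle}\sem{\psi_N}$ come out as $I(\langle\id_\mem,\sem{e_b}\rangle)=\push{(I\sem{e_b})}$ and $I(\langle\id_\mem,\sem{e_n}\rangle)=\push{(I\sem{e_n})}$ (using the Freyd equation $I(f\times g)=f\ast Ig$ to rewrite $\push{}$), so I recover exactly the defining clauses of $\sem{\synIf{e_b}{P_1}{P_2}}$ and $\sem{\synLoop{e_n}{P}}$.

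I expect the main obstacle to be the bookkeeping in these last two cases: one must track precisely which base morphism each (op)Cartesian lifting sits above, so as to be sure that $q$ places the homogeneous coproduct of $\EE$ over the \emph{specific} coproduct used in the non-graded semantics (rather than merely over some coproduct), and then appeal to uniqueness of coproducts up to canonical isomorphism. The ex falso quodlibet clause is exactly what lets the ``junk'' branches $b_i$ ($i\ne N$) of the loop be chosen so that $qb_i=\sem P^{(i)}$ holds on the nose, which is needed because the non-graded loop semantics already uses the full family $[\sem P^{(k)}]_{k\in\omega}$ even though only the $N$-th component is reachable.
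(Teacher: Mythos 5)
Your proposal is correct and takes exactly the approach the paper intends: the paper's own proof is the single line ``By induction over the interpretation,'' and your case analysis fills in precisely the routine verifications that the defining equations of a logical structure (Definition~\ref{def:logstr}), the coproduct-preservation clause, and ex falso quodlibet are designed to make go through. The one caveat you raise about coproducts ``up to canonical isomorphism'' is in fact unnecessary: since $q\dot I(\ol{\ftrue}\,\Im)=I(\ftrue)$ etc.\ hold on the nose, the homogeneous coproduct in $\EE$ is carried onto the very coproduct used in the non-graded semantics, so the cotuplings agree strictly.
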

\begin{appendixproof}
  By induction over the interpretation.
\end{appendixproof}

\subsection{Instances of Graded Hoare Logic}
\label{sec:instances}
\begin{toappendix}
	\subsection{Further instances of Graded Hoare Logic}
\label{app:instances}
\end{toappendix}



We first present a construction of \ghlstrs{}
from graded monad liftings,
which are a graded version of the concept of {\em monad lifting}
\cite{filinskiphd,DBLP:journals/mscs/Goubault-LarrecqLN08,DBLP:journals/lmcs/KatsumataSU18}.

\begin{definition}[Graded Liftings of Monads]
  Consider two cartesian categories $\EE$ and $\CC$ and a functor
  $q \colon \EE \to \CC$ strictly preserving finite products.  We say that a
  strong $M$-graded monad $(\dot T,\dot \eta,\dot
  \mu_{m,m'},
  \dot{\mathrm{st}}_{m})$ on $\EE$ is an {\em $M$-graded lifting} of a
  strong monad $(T,\eta^T,\mu,\mathrm{st})$ on $\CC$ along $q$ if
  $q\circ\dot T m = T\circ q$,
  $q(\dot \eta_\psi) = \eta_{q\psi}$,
  $q(\dot \mu_{m,m',\psi}) = \mu_{q\psi}$,
  $q(\dot T(m_1\leq m_2)_\psi) = \mathrm{id}$,
  $q(\dot{\mathrm{st}}_{\psi,\phi,m}) =\mathrm{st}_{q\psi,q\phi}$.
    %

\end{definition}

\begin{theorem}
  \label{theorem:logical_structure:graded_lifting}
  Let $\VV$ be cartesian category with distributive countable
  coproducts, and let $p \colon \PP \to \VV$ be a fibration for
  assertion logic.  Let $T$ be a strong monad on $\VV$ and $\dot T$ be an
  $M$-graded lifting of $T$ along $p$. Then the $M$-graded Freyd
  category $(\PP,1,\dot\times,\PP_{\dot T},J,\actionTensor)$ with
  homogeneous countable coproducts, together with the function
  $q_{\psi,\phi,m} \colon \PP_{\dot T}(\psi,\phi)(m) \to
  \VV_T(p\psi,p\phi)$ defined by $q_{\psi,\phi,m}(f)=pf$ is a
  \ghlstr{} over $(\VV,1,\times,\VV_T,I,\ast)$ and $p$.
\end{theorem}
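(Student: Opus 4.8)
The plan is to construct the claimed logical structure by specialising Proposition~\ref{pp:Klieslifreyd:coproduct}(1) to the base fibre category and then checking, component by component, that the coherence conditions of Definition~\ref{def:logstr} follow from the hypothesis that $\dot T$ is a graded lifting of $T$ along $p$. First I would invoke Proposition~\ref{pp:coh}(5), which tells us $\PP$ is a Cartesian category with distributive countable coproducts, strictly preserved by $p$; hence $p\colon\PP\arrow\PP$ (identity Freyd structure) together with the strong $M$-graded monad $\dot T$ on $\PP$ yields, via Proposition~\ref{pp:Klieslifreyd:coproduct}(1), an $M$-graded Freyd category $(\PP,1,\dot\times,\PP_{\dot T},J,\actionTensor)$ with homogeneous countable coproducts, where $Jf=\dot\eta\circ f$ and $f\actionTensor g=\dot{\mathrm{st}}\circ(f\times g)$. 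Symmetrically the same proposition applied to $T$ on $\VV$ gives the Freyd category $(\VV,1,\times,\VV_T,I,\ast)$ already fixed in the statement.

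Next I would define $q_{\psi,\phi,m}(f)=pf$ and verify it lands in the right homset: if $f\in\PP_{\dot T}(\psi,\phi)(m)=\PP(\psi,\dot Tm\,\phi)$ then $pf\in\VV(p\psi,\,p(\dot Tm\,\phi))=\VV(p\psi,\,Tm(p\phi))=\VV_T(p\psi,p\phi)(m)$, using $p\circ\dot Tm=T\circ p$. Then I would check the five equational conditions of Definition~\ref{def:logstr} in turn. Functoriality $q(\id)=\id$, $q(g\circ f)=qg\circ qf$ and compatibility with upcasts reduce to $p(\dot\eta_\psi)=\eta_{p\psi}$, $p(\dot\mu_{m,m',\psi})=\mu_{p\psi}$ and $p(\dot T(m\le m')_\psi)=\mathrm{id}$, since Kleisli composition and Kleisli identity in $\PP_{\dot T}$ are built from $\dot\eta$, $\dot\mu$, $\dot Tm$, and $p$ is a strict functor commuting with all of these (this is exactly the list of equations in the definition of a graded lifting). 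The conditions $q(\dot If)=I(pf)$ and $q(f\actionTensor g)=pf\ast qg$ unfold to $p(\dot\eta_\psi\circ f)=\eta_{p\psi}\circ pf$ and $p(\dot{\mathrm{st}}_{\psi,\phi,m}\circ(f\times g))=\mathrm{st}_{p\psi,p\phi,m}\circ(pf\times pg)$, again immediate from the lifting equations plus strictness of $p$ on products.

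Finally I would verify the two extra conditions. For preservation of homogeneous coproducts: by Proposition~\ref{pp:Klieslifreyd:coproduct}(1) a homogeneous countable coproduct in $\PP_{\dot T}$ has coprojections of the form $J\iota_i=\dot\eta\circ\iota_i$ for $\{\iota_i\}$ a coproduct in $\PP$; applying $q$ gives $p(\dot\eta\circ\iota_i)=\eta\circ p\iota_i$, and since $p$ strictly preserves coproducts $\{p\iota_i\}$ is a coproduct in $\VV$, so $\{\eta\circ p\iota_i\}$ is a (homogeneous) coproduct in $\VV_T$, which is what is required. For \emph{ex falso quodlibet}, I need $q_{\bot_X,\phi,m}\colon\PP_{\dot T}(\bot_X,\phi)(m)\arrow\VV_T(X,p\phi)$ to be a bijection, i.e. $p\colon\PP(\bot_X,\dot Tm\,\phi)\arrow\VV(X,Tm(p\phi))$ bijective; this follows because $\bot_X$ is the cartesian lifting of $X$ along $!\colon X\to 1$ of the initial object $\dot 1$ of the fibre (more precisely, $\bot_X$ is initial-over-$X$ in the sense that $p$ restricts to a bijection $\PP(\bot_X,Z)\cong\VV(X,pZ)$ for every $Z\in\PP$), a standard property of the bottom predicate in a fibration for assertion logic — I would cite the characterisation of $\bot_X$ from Proposition~\ref{pp:coh}(5) / the construction of joins, instantiating at the empty family. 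The main obstacle I anticipate is precisely this last point: pinning down the exact universal property of $\bot_X$ that makes $p$ a bijection on morphisms out of it (the empty-coproduct case requires a small separate argument, since the coproduct description in Proposition~\ref{pp:coh}(5) is phrased for nonempty-looking families), but once that lemma is isolated the rest is bookkeeping with the lifting equations.
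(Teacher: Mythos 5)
Your proposal is correct and follows essentially the same route as the paper's proof: the equational conditions of Definition~\ref{def:logstr} are a routine unfolding of the graded-lifting equations (the paper literally says ``routine''), and the only point the paper singles out is \emph{ex falso quodlibet}, which it settles exactly as you do, by observing that $\bot(X)\teq\bot_X$ is left adjoint to $p$, giving $\PP_{\dot T}(\bot_X,\phi)(m)=\PP(\bot_X,\dot Tm\,\phi)\cong\VV(X,T(p\phi))=\VV_T(X,p\phi)$. The lemma you anticipate needing for $\bot_X$ is precisely this adjunction, and it is immediate for a posetal fibration: every $f\in\VV(X,pZ)$ has a unique lift $\bot_X\arrow Z$ because $\bot_X\le f^*Z$ always holds and $p$ is faithful.
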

\begin{appendixproof}
  Checking the equalities in Definition \ref{def:logstr} is routine.
  We only check the ex-falso-quodlibet.
  For all $\phi$ and $m$, the mapping
  $q_{\bot_X,\phi,m} \colon \PP_{\dot T}(\bot_X,\phi)(m) \to
  \VV_T(X,p\phi)$ is bijective:
  \begin{displaymath}
    \PP_{\dot T}(\bot_X,\phi)(m)
    = \PP(\bot_X, \dot Tm \phi)
    \cong \VV(X, T p \phi)
    = \VV_T (X, p \phi).
  \end{displaymath}
  Notice that the functor $\bot(X)\triangleq \bot_X$ is a left adjoint to $p$.
\end{appendixproof}

Before seeing examples, we introduce a notation and fibrations for the
assertion
logic.  Let $p:\PP\arrow\VV$ be a fibration for the assertion logic. Below
we use the following notation: for $f\in\VV(I,J)$ and $\psi\in\PP_I$
and $\phi\in\PP_J$, by $f:\psi\darrow \phi$ we mean the statement
``there exists a morphism $\dot f\in\PP(\psi,\phi)$ such that
$p\dot f=f$''. Such $\dot f$ is unique due to the faithfulness of
$p:\PP\arrow\VV$.
\begin{toappendix}
  Using Example \ref{ex:pred},
  we derive fibrations for assertion logic by change-of-base:
  \[
    \quad
    \xymatrix{
      \ERel \ar[r] \ar[d]_-{\erelfib{}} \pbcorner[ul] & \BRel \ar[d]^{\brelfib{}}\\
      \Set \ar[r]^-{\mathrm{diagonal}} & \Set \times \Set
    }
    \quad
    \xymatrix{
      \BRel \ar[r] \ar[d]_{\brelfib{}} \pbcorner[ul] & \Pred \ar[d]^{\predfib{}} \\
      \Set\times\Set \ar[r]^-{\times} & \Set
    }
    \quad
    \xymatrix{
      \Pred(\Meas) \ar[r] \ar[d]_{\predfib{\Meas}} \pbcorner[ul] & \Pred \ar[d]^{\predfib{}} \\
      \Meas \ar[r]^-{|-|} & \Set
    }
  \]
\end{toappendix}

\begin{example}[{Example \ref{exm:syntax-union-bound-logic}}:
  Union Bound Logic]
  To derive the \ghlstr{} suitable for the semantics of the
  Union Bound Logic discussed in Example
  \ref{exm:syntax-union-bound-logic}, we invoke Theorem
  \ref{theorem:logical_structure:graded_lifting} by letting $p$ be
  $\predfib{\Set}:\Pred\to\Set$ (Example \ref{ex:pred}), $T$ be the
  subdistribution monad $\mathcal D$ and $\dot T$ be the strong
  $(\mathbb{R}_{\geq 0},\leq,0,+)$-graded lifting $\mathcal{U}$ of
  $\mathcal D$ defined by
  $ \mathcal{U}(\delta)(X,P) \teq (\mathcal{D}(X), \{ d ~|~ d(X
  \setminus P) \leq \delta\}) $.
  The induced \ghlstr{} is suitable for the
  semantics of GHL for Union Bound Logic in Example
  \ref{exm:syntax-union-bound-logic}. The soundness of inference rules
  follow from the \ghlstr{} as we have showed in Section
  \ref{subsection:interpretation:GHL}. To complete the semantics of
  GHL for the Union Bound Logic, we give the semantics
  $\langle p \rangle$ of procedures $p \in \availproc s$.
  Example~\ref{exm:syntax-union-bound-logic} already gave a semantic
  condition for these operators:
  \begin{align*}
    \lefteqn{\availproc s(\phi, \beta, \psi)}\\
    &=
      \{ \osample\mu\exprc \mid
      \forall s . s \in \sem{\phi} \implies
      \text{Pr}_{s' \leftarrow \llbracket{\osample\mu\exprc}\rrbracket(s)}[s' \in \sem{\neg \psi}] \leq \beta) \}\\
    &= \{ \osample\mu\exprc \mid
      \sem{\osample\mu\exprc}
      \in \Pred_{\mathcal U}(\sem \phi,\sem \psi)(\beta) \}
  \end{align*}
  For any $\osample\mu\exprc \in C_p(\phi, \beta, \psi)$, the
  interpretation $\langle \osample\mu\exprc \rangle $ is
  $\sem{\osample\mu\exprc}$.
\end{example}
\begin{toappendix}
  \begin{example}[A continuous model of the Union Bound Logic.]
    We can also define a continuous model of the Union Bound Logic.
    Let $\mathcal{G}$ be the subprobabilistic variant of the Giry monad (called the sub-Giry monad) \cite{10.1007/BFb0092872} on the category $\Meas$ of
    measurable spaces and measurable functions.
    In a similar way as $\mathcal{U}$, we also define a strong $([0,\infty],+,0,\leq)$-graded lifting $\mathcal{V}$
    of $\mathcal{G}$ along the fibration $\predfib\Meas \colon \Pred(\Meas) \to \Meas$ for assertion logic.
    \begin{align*}
      \mathcal{V}^\delta(X,P)
      & =  \label{eq:gradedmonad;def3} (\mathcal{G}(X), \{ \mu \in \mathcal{G}(X) | \forall A \subseteq_{\text{measurable}}X.~ P \subseteq A \implies \mu(X \setminus A) \leq \delta\}) \tag{L1'} \\
      &=  \label{eq:gradedmonad;def4} \bigcap \{\inverse{(f^\sharp)}(S^{\delta' + \delta}) |f \colon (X,P) \dto S^{\delta'}~ \text{measurable},\delta' \in \mathbb{R}_{\geq 0}\} \tag{L2'}
    \end{align*}
    [$\mathrm{(\ref{eq:gradedmonad;def3})} \subseteq
    \mathrm{(\ref{eq:gradedmonad;def4})}$]
    Suppose $\mu \in \mathrm{(\ref{eq:gradedmonad;def3})}$.
    For any measurable $f \colon (X,P) \dto S^{\delta'}$, we should have $P \subseteq X \setminus f^{-1}([\delta', 1])$.
    Therefore $\mu(f^{-1}([\delta', 1])) \leq \delta$.
    Hence $f^\sharp (\mu) \leq \mu(f^{-1}([\delta', 1])) +  \delta' \leq \delta + \delta'$.
    Hence, $\mu \in \mathrm{(\ref{eq:gradedmonad;def4})}$.
    [$\mathrm{(\ref{eq:gradedmonad;def3})} \supseteq
    \mathrm{(\ref{eq:gradedmonad;def4})}$]
    For any measurable subset $A$ with $P \subseteq A$, define a measurable function
    $f_A \colon (X,P) \dto S^{0}$ by $f(x) = 0$ when $x \in A$ and $f(x) = 1$ otherwise.
    We then have $\delta \geq f_A^\sharp (\mu) = \mu(X \setminus A)$.

    Technically, we handle $X$ and $[0,1]$ as a measurable space and a standard Borel space respectively, and
    we apply the graded version of codensity lifting \cite{DBLP:journals/entcs/Sato16,DBLP:journals/lmcs/KatsumataSU18} instead of the graded $\top\top$-lifting to construct $\mathcal{V}$.
    At all, we obtain a \ghlstr{} induced from $\mathcal{V}$, $\mathcal{G}$ and $\predfib\Meas \colon \Pred(\Meas) \to \Meas$.
    Instantiating GHL with this structure yields a
    continuous version of Union Bound Logic.
  \end{example}
\end{toappendix}

\begin{toappendix}
  \begin{example}[{Example~\ref{exm-syn:simple-cost}}: Simple Cost
    analysis]
    We introduced a cost counting with a natural-number monoid and the
    $\tick{}$ command graded by $1 \in \NN$. To obtain a \ghlstr{} suitable for the GHL for cost counting,
    we invoke Theorem \ref{theorem:logical_structure:graded_lifting} with:
    \begin{itemize}
    \item The fibration for assertion logic given by
      $\predfib{} \colon \Pred \to \Set$.
    \item The monad is the natural-number writer monad $(W, \eta, \mu)$
      on $\Set$ where $W X = X \times \NN$.
    \item The strong $\NN$-graded lifting $\dot W$ is
      given by
      $
      \dot W n (X,P) = (W X, \{ (x, m) \in W X \mid x \in P, m \leq n\})
      $
    \end{itemize}
    The derived \ghlstr{} is suitable for the GHL for simple
    cost analysis.  To complete the semantics of the GHL, we give the
    semantics of the $\mathsf{tick}$ command.  Semantics
    $\sem{\mathsf{tick}} \colon \psi \dto_1 \dot 1$ in $\Pred_{\dot W}$
    of the $\mathsf{tick}$ command is defined by
    $\sem{\mathsf{tick}}(\xi) = (\ast,1)$ for every $\xi \in \mem$, and
    we have the soundness of the axiom
    $\vdash_1 \{ \psi\} { \mathsf{do}~\mathsf{tick}} \{\psi\}$ for every
    $\psi$.
  \end{example}
\end{toappendix}

  \begin{example}[{Example~\ref{exm-syn:program-counter}}: Program
    Counter Security]
    To derive the \ghlstr{} suitable for GHL with program
    counter security, we invoke Theorem
    \ref{theorem:logical_structure:graded_lifting} with:
    \begin{itemize}[leftmargin=0.5em]
    \item The category $\ERel$ of endorelations defined as follows:
    an object $(X,R)$ is a pair of $X \in \Set$ and $R \subseteq X
    \times X$ (i.e. an endorelation $R$ on $X$) and
    an arrow $f \colon (X,R) \to (Y,S)$ is a function $f \colon X \to Y$ such that $(f \times f)(R)\subseteq S$.
    \item The fibration for the assertion logic $\erelfib{}:\ERel\to\Set$ given by $(X,R) \mapsto X$ and $f \mapsto f$.
    \item The writer monad
      $W_s X = X \times \{\otrue,\ofalse\}^*$ on $\Set$ with the monoid of bit
      strings.
    \item The strong $2^*$-graded lifting of $W_s$ along $\erelfib{}:\ERel\to\Set$, given by\\
      $\dot W_s \sigma (X,R) = (W_s X, \{ ((x,\sigma'),(y,\sigma')) ~|~
      (x,y) \in R \land \sigma' \leq \sigma \})$.
    \end{itemize}
    The derived \ghlstr{} is suitable for the semantics of GHL in
    Example~\ref{exm-syn:program-counter}.  To complete the structure of
    the logic, we need to interpret two commands
    $\cfTrue, \cfFalse \in \CExp$ and set the axioms of commands
    $\availcom$.

    First $\sem{\cfTrue}, \sem{\cfFalse} \colon \sem \mem \to 1$ in
    $ \ERel_{\dot W}$ are defined by
    $\sem{\cfTrue} \equiv (\ast,\otrue)$ and
    $\sem{\cfFalse} \equiv (\ast,\ofalse)$.
    Finally, we define $\availcom$ by (recall $\leq$ is prefix
    ordering of strings):
    \[
      \availcom(\psi,\sigma) = \{ \cfTrue \mid \otrue \leq \sigma \} \cup \{
      \cfFalse \mid \ofalse \leq \sigma \}.
    \]
    Note, the graded lifting $\dot W_s \sigma $ relates only the
    pair of $(x,\sigma')$ and $(y,\sigma')$ with common strings of
    control flow.  Hence, the derivation of proof tree of this logic
    forces the target program to have the same control flow under the
    precondition.
  \end{example}

\begin{example}[\Ghlstr{} from the product comonad]
  \label{ex:comonad:logical}
  In the category $\Set$, the functor $CX\teq X \times \NN$ forms a
  coproduct-preserving comonad called the {\em product comonad}.
  The right adjoint $I \colon \Set \to \Set_{C}$ of the coKleisli resolution of $C$
   yields a Freyd category with countable
  coproducts.
  We next introduce a $(\NN,\leq,0\max)$-graded lifting $\dot C$ of
  the comonad $C$ along the fibration
  $\predfib{\Set} \colon \Pred \to \Set$. It is defined by
  $\dot C n (X,P) \teq (CX, \{ (x,m) \in X \times \NN ~|~ x \in P, m
  \geq n \})$. Similarly, we give an $(\NN,\leq,0\max)$-graded Freyd
  category $(J,\actionTensor)$ induced by the graded lifting $\dot C$.
  In this way we obtain a \ghlstr{}.

  By instantiating GHL with the above \ghlstr{}, we obtain a
  program logic useful for reasoning about security levels.  For example,
  when program $P_1$ requires security level $3$ and $P_2$ requires
  security level $7$, the sequential composition $P_1 ; P_2$ requires
  the higher security level $7$ ($= \max (3,7)$).

  We give a simple
  structure for verifying security levels determined by memory access.
  Fix a function $\mathsf{VarLV} \colon \dom(\mctx) \to \NN$ assigning
  security levels to variables.  For any expression $e$, we define its
  required security level
  $\SecLV(e) = \sup\{\VarLV(x) ~|~ x \in
  \mathrm{FV}(e)\}$.  Using this, for each expression $e$ of sort
  $s\in S$ we introduce a procedure $\Psecure e \in \PExp_s$ called
  {\em secured expression}. It returns the value of $e$ if the level
  is high enough, otherwise it returns a meaningless contant:
  \[
    \sem{\Psecure e}(n,\xi) =
    \text{if $n \geq \SecLV(e)$ then $\sem{e}(\xi)$ else a fixed constant $c_s$}.
  \]
  Secured expressions can be introduced through the following
  $\availprocn$:
  \begin{displaymath}
    \availproc s(\phi, l, \psi) =
    \{ \Psecure e \mid e:s, \sem{\Psecure e}\in\Pred_C(\sem \phi, \sem \psi)(l), \mathsf{SecLV}(e) \leq l \}.
  \end{displaymath}
  The \pmonoid{} $(\NN,\leq,0,\max)$ in the above can also be replaced with a
  join semilattice with a least element $(Q,\leq,\bot,\vee)$.
  Thus, GHL can be instantiated to a graded comonadic model of
  security and its associated reasoning.
%


  \begin{toappendix}
    (\ghlstr{} on Example \ref{ex:comonad:logical})
    \begin{lemma}
      (1) the tuple $(\Set,1,\times,\Set_C,I,\ast)$ induced by $C$ forms a Freyd category with coproduct;
      (2) the tuple $(\Pred,\dot 1,\dtimes,\Pred_{\dot C},J,\actionTensor)$ induced by the graded lifting $\dot C$ forms an $(\NN,\max,0,\leq)$-graded Freyd category
      with homogeneous coproduct;
      (3) they form a \ghlstr{}.
    \end{lemma}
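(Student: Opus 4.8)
The plan is to reduce the whole lemma to the comonadic (dual) halves of Propositions~\ref{pp:Klieslifreyd} and~\ref{pp:Klieslifreyd:coproduct} together with the ex-falso/coproduct bookkeeping of Definition~\ref{def:logstr}, so that the only genuine work is (a) exhibiting the costrong comonad structures on $\Set$ and on $\Pred$ and (b) checking that $\dot C$ strictly lifts $C$ along $p$. I would therefore first record the relevant structure on the product comonad $C=(-\times\NN)$ and its lifting, and then read off (1), (2), (3) from the cited propositions.

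\textbf{Parts (1) and (2).} For (1), I would observe that $C=(-\times\NN)$ on $\Set$ is costrong with costrength $\mathrm{cs}_{I,J}\colon (I\times J)\times\NN\to I\times(J\times\NN)$ the associativity isomorphism (so the four costrength coherence laws reduce to coherence of the associator), and that $(-\times\NN)$ preserves countable coproducts because $\Set$ is distributive; Proposition~\ref{pp:Klieslifreyd}(2) and Proposition~\ref{pp:Klieslifreyd:coproduct}(2) (trivial grading) then give that $\Set_C$ with $If=f\circ\epsilon_V=f\circ\pi_1$ and $f\ast g=(f\times g)\circ\mathrm{cs}$ is a Freyd category with countable coproducts, and unfolding $\mathrm{cs}$ yields exactly the formula for $\ast$ in Example~\ref{ex:comonad:logical}. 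For (2), I would check that $\dot C n(X,P)=(CX,\{(x,k)\mid x\in P,\ k\ge n\})$ is functorial on $\Pred$ (if $f(P)\subseteq Q$ then $f\times\NN$ carries the predicate of $\dot C n(X,P)$ into that of $\dot C n(Y,Q)$); that the counit $\pi_1$ and the comultiplication $\delta_X(x,k)=((x,k),k)$ restrict to $\Pred$-morphisms $\dot C 0(X,P)\to(X,P)$ and $\dot C(\max(n,n'))(X,P)\to\dot C n(\dot C n'(X,P))$; that for $n\le n'$ the inclusion $\{k\ge n'\}\subseteq\{k\ge n\}$ gives a monotonicity map $\dot C n'\Rightarrow\dot C n$ lying above $\mathrm{id}$ (so $\dot C$ is an $M^{\mathrm{op}}$-graded comonad strictly lifting $C$); that its costrength is again the associator, which maps $\{((x,y),k)\mid x\in P,\ y\in Q,\ k\ge n\}$ bijectively onto $\{(x,(y,k))\mid x\in P,\ y\in Q,\ k\ge n\}$; and that each $\dot C n$ preserves countable coproducts, since coproducts in $\Pred$ are computed on carriers and predicates separately (Proposition~\ref{pp:coh}(5)) and $(-\times\NN)$ distributes over coproducts in $\Set$. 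Proposition~\ref{pp:Klieslifreyd:coproduct}(2) then yields the $(\NN,\max,0,\le)$-graded Freyd category $(\Pred,\dot 1,\dtimes,\Pred_{\dot C},J,\actionTensor)$ with homogeneous countable coproducts.

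\textbf{Part (3).} I would define $q_{\psi,\phi,m}\colon\Pred_{\dot C}(\psi,\phi)(m)=\Pred(\dot C m\psi,\phi)\to\Set(C(p\psi),p\phi)=\Set_C(p\psi,p\phi)$ by $q(f)=pf$; this is well typed because $p(\dot C m\psi)=C(p\psi)$ as $\dot C$ lifts $C$. The equalities of Definition~\ref{def:logstr} are then immediate: $q$ preserves identities and coKleisli composition because $p$ is a functor and the composition in $\Pred_{\dot C}$ is built over that in $\Set_C$ via $\epsilon,\delta$; $q$ collapses upcasts because the monotonicity maps of $\dot C$ are above $\mathrm{id}$; and $q(Jf)=p(f\circ\epsilon)=pf\circ\pi_1=I(pf)$ and $q(f\actionTensor g)=pf\ast qg$ hold since $J$ and $\actionTensor$ use the same recipes as $I$ and $\ast$ but in $\Pred$. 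For the coproduct clause, every homogeneous countable coproduct in $\Pred_{\dot C}$ is, up to isomorphism, of the form $\{J\iota_i\}$ for a coproduct $\{\iota_i\}$ in $\Pred$ (Proposition~\ref{pp:Gradedmonad:coproduct}(2) and uniqueness of homogeneous coproducts); $p$ sends $\{\iota_i\}$ to a coproduct in $\Set$ (Proposition~\ref{pp:coh}(5)), and $\{I(p\iota_i)\}=\{q(J\iota_i)\}$ is then a coproduct in $\Set_C$ by part (1). For ex falso quodlibet, $\bot_X=(X,\emptyset)$ gives $\dot C m\bot_X=(X\times\NN,\emptyset)=\bot_{CX}$ (equivalently $\dot C m\circ\bot=\bot\circ C$ with $\bot\dashv p$), so $\Pred_{\dot C}(\bot_X,\phi)(m)=\Pred(\bot_{CX},\phi)\cong\Set(CX,p\phi)=\Set_C(X,p\phi)$, and this bijection is exactly $q$.

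\textbf{Main obstacle.} The difficulty is organisational rather than mathematical: the paper packages the monad case as Theorem~\ref{theorem:logical_structure:graded_lifting} but leaves the comonad counterpart only sketched, so one must either state and prove the dual (graded liftings of costrong comonads along a fibration for assertion logic induce logical structures) or, as above, verify the clauses of Definition~\ref{def:logstr} directly. The mildly delicate point within that verification is that the underlying comonad $C$ on $\Set$ carries \emph{no} grading while $\dot C$ on $\Pred$ is genuinely $(\NN,\max,0,\le)$-graded, which forces the grade-erasing functor $q$ to act simply as $p$; one therefore has to be careful to check that all grade-change maps of $\dot C$ really do lie above identities of $\Set$, which is what makes this coherent.
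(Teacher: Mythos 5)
Your proposal is correct, but it takes a genuinely different route from the paper's. The paper proves (1) by brute force: it writes out the explicit formulas $I(f)=f\circ\pi_1$ and $f\ast g=(f\times g)\circ\langle\pi_1\circ\pi_1,\langle\pi_2\circ\pi_1,\pi_2\rangle\rangle$ and verifies each Freyd-category equation and each coprojection/cotupling law element-wise, including an explicit distributivity map $\mathrm{dist}=[\langle\iota_i\circ\pi_1,\pi_2\rangle]_{i\in I}^{-1}$ used to define the cotupling $[f_i]^\ast=[f_i]\circ\mathrm{dist}$. For (2) it then argues that the equations are inherited because $\dot C$ lifts $C$ and $p$ is faithful, and only re-checks the well-typedness of the gradings (that $J$, $\actionTensor$, the cotuplings, and the maps $\sqsubseteq^{m\le n}$ respect the predicates), again element-wise. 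You instead recognise $C$ and $\dot C$ as a costrong comonad and a costrong $(\NN,\max,0,\le)^{\mathrm{op}}$-graded comonad whose costrength is the associator, and invoke the comonadic halves of Propositions~\ref{pp:Klieslifreyd} and~\ref{pp:Klieslifreyd:coproduct}; the only bespoke work is checking that the counit, comultiplication, grade-change maps and costrength restrict to the predicates and that $\dot C n$ preserves coproducts. This is more modular and makes the example an instance of the general machinery, at the cost of relying on statements whose comonadic cases the paper itself only proves by the remark ``can be proven similarly''; if you take this route you should at least unfold the associator-costrength to confirm it reproduces the paper's explicit formula for $\ast$, which you do. Your treatment of (3) — $q=p$, the grade-change maps lying above identities, the coproduct clause via uniqueness of homogeneous coproducts and Proposition~\ref{pp:Gradedmonad:coproduct}, and ex falso via $\dot C m\bot_X=\bot_{CX}$ and $\bot\dashv p$ — is essentially what the paper leaves as ``straightforward'', spelled out correctly and in somewhat more detail than the paper gives.
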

    \begin{proof}
      (1) We need to show the equalities on $J$ and $\ast$.
      From the counit law of comonad, we have $I(\id_X) = \id_X$ and $I(g\circ f) = Ig \circ If$.
      We also have:
      \begin{align*}
        & \id_V*\id_X = (\id_V \times \pi_1) \circ \langle \pi_1 \circ \pi_1, \langle \pi_2 \circ \pi_1, \pi_2 \rangle \rangle
          =  \langle \pi_1 \circ \pi_1, \pi_2 \circ \pi_1 \rangle
          = \pi_1 = \id_{V*X}\\
        &(g\circ f)*(i\circ j) = (g\circ f) \times (i\circ \langle j, \pi_2 \rangle) \circ \langle \pi_1 \circ \pi_1, \langle \pi_2 \circ \pi_1, \pi_2 \rangle \rangle\\
        &\qquad = (g \times i) \circ  \langle f \circ \pi_1 \circ \pi_1, \langle j, \pi_2 \rangle \circ \langle \pi_2 \circ \pi_1, \pi_2 \rangle \rangle\\
        &\qquad = (g \times i) \circ  \langle f \circ \pi_1 \circ \pi_1,  \langle j \circ \langle \pi_2 \circ \pi_1, \pi_2  \rangle, \pi_2 \rangle \rangle\\
        &\qquad = (g \times i) \circ  \langle \pi_1 \circ (f \times j) \circ \langle\ \pi_1 \circ \pi_1, \langle \pi_2 \circ \pi_1, \pi_2  \rangle\rangle, \langle \pi_2 \circ (f \times j) \circ \langle\ \pi_1 \circ \pi_1, \langle \pi_2 \circ \pi_1, \pi_2  \rangle\rangle, \pi_2 \rangle \rangle\\
        &\qquad = (g \times i) \circ \langle\ \pi_1 \circ \pi_1, \langle \pi_2 \circ \pi_1, \pi_2  \rangle \rangle \circ \langle\ (f \times j) \circ \langle\ \pi_1 \circ \pi_1, \langle \pi_2 \circ \pi_1, \pi_2  \rangle\rangle, \pi_2  \rangle\\
        &\qquad =(g*i)\circ (f*j),\\
        & f\circ I(l_X) = f \circ \langle \pi_2 \circ \pi_1, \pi_2 \rangle =  \pi_2 \circ \pi_1  \circ \langle\ (\id_1 \times f) \circ \langle\ \pi_1 \circ \pi_1, \langle \pi_2 \circ \pi_1, \pi_2  \rangle\rangle, \pi_2  \rangle = I(l_X)\circ(\id_1*f),\\
        &I(a_{X',Y',Z'})\circ ((f\times g)*h)
          = a_{X',Y',Z'} \circ (f \times g) \times h \circ \langle\ \pi_1 \circ \pi_1, \langle \pi_2 \circ \pi_1, \pi_2  \rangle\rangle\\
    	&\qquad  = a_{X',Y',Z'} \circ (f \times g) \times h \circ a_{X\times Y,Z,\NN}
       = f \times (g \times h) \circ (\mathrm{id}_X \times a_{Y,Z,\NN}) \circ a_{X,Y \times Z,\NN} \circ (a_{X,Y,Z} \times \mathrm{id}_\NN)\\
    	&\qquad = (f\ast (g\ast h)) \circ \langle a_{X,Y,Z}\circ \pi_1,\pi_2 \rangle = (f\ast (g\ast h)) \circ I( a_{X,Y,Z})\\
        & I(f\times g) = (f\times g) \circ \pi_1 = (f \times g \circ \pi_1) \circ  \langle \pi_1\circ\pi_1, \langle  \pi_2 \circ \pi_1, \pi_2 \rangle\rangle=  f \ast Ig
      \end{align*}

      Next, we show the coproducts in $\Set_C$.
      For coprojections $\iota_i \colon X_i \to \coprod_{i \in I}X_i$, we define $I(\iota_i) \colon C X_i \to \coprod_{i \in I}X_i$.
      For given $f_i \colon CX_i \to X$ ($i \in I$), we define $[f_i]^\ast_{i \in I} = [f_i]_{i \in I} \circ \mathrm{dist} \colon C(\coprod_{i \in I} X_i) \to X$ where $\mathrm{dist} \colon (\coprod_{i \in I} X_i) \times \NN \to \coprod_{i \in I} (X_i \times \NN)$ is the inverse of $[\langle \iota_i \circ \pi_1, \pi_2 \rangle]_{i \in I}$.
      We show that $I(\iota_i) $ and $[f_i]^\ast_{i \in I}$ satisfy the conditions of coprojections and cotuplings:
      \begin{align*}
        &[f_i]^\ast_{i \in I} \circ I(\iota_i) = [f_i]_{i \in I} \circ \mathrm{dist}  \circ (\iota_i \times \mathrm{id}_\NN) = [f_i]_{i \in I} \circ \iota_i =  f_i\\
        &[I(\iota_i)]^\ast_{i \in I} = [\iota_i \times \mathrm{id}_\NN]_{i \in I}  \circ \mathrm{dist}  = \mathrm{id}_{(\coprod_{i \in I} X_i) \times \NN } \\
        &g \circ [f_i]^\ast_{i \in I}
          = g \circ \langle  [f_i]_{i \in I} \circ \mathrm{dist} ,\pi_2 \rangle = [g \circ \langle f_i, \pi_2 \rangle ]_{i \in I} \circ \mathrm{dist}
          =  [g \circ f_i]^\ast_{i \in I}
      \end{align*}
      Hence, $(\Set,1,\times,\Set_C,I,\ast)$ is a Freyd category with products.

      (2) Since $\dot C$ is a lifting of $C$, we inherit the structure of the Freyd category with products (The mapping $J$ and $\actionTensor$ are the same as $I$ and $\ast$, as mappings of arrows).
      We need to check that the gradings of $J$ and $\actionTensor$ are well-defined, and the mappings $\sqsubseteq^{m \leq n} \colon \dot Cn \to \dot Cm$ behave well.

      First, we check that we have $I(f) \colon \dot C0(X,P) \dto (Y,P)$ for given $f \colon (X,P) \to (Y,Q)$.
      Let $(x,m)$ in the predicate part of $\dot C0(X,P)$.
      We then have $x \in P$ and $m \geq 0$.
      Since $f \colon (X,P) \to (Y,Q)$, $f(x) \in Q$.
      Hence $I(f)(x,m) = f \circ \pi_1(x,m) = f(x) \in Q$.
      This implies $I(f) \colon \dot C0(X,P) \dto (Y,P)$.

      Next, for all $f \colon (X,P) \dto (X',P')$ and $g \colon \dot Cm(Y,Q) \to (Y',Q')$, we have
      $f \ast g \colon \dot Cm ( (X,P) \dtimes (Y,Q) ) \dto (X',P') \dtimes (Y',Q')$.
      Let $((x,y),n)$ in the predicate part of $\dot Cm ( (X,P) \dtimes (Y,Q) )$, that is,
      $x \in P$, $y \in Q$ and $n \geq m$.
      Since  $f \colon (X,P) \dto (X',P')$ and $g \colon \dot Cm(Y,Q) \to (Y',Q')$, we have $f(x) \in P'$ and $g(y,n) \in Q'$.
      Hence $(f \ast g)((x,y),n) = (f(x),g(y,n)) \in P' \times Q'$.

      We can inherit the conditions of coprojections and cotuplings on $C$
      to ones for homogeneous coproducts, but we need to check the gradings of coprojections and cotuplings.
      It is easy to see that $I(\iota_i) \colon C X_i \to \coprod_{i \in I}X_i$
      forms $I(\iota_i) \colon \dot C 0 (X_i,P_i) \to \dot \coprod_{i \in I} (X_i,P_i)$.
      Next, we show that when $f_i \colon \dot Cm (X_i,P_i) \dto (X,P)$ ($i \in I$), we have
      $[f_i]^\ast_{i \in I} \colon \dot C m (\dot \coprod_{i \in I} (X_i,P_i)) \dto (X,P)$.
      Let $(x,n)$ in the predicate part of $\dot C m (\dot \coprod_{i \in I} (X_i,P_i))$.
      Then, $n \geq m$ and there is a unique $i \in I$ such that $x \in X$ and $x \in P_i$.
      Hence $[f_i]^\ast_{i \in I} (x,n) = f_i(x,n) \in P$.

      Finally, we check the law of homogeneous coproduct on $\sqsubseteq^{m \leq n} \colon \dot Cn \to \dot Cm$.
      We first have  $\mathrm{dist} \colon \dot Cm( \dot \coprod_{i \in I} (X_i,P_i)) \dto \dot \coprod_{i \in I} \dot Cm (X_i,P_i)$.
      To show this, we need to check
      \[
        [\langle \iota_i \circ \pi_1, \pi_2 \rangle]_{i \in I}^{-1}(\dot Cm( \dot \coprod_{i \in I} (X_i,P_i)) \subseteq (\dot \coprod_{i \in I} \dot Cm (X_i,P_i) ).
      \]
      Let $(x,n) \in \coprod_{i \in I} (X_i \times \NN)$ with $x \in X_i$.
      We have $[\langle \iota_i \circ \pi_1, \pi_2 \rangle]_{i \in I}(x,n) = (\iota_i(x),n)$.
      If $(\iota_i(x),n) \in \dot Cm( \dot \coprod_{i \in I} (X_i,P_i)$, $x \in P_i$ and $n \geq m$.
      Then $(x,n) \in \dot \coprod_{i \in I} Cm (X_i,P_i)$.
      Next, we show $\mathrm{dist} \circ  \sqsubseteq^{m \leq n} = \dot \coprod_{i \in I} \sqsubseteq^{m \leq n}  \circ \mathrm{dist} $.
      To show this, it suffices to check
      $\sqsubseteq^{m \leq n}  \circ [\langle \iota_i \circ \pi_1, \pi_2 \rangle] =[\langle \iota_i \circ \pi_1, \pi_2 \rangle] \circ  \dot \coprod_{i \in I} \sqsubseteq^{m \leq n}$, but
      it is straightforward.
      Then we can check the remaining law of homogeneous coproduct:
      \[
        [\uparrow_{m}^{n} f_i]^\ast_{i \in I}
        =
        [f_i \circ \sqsubseteq^{m \leq n} ]_{i \in I} \circ \mathrm{dist}
        =
        [f_i]_{i \in I} \circ \sqsubseteq^{m \leq n} \circ \mathrm{dist}
        =
        [f_i]_{i \in I} \circ \mathrm{dist} \circ \dot \coprod_{i \in I} \sqsubseteq^{m \leq n}
        =
        \uparrow_{m}^{n} [f_i]^\ast_{i \in I}
      \]
      Therefore, $(\Pred,\dot 1,\dtimes,\Pred_{\dot C},J,\actionTensor)$ forms a Freyd category with homogeneous coproducts.

      (3) The functor $q \colon \Pred_{\dot C} \to \Set_C$ is defined by $f \mapsto pf$.
      The ex falso quodlibet is obvious.
      The rest of proof is straightforward.
    \end{proof}
  \end{toappendix}

\end{example}

\begin{toappendix}
  \begin{example}[A Program Logic for Differential Privacy]
    Program logics for reasoning about differential privacy can be seen as instantiations of GHL with a relational model.
    For example, the continuous version of the logic apRHL by Sato~\cite{DBLP:journals/entcs/Sato16} is based on the strong $[0,\infty] \times [0,\infty]$-graded lifting $\mathcal{G}^{\top\top}$ of $\mathcal{G} \times \mathcal{G}$, which is a graded extension of the codensity lifting given in \cite[Section 4.3.2]{DBLP:journals/lmcs/KatsumataSU18}:
    \begin{align*}
      \mathcal{G}^{\top\top(\varepsilon,\delta)}(X,Y,R)
      & = (\mathcal{G}(X),\mathcal{G}(Y), \{ (\mu_1,\mu_2) | \forall A \in \Sigma_X, B \in \Sigma_Y.~R(A) \subseteq B {\implies} \mu_1(A) \leq e^\varepsilon \mu_2(B) + \delta \})\\
      &=  \bigcap \{\inverse{(f^\sharp \times g^\sharp)}(S^{\varepsilon + \varepsilon', \delta + \delta'}) | f \colon (X,Y,R) \dto S^{\varepsilon',\delta',}~ \text{measurable},\varepsilon',\delta' \in [0,\infty] \}\\
      & \qquad \text{ where } S^{\varepsilon,\delta} = ([0,1],[0,1], \{ (r,s) | r \leq e^\varepsilon s + \delta \})
    \end{align*}
    along the fibration $\brelfib{\Meas} \colon \BRel(\Meas) \to \Meas \times \Meas$ for assertion logic defined by the following change-of-base
    \[
      \quad
      \xymatrix{
        \BRel(\Meas) \ar[r] \ar[d]_{\brelfib\Meas} \pbcorner[ul] & \BRel \ar[d]^{\brelfib{}} \\
        \Meas \times \Meas \ar[r]^-{(|-|,|-|)} & \Set\times\Set
      }
    \]
    Thus, we have a \ghlstr{} corresponding the semantic model of continuous apRHL:
    \[
      \xymatrix{
	\BRel(\Meas) \ar[rr]^{(J,\actionTensor)} \ar[d]_{\brelfib{\Meas} } & & \BRel(\Meas)_{\mathcal{G}^{\top\top}} \ar[d]^{q} \\
	\Meas \times \Meas \ar[rr]^{(I,\ast)} & & {\Meas \times \Meas}_{\mathcal{G} \times \mathcal{G}}
      }
    \]

    We can introduce distributions as procedures, e.g. $\PExp ::= \mathsf{Gauss}(e_\mu,e_\sigma) \mid \mathsf{Laplace}(e_\mu,e_\lambda) \mid \cdots$, and
    give the following partial definition of $C_p$ corresponding the [rand] rule in~\cite{DBLP:journals/entcs/Sato16}:
    \[
      C_p(\phi,(\varepsilon,\delta), x \langle 1 \rangle =  y \langle 2 \rangle)
      = \{ p | \sem{p} \colon \sem \phi \dto_{(\varepsilon,\delta)} \sem{x \langle 1 \rangle =  y \langle 2 \rangle} \}
    \]
  \end{example}
\end{toappendix}

\begin{toappendix}
  \begin{example}[Span-apRHL for Relaxations of Differential Privacy]
    \label{ex:span-semantics}
    GHL allows the flexibility of separating the model of logical
    assertions from the one of judgments.  We demonstrate this
    flexibility by instantiating it to capture
    span-apRHL~\cite{DBLP:conf/lics/SatoBGHK19}, a variant of the
    continuous apRHL useful to formally reason about relaxations of
    differential privacy.  In this logic, assertions are modeled as
    binary relation, while judgments are modeled as morphism between
    spans, which are richer than just relations.  The semantic models
    of span-apRHL given in \cite{DBLP:conf/lics/SatoBGHK19} can be
    regarded the following \ghlstr{}.
    \[
      \xymatrix{ \BRel(\Meas)
        \ar[r]_-{K} \ar[d]^-{\brelfib{\Meas}} &
        \mathbf{Span}(\Meas)|_{\BRel(\Meas)} \ar [r]^-{(\hat J,\hat
          \actionTensor)} \ar[d]^-{\hat
          p|_{\mathbf{Span}(\Meas)|_{\BRel(\Meas)}}}
        &{\mathbf{Span}(\Meas) }_{(- )^{\sharp
            (-,-)\Delta}}|_{\BRel(\Meas)} \ar[d]|{ q = \hat q
          |_{{\mathbf{Span}(\Meas) }_{(- )^{\sharp
                (-,-)\Delta}}|_{\BRel(\Meas)}}}
        \\
        \Meas\times\Meas \ar[r] & \Meas\times\Meas \ar[r]_-{(I, \ast)}
        & \Meas\times\Meas_{\mathcal{G}\times\mathcal{G}} }
    \]
    Here, $\brelfib\Meas:\BRel(\Meas) \to \Meas \times \Meas$ is a
    fibration for assertion logic; $\mathbf{Span}(\Meas)$ is the
    category of measurable spans $(X \leftarrow R \rightarrow Y)$;
    $\hat p \colon \mathbf{Span}(\Meas) \to \Meas \times \Meas$ is a
    functor sending $(X \leftarrow R \rightarrow Y)$ to the underlying
    spaces $(X,Y)$; $(- )^{\sharp (-,-)\Delta}$ is an approximate
    span-lifting for an $M$-graded family
    $\Delta = \{\Delta^m\}_{m \in M}$ of reflexive composable
    divergences.  The approximate span-lifting is a strong
    $M \times \overline{\RR}$-graded lifting of the product monad
    $\mathcal{G}\times\mathcal{G}$ along $\hat p$, where $\mathcal G$
    is the subprobabilistic variant of the {\em Giry monad}
    \cite{10.1007/BFb0092872}.

    The functor $K \colon \BRel(\Meas) \to \mathbf{Span}(\Meas)$
    embeds binary relations $(X,Y,R)$ to measurable spans
    $(X \leftarrow R \rightarrow Y)$ where $R$ is regarded as the
    subspace of $X \times Y$.  This functor preserves products and
    coproducts strictly.  The codomain of $K$ can be restricted to the
    image $\mathbf{Span}(\Meas)|_{\BRel(\Meas)}$, the subcategory
    whose objects are restricted to binary relations embedded by $K$.

    The $M \times \overline{\RR}$-graded category
    ${\mathbf{Span}(\Meas) }_{(- )^{\sharp
        (-,-)\Delta}}|_{\BRel(\Meas)}$ is the restriction of the
    Kleisli graded category
    ${\mathbf{Span}(\Meas) }_{(- )^{\sharp (-,-)\Delta}}$ of
    approximate span-lifting whose objects are restricted to binary
    relations embedded by $K$.

    The Freyd category $(I,\ast)$ with product is induced from the
    structure of strong monad $\mathcal{G}\times\mathcal{G}$.  The
    $M \times \overline{R}$-graded Freyd category $(J,\actionTensor)$
    with homogeneous coproducts is given by composing the embedding
    $K$ and the $M \times \overline{R}$-graded
    $(\hat J, \hat \actionTensor)$ with homogeneous coproduct induced
    by the structure of $M \times \overline{\RR}$-graded monad
    $(- )^{\sharp (-,-)\Delta}$ on $\mathbf{Span}(\Meas)$.  Namely, we
    define $J = \hat J \circ K$ and
    $ f \actionTensor g = (K f) \hat \actionTensor g$. It is
    straightforward to check the conditions of the \ghlstr{}.

    By instantiating GHL with this \ghlstr{}, we obtain a part
    of span-apRHL for a divergence $\Delta$.
    We can introduce distributions as procedures, e.g.
    $\PExp ::= \mathsf{Gauss}(e_\mu,e_\sigma) \mid
    \mathsf{Laplace}(e_\mu,e_\lambda) \mid \cdots$, and give the
    following partial definition of $C_p$ as follows ($f$ is called
    witness function):
    \[
      C_p(\phi,(m,\delta),\psi) = \{ p | \exists f \text{ s.t. }
      (\sem{p}, f) \in ({\mathbf{Span}(\Meas) }_{(- )^{\sharp
          (-,-)\Delta}}|_{\BRel(\Meas)}) (\sem
      \phi,\sem{\psi})(m,\delta) \}.
    \]
  \end{example}
\end{toappendix}



\section{Related Work}


Several works have studied abstract semantics of Hoare Logic.
Martin et al.~\cite{10.1007/11874683_33} give a categorical framework
based on traced symmetric monoidal closed categories. They also show
that their framework can handle extensions such as separation
logic. However their framework does not directly model effects and it
cannot accommodate grading as is.
Goncharov and Shr\"{o}der~\cite{DBLP:conf/lics/GoncharovS13} study a Hoare Logic to reason in a
generic way about programs with side effects. Their logic and
underlying semantics is based on an order-enriched monad and they show
a relative completeness result. Similarly, Hasuo~\cite{HASUO20152}
studies an abstract weakest precondition semantics based on
order-enriched monad. A similar categorical model has also been used
by Jacobs~\cite{Jacobs15} to study the Dijkstra monad and the Hoare monad.
In the logic by Goncharov and Shr\"{o}der~\cite{DBLP:conf/lics/GoncharovS13}  effects are encapsulated in monadic types, while the weakest precondition semantics by Hasuo~\cite{HASUO20152} and the semantics by Jacobs~\cite{Jacobs15} have no underlying calculus. Moreover, none of them is graded.
Maillard et al.~\cite{DBLP:journals/pacmpl/MaillardAAMHRT19} study a semantics
framework based on the Dijkstra monad for program verification. Their
framework enables reasoning about different side effects and it separates specification from computation. Their Dijkstra monad has a flavor of grading but the  structure they use is more complex than a \pmonoid{}.
Maillard et al.~\cite{DBLP:journals/pacmpl/MaillardHRM20} focus on relational program logics for effectful computations. They show how these logics can be derived in a relational dependent  type theory, but their logics are not graded.

As we discussed in the introduction, several works have used
\emph{grading} structures similar to the one we study in this paper,
although often with different names.
Katsumata studied monads graded by a
\pmonoid{} as a semantic model for effects system~\cite{DBLP:conf/popl/Katsumata14}. A similar
approach has also been studied elsewhere~\cite{DBLP:conf/birthday/MycroftOP16,DBLP:journals/corr/OrchardPM14}.
Formal categorical
properties of graded monads are pursued by Fujii et
al.~\cite{fujii2016towards}.
Zhang defines a notion of \emph{graded category}, but it differs
to ours, and is instead closer to a definition of a graded monad~\cite{zhang1996twisted}.
As we showed in Section~\ref{sec:graded-category}, graded
categories can be constructed both by monads and
comonads graded by a \pmonoid{}, and it
can also capture graded structures that do not arise from either of
them.  Milius et al.~\cite{milius2015generic} also studied monads graded by a
\pmonoid{} in the context of trace semantics where the grading
represents a notion of depth corresponding to trace length. Exploring
whether there is a generalization of our work to traces is an
interesting future work.  

Various works study comonads graded with a semiring structure as a
semantic model of contextual computations captured by means of type
systems~\cite{DBLP:conf/esop/BrunelGMZ14,DBLP:conf/esop/GhicaS14,DBLP:conf/icfp/PetricekOM14}. In
contrast, our graded comonads are graded by a pomonoid. The
additive  structure of the semiring in those works is needed to merge
the gradings of different instances of the same variable. This is
natural for the $\lambda$-calculus where the context represent
multiple inputs, but there is only one conclusion (output). Here
instead, we focus on an imperative language. So, we have only one
input, the starting memory, and one output, the updated
memory. Therefore, it is natural to have just the multiplicative
structure of the semiring as a \pmonoid{}. The categorical
axiomatics of semiring-graded comonads are studied by Katsumata from the double-category theoretic perspective~\cite{DBLP:conf/fossacs/Katsumata18}.

Apart from graded monads, several generalizations of monads has been
proposed. Atkey introduces {\em parameterized monads} and
corresponding {\em parameterized Freyd categories}~\cite{Atkey09}, demonstrating
that parameterized monads naturally model effectful computations
with preconditions and postconditions.  Tate
defines {\em productors} with
composability of effectful computations controlled by
a relational `effector' structure~\cite{Tate13}.
Orchard et al. define
\emph{category-graded monads}, generalizing graded and parameterised
monads via lax functors and sketch a model of Union Bound Logic in
this setting (but predicates and graded-predicate interaction are not
modelled, as they are here)~\cite{DBLP:journals/corr/abs-2001-10274}.
Interesting future
work is to combine these
general models of computational effects with Hoare logic.

\section{Conclusion}
We have presented a Graded Hoare Logic as a parameterisable framework
for reasoning about programs and their side effects, and studied its
categorical semantics. The key guiding idea is that grading can be seen as a refinement of effectful computations. This has brought us naturally to graded categories but to fully internalize this refinement idea we further introduced the new notion of graded Freyd categories. To show the generality of our framework we have shown how different examples are naturally captured by it.

We conclude with some reflections on possible future work.


\paragraph{Future work}
Carbonneaux et al. present a quantitative verification approach for amortized cost analysis via a Hoare logic augmented with multivariate
quantities associated to program
variables~\cite{DBLP:conf/pldi/Carbonneaux0S15}.  Judgments $\vdash
\{\Gamma; Q\} S \{\Gamma'; Q'\}$ have pre- and post-conditions $\Gamma$
and $\Gamma'$ and potential functions $Q$ and $Q'$.
%
Their approach can be mapped to GHL
with a grading monoid representing how the potential functions change.
However, the multivariate
nature of the analysis requires a more fine-grained connection
between the structure of the memory and the structure of grades,
which have not been developed yet. We leave this for future work.

GHL allows us to capture the dependencies between assertions and
grading that graded program logics usually use. However, some graded
systems (e.g.~\cite{BartheGAHRS15}) use more explicit dependencies by
allowing grade variables---which are also used for grading
polymorphism. We plan to explore this direction in future work.

The setting of graded categories in this work subsumes both graded
monads and graded comonads and allows flexibility in the
model. However, most of our examples in Section~\ref{sec:instances}
are related to graded monads.  The literature contains various graded
comonad models of data-flow properties: like liveness
analysis~\cite{DBLP:conf/icfp/PetricekOM14},
sensitivities~\cite{DBLP:conf/esop/BrunelGMZ14}, timing and
scheduling~\cite{DBLP:conf/esop/GhicaS14}, and information-flow
control~\cite{DBLP:journals/pacmpl/OrchardLE19}.  Future work is to
investigate how these structures could be adopted to GHL for reasoning
about 
programs.



\paragraph{Acknowledgements}
Katsumata and Sato carried out this research supported by ERATO
HASUO Metamathematics for Systems Design Project (No. JPMJER1603), JST.
Orchard is supported by EPSRC grant EP/T013516/1.
Gaboardi is supported by the National Science Foundation under Grant No. 2040222.

\bibliographystyle{splncs04}
\bibliography{references,mfps20}


\vfill

{\small\medskip\noindent{\bf Open Access} This chapter is licensed under the terms of the Creative Commons\break Attribution 4.0 International License (\url{http://creativecommons.org/licenses/by/4.0/}), which permits use, sharing, adaptation, distribution and reproduction in any medium or format, as long as you give appropriate credit to the original author(s) and the source, provide a link to the Creative Commons license and indicate if changes were made.}

{\small \spaceskip .28em plus .1em minus .1em The images or other third party material in this chapter are included in the chapter's Creative Commons license, unless indicated otherwise in a credit line to the material.~If material is not included in the chapter's Creative Commons license and your intended\break use is not permitted by statutory regulation or exceeds the permitted use, you will need to obtain permission directly from the copyright holder.}

\medskip\noindent\includegraphics{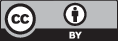}

\ifextended
\newpage
\appendix

\fi

\end{document}